\theoremstyle{plain}
\newtheorem{thm}{Theorem}[section]
\newtheorem{cor}[thm]{Corollary}
\newtheorem{lem}[thm]{Lemma}
\theoremstyle{remark}
\theoremstyle{definition}
\newtheorem{defn}[thm]{Definition}
\newtheorem{example}{Example}
\newenvironment{exm}
{\pushQED{\qed}\begin{example}}
{\popQED\end{example}}
\newcommand{\continuation}{??}
\newenvironment{excont}[2]
 {\pushQED{\qed}\renewcommand{\continuation}{\ref{#1}} \continueexample[{#2} continued]}
 {\popQED\endcontinueexample}
\newcommand{\ud}{\textnormal{d}}
\newcommand{\prob}[1]{P \left\{ #1 \right\}}
\newcommand{\ex}[1]{\textnormal{E} \left[ #1 \right]}
\newcommand{\cw}{\leadsto}
\newcommand{\cp}{\stackrel{p}{\rightarrow}}
\newcommand{\R}{\mathbb{R}}
\DeclareMathOperator*{\argmax}{argmax}
\DeclareMathOperator*{\argmin}{argmin}
\DeclareMathOperator*{\sgn}{sgn}
\newcommand{\gr}{\textnormal{gr}}
\newcommand{\calA}{\mathcal{A}}
\newcommand{\calB}{\mathcal{B}}
\newcommand{\calC}{\mathcal{C}}
\newcommand{\calG}{\mathcal{G}}
\newcommand{\calP}{\mathcal{P}}
\newcommand{\calT}{\mathcal{T}}
\newcommand{\bbD}{\mathbb{D}}
\newcommand{\bbE}{\mathbb{E}}
\newcommand{\bbF}{\mathbb{F}}
\newcommand{\bbG}{\mathbb{G}}
\newcommand{\bbL}{\mathbb{L}}
\newcommand{\bbU}{\mathbb{U}}
\title{Uniform inference for value functions}
\author{Sergio Firpo\footnote{Insper, Sao Paulo, Brazil.  \href{mailto:firpo@insper.edu.br}\texttt{firpo@insper.edu.br}} \and Antonio F.  Galvao\footnote{Department of Economics, Michigan State University, East Lansing, USA.  \href{mailto:agalvao@msu.edu}{\texttt{agalvao@msu.edu}}} \and Thomas Parker\footnote{Corresponding author.  Department of Economics, University of Waterloo, Waterloo, Canada.  \href{mailto:tmparker@uwaterloo.ca}{\texttt{tmparker@uwaterloo.ca}}}}
\begin{document}
\maketitle
\thispagestyle{empty}

\begin{abstract}
  \begin{spacing}{1}
    We propose a method to conduct uniform inference for the (optimal) value function, that is, the function that results from optimizing an objective function marginally over one of its arguments. Marginal optimization is not Hadamard differentiable (that is, compactly differentiable) as a map between the spaces of objective and value functions, which is problematic because standard inference methods for nonlinear maps usually rely on Hadamard differentiability.  However, we show that the map from objective function to an $L_p$ functional of a value function, for $1 \leq p \leq \infty$, are Hadamard directionally differentiable.  As a result, we establish consistency and weak convergence of nonparametric plug-in estimates of Cram\'er-von Mises and Kolmogorov-Smirnov test statistics applied to value functions.  For practical inference, we develop detailed resampling techniques that combine a bootstrap procedure with estimates of the directional derivatives.  In addition, we establish local size control of tests which use the resampling procedure.  Monte Carlo simulations assess the finite-sample properties of the proposed methods and show accurate empirical size and nontrivial power of the procedures. Finally, we apply our methods to the evaluation of a job training program using bounds for the distribution function of treatment effects.
  \end{spacing}
\end{abstract}

\textbf{Keywords:} uniform inference, value function, bootstrap, delta method,
directional differentiability

\textbf{JEL codes:} C12, C14, C21

\newpage
\section{Introduction}
\setcounter{page}{1}

The (optimal) value function, that is, a function defined by optimizing an objective function over one of multiple arguments, is used widely in economics, finance, statistics and other fields.  The statistical behavior of estimated value functions appears, for instance, in the study of distribution and quantile functions of treatment effects under partial identification (see, e.g., \citet{FirpoRidder08, FirpoRidder19} and \citet{FanPark10,FanPark12}).  More recently, investigating counterfactual sensitivity, \citet{ChristensenConnault19} construct interval-identified counterfactual outcomes as the distribution of unobservables varies over a nonparametric neighborhood of an assumed model specification.  Statistical inference for value functions can be analyzed at a point as in \citet{Roemisch06} or \citet{CarcamoCuevasRodriguez19} and inference can be carried out using the techniques of \citet{HongLi18} or \citet{FangSantos19}.

This paper extends these pointwise inference methods to uniform inference for value functions that are estimated with nonparametric plug-in estimators.  Uniform inference has an important role for the analysis of statistical models.\footnote{For example, \cite{AngristChernozhukovVal06} discuss the importance of uniform inference \textit{vis-\`a-vis} pointwise inference in a quantile regression framework.  Uniform testing methods allow for multiple comparisons across the entire distribution without compromising confidence levels, and uniform confidence bands differ from corresponding pointwise confidence intervals because they guarantee coverage over the family of confidence intervals simultaneously.} However, uniform inference for the value function can be quite challenging.  In order to use standard statistical inference procedures, one must show that the map from objective function to value function is Hadamard (i.e., compactly) differentiable as a map between two functional spaces, which may require restrictive regularity conditions.  As a result, the delta method, as the conventional tool used to analyze the distribution of nonlinear transformations of sample data, may not apply to marginal optimization as a map between function spaces due to a lack of differentiability.  We propose a solution to this problem that allows for the use of the delta method to conduct uniform inference for value functions.

Because in general the map of objective function to value function is not differentiable, our solution is to directly analyze statistics that are applied to the value function.  This is in place of a more conventional analysis that would first establish that the value function is well-behaved and as a second step use the continuous mapping theorem to determine the distribution of test statistics.  In settings involving a chain of maps, this can be seen as choosing how to define the ``links'' in the chain to which the chain rule applies.  The results are presented for $L_p$ functionals for $1 \leq p \leq \infty$, that are applied to a value function, which should cover many cases of interest.  In particular, this family of functionals includes Kolmogorov-Smirnov and Cram\'er-von Mises tests.

By considering $L_p$ functionals of the value function, we may bypass the most serious impediment to uniform inference.  However, these $L_p$ functionals are only Hadamard directionally differentiable.  A directional generalization of Hadamard (or compact) differentiability was developed for statistical applications in \citet{Shapiro90} and \citet{Duembgen93}.  There is a more recent growing literature of applications using Hadamard directional differentiability.  See, for example, \citet{SommerfeldMunk17}, \citet{ChetverikovSantosShaikh18}, \citet{ChoWhite18}, \citet{HongLi18}, \citet{FangSantos19}, \citet{ChristensenConnault19}, \cite{MastenPoirier20}, and \citet{CattaneoJanssonNagasawa20}.  We separately analyze $L_p$ functionals for $1 \leq p < \infty$ and the supremum norm, as well as one-sided variants used for tests that have a sense of direction.  As an intermediate step towards showing the differentiability of supremum-norm statistics, we show directional differentiability of a minimax map applied to functions that are not necessarily convex-concave.

Directional differentiability of $L_p$ functionals provides the minimal amount of regularity needed for feasible statistical inference.  We use directional differentiability to find the asymptotic distributions of test statistics estimated from sample data.  The distributions generally depend on features of the objective function, and we show how to combine estimation with resampling to estimate the limiting distributions, as was proposed by \citet{FangSantos19}.  This technique can be tailored to impose a null hypothesis and is simple to implement in practice. We provide three examples throughout the paper to illustrate the proposed methods --- dependency bounds, stochastic dominance, and cost and profit functions.  Moreover, we establish local size control of the resampling procedure in general and uniform size control for some functionals.

As a practical illustration of the developed framework, we consider bounds for the distribution function of a treatment effect. We use this model in both the Monte Carlo simulations and empirical application. Treatment effects models have provided a valuable analytic framework for the evaluation of causal effects in program evaluation studies. When the treatment effect is heterogeneous, its distribution function and related features, for example its quantile function, are often of interest. Nevertheless, it is well known (e.g.  \citet{HeckmanSmithClements97, AbbringHeckman07}) that features of the distribution of treatment effects beyond the average are not point identified unless one imposes significant theoretical restrictions on the joint distribution of potential outcomes. Therefore it has became common to use partial identification of the distribution of such features, and to derive bounds on their distribution without making any unwarranted assumptions about the copula that connects the marginal outcome distributions.\footnote{Bounds for the CDF and quantile functions at one point in the support of the treatment effects distribution were developed in \citet{Makarov82, Rueschendorf82, FrankNelsenSchweizer87, WilliamsonDowns90}.  These bounds are minimal in the sense that they rely only on the marginal distribution functions of the observed outcomes, not the joint distribution of (potential) outcomes.}

Monte Carlo simulations (reported in a supplementary appendix) assess the finite-sample properties of the proposed methods. The simulations suggest that Kolmogorov-Smirnov statistics used to construct uniform confidence bands have accurate empirical size and power against local alternatives.  A Cram\'er-von Mises type statistic is used to test a stochastic dominance hypothesis using bound functions, similar to the breakdown frontier tests of \citet{MastenPoirier20}.  This test has accurate size at the breakdown frontier and power against alternatives that violate the dominance relationship.  All results are improved when the sample size increases and are powerful using a modest number of bootstrap repetitions.

We illustrate the methods empirically using job training program data set first analyzed by \citet{LaLonde86} and subsequently by many others, including \citet{HeckmanHotz89}, \citet{DehejiaWahba99}, \citet{SmithTodd01, SmithTodd05}, \citet{Imbens03}, and \citet{Firpo07}, without making any assumptions on dependence between potential outcomes in the experiment.  This experimental data set has information from the National Supported Work Program used by \citet{DehejiaWahba99}. We document strong heterogeneity of the job training treatment across the distribution of earnings. The uniform confidence bands for the treatment effect distribution function are imprecisely estimated at some parts of the earnings distribution. These large uniform confidence bands may in part be attributed to the large number of zeros contained in the data set, but are also inherent to the fact that the distribution function is everywhere only partially identified.

The remainder of the paper is organized as follows. Section \ref{sec:model} defines the statistical model of interest for the value function. Section \ref{sec:differentiability} establishes directional differentiability for the functionals of interest. Inference procedures are established in Section \ref{sec:inference}.  An empirical application to job training is discussed in Section \ref{sec:application}.  Finally, Section \ref{sec:conclusion} concludes.  Monte Carlo simulation descriptions and results and all proofs are relegated to the supplementary appendix.

\subsection*{Notation} For any set $T \subseteq \R^d$ let $\ell^\infty(T)$ denote the space of bounded functions $f: T \rightarrow \R$ and let $\calC(T)$ denote the space of continuous functions $f: T \rightarrow \R$, both equipped with the uniform norm $\| f \|_\infty = \sup_{t \in T} |f(t)|$.  Given a sequence $\{f_n\}_n \subset \ell^\infty(T)$ and limiting random element $f$ we write $f_n \cw f$ to denote weak convergence in $(\ell^\infty(T), \| \cdot \|_\infty)$ in the sense of Hoffmann-J\o rgensen \citep{vanderVaartWellner96}.  Let $[x]_+ = \max\{x, 0\}$ and $[x]_- = \min\{x, 0\}$.  A set-valued map (or correspondence) $S$ that maps elements of $X$ to the collection of subsets of $Y$ is denoted $S: X \rightrightarrows Y$.  For set-valued map $S: X \rightrightarrows Y$, let $\gr S$ denote the graph of $S$ as a set in $X \times
Y$.

\section{The model}\label{sec:model}

Consider two sets $U \subseteq \R^{d_U}$ and $X \subseteq \R^{d_X}$, a set-valued map $A: X \rightrightarrows U$ that serves as a choice set, and an objective function $f \in \ell^\infty(\gr A)$.  The objective function and the value function are linked by a marginal optimization step that maps one functional space to another.  Let $\psi: \ell^\infty(\gr A) \rightarrow \ell^\infty(X)$ map the objective function to the value function obtained by marginally optimizing the objective $f$ with respect to $u \in A(x)$ for each $x \in X$.  Without loss of generality, we consider only marginal maximization with respect to $u$ for each $x$:
\begin{equation}
  \psi(f)(x) = \sup_{u \in A(x)} f(u, x). \label{m_def}
\end{equation}
The value function $\psi(f)$ is the object of interest and we would like to conduct uniform inference using a plug-in estimator $\psi(f_n)$.  First we introduce examples of the value function model represented in equation \eqref{m_def}.

\begin{exm}[Dependency bounds] \label{ex:bounds}
  Dependency bounds describe bounds on the distribution function of the sum of two random variables $X$ and $Y$, $F_{X+Y}$, using only the marginal distribution functions $F_X$ and $F_Y$ and no information from the joint distribution $F_{XY}$.  \citet{FrankNelsenSchweizer87} showed that bounds derived for $F_{X+Y}$ by \citet{Makarov82} are related to copula bounds and can be extended to other binary operations.  Upper and lower dependency bounds for the CDFs $F_{X + Y}$, $F_{X - Y}$, $F_{X \cdot Y}$ and $F_{X / Y}$ and their associated quantile functions are provided in \citet{WilliamsonDowns90}.

  \citet{WilliamsonDowns90} show that for any $x$, $L_{X+Y}(x) \leq F_{X+Y}(x) \leq U_{X+Y}(x)$, where
  \begin{align}
    L_{X+Y}(x) &= \left[ \sup_{u \in \R} (F_X(u) + F_Y(x - u) - 1) \right]_+ \label{fbound_lo} \\
    U_{X+Y}(x) &= 1 - \left[ \inf_{u \in \R} (F_X(u) + F_Y(x - u)) \right]_-. \label{fbound_hi}
  \end{align}
  Similarly, letting $f^{-1}$ denote the generalized inverse of an increasing function, bounds for the quantile function $F_{X+Y}^{-1}(\tau)$ may be considered for $0 < \tau < 1$.  The functions above may be inverted to find that for each quantile level $\tau$, $U_{X+Y}^{-1}(\tau) \leq F_{X+Y}^{-1}(\tau) \leq L_{X+Y}^{-1}(\tau)$, and the quantile function bounds are
  \begin{align}
    U_{X+Y}^{-1}(\tau) &= \sup_{u \in (0, \tau)} \left\{ F_X^{-1}(u) + F_Y^{-1}(\tau - u) \right\} \label{qbound_lo}\\
    L_{X+Y}^{-1}(\tau) &= \inf_{u \in (\tau, 1)} \left\{ F_X^{-1}(u) + F_Y^{-1}(\tau - u + 1) \right\}. \label{qbound_hi}
  \end{align}
  The functions $U_{X+Y}^{-1}$ and $L_{X+Y}^{-1}$ are value functions where marginal optimization takes place over a set-valued map that varies with $\tau$.

  All these related functions may be written as maps that depend on a value function in some part of a chain of maps.  For example, take $L_{X+Y}$.  For a pair of functions $f = (f_x, f_y) \in (\ell^\infty(\R))^2$, let $\Pi^+: (\ell^\infty(\R))^2 \rightarrow \ell^\infty(\R^2)$
  \begin{equation} \label{pi_def}
    \Pi^+(f)(s, t) = f_x(s) + f_y(t - s) - 1
  \end{equation}
  and let $\psi$ be as in~\eqref{m_def} with $A(x) = \R$ for all $x$.  Then $L_{X+Y} = \psi(\Pi^+(F) \vee 0)$.  Similarly, consider $U_{X+Y}^{-1}$.  For a pair of CDFs $f = (f_x, f_y)$ where $f_x, f_y: \R \rightarrow [0, 1]$, define $\tilde{\Pi}^+: (\ell^\infty(\R))^2 \rightarrow \ell^\infty([0, 1]^2)$ by\footnote{We have simplified the problem by excluding $\{0, 1\}$ from the definition of $\tilde{\psi}$ for ease of exposition~--- see p.115-117 of \citet{WilliamsonDowns90} for more complex cases which include these points.}
  \begin{equation} \label{pi_til_def}
    \tilde{\Pi}^+(f)(s, t) = f_x^{-1}(s) + f_y^{-1}(t - s)
  \end{equation}
  and let $\psi$ use $A(\tau) = (0, \tau)$ for each $\tau \in (0, 1)$.  Then $U_{X+Y}^{-1} = \psi(\tilde{\Pi}^+(F))$.  We will focus on constructing uniform confidence bands for $L_{X+Y}$ and $U_{X+Y}^{-1}$, since the calculations for the other functions are analogous.
\end{exm}

\begin{exm}[Stochastic dominance] \label{ex:stoc_dominance}
  Inference for features of $F_{X-Y}$ at a given $x$, when $X - Y$ may be considered a treatment effect distribution, has been considered in \citet{FanGuerreZhu17} and the references cited therein.  Consider testing for first order stochastic dominance without assuming point identification.  Suppose that $(X_A, X_B, X_0)$ have marginal CDFs $(G_A, G_B, G_0) \in (\ell^\infty(\R))^3$, and let $F_{\Delta_A}$ and $F_{\Delta_B}$ be the distribution functions of the treatment effects $\Delta_A = X_A - X_0$ and $\Delta_B = X_B - X_0$.  Without point identification we cannot test the hypothesis $H_0: F_{\Delta_A} \succeq_{FOSD} F_{\Delta_B}$, which is equivalent to the condition that $F_{\Delta_A}(x) - F_{\Delta_B}(x) \leq 0$ for all $x$.  However, the distributions of $\Delta_A$ and $\Delta_B$ may be bounded.  For example, using references from the previous example it can be shown that for each $x$,
  \begin{align*}
    F_{\Delta_A}(x) \geq L_A(x) &= \left[ \sup_{u \in \R} (G_A(u) - G_0(u - x)) \right]_+, \\
    F_{\Delta_B}(x) \leq U_B(x) &= 1 + \left[ \inf_{u \in \R} (G_B(u) - G_0(u - x)) \right]_-.
  \end{align*}
  Under the null hypothesis, $L_A \leq F_{\Delta_A} \leq F_{\Delta_B} \leq U_B$, so that $L_A(x) - U_B(x) \leq F_{\Delta_A} - F_{\Delta_B} \leq 0$ for all $x$, and the null hypothesis may be rejected whenever there exists an $x'$ such that $L_A(x') > U_B(x')$.  This holds regardless of the correlations between treatment outcomes. Let $\Pi^-: (\ell^\infty(\R))^2 \rightarrow \ell^\infty(\R^2)$ be, for $f = (f_0, f_1)$
  \begin{equation*}
    \Pi^-(f)(u, x) = f_1(u) - f_0(u - x)
  \end{equation*}
  and let $\psi$ be defined by~\eqref{m_def} with $A(x) = \R$ for all $x$.  The function used to indicate a violation of dominance maps three distribution functions into a function in $\ell^\infty(\R)$: for each $x \in \R$,
  \begin{align}
    L_A(x) - U_B(x) &= \sup_{u \in \R} (G_A(u) - G_0(u - x)) - 1 - \inf_{u \in \R} (G_B(u) - G_0(u - x)) \\
    {} &= \psi(\Pi^-(G_0, G_A))(x) - 1 + \psi(-\Pi^-(G_0, G_B))(x).
  \end{align}
  We would like to test the hypothesis that $F_A$ dominates $F_B$, without knowledge of the dependence between $X_0$, $X_A$ and $X_B$, by looking for $x$ where $L_A(x) - U_B(x) > 0$, a case in which a one-sided functional $f \mapsto \|[f]_+\|_p$ is preferred for testing.
\end{exm}

\begin{exm}[Cost and profit functions] \label{ex:legendre}
  Suppose a firm produces outputs $x \in \R^n_+$ for $n \geq 1$.  Given cost function $c: \R^n \rightarrow \R$, the firm's profit at given product prices $p \in \R^n$ is $\pi(p) = \sup_{x \in \R^n_+} (\langle p, x \rangle - c(x))$.  Suppose that we have an accurate model of costs and would like to conduct inference for the profit curve, for example by putting a confidence band around an estimated profit curve.

  The Legendre-Fenchel transform can be used to find the dual conjugate function of a convex function $f$ or the convex hull of a non-convex function $f$ \citep[Chapter 11]{RockafellarWets98}.  For convex $f: \R^n \rightarrow \R$, the conjugate $f^\ast: \R^n \rightarrow \R$ is \(f^\ast(x) = \sup_{u \in \R^n} \left( \langle u, x \rangle - f(u) \right)\), that is, the smallest function such that $f^\ast(x) + f(u) \geq \langle u, x \rangle$.  We can recast this as an optimization map by defining the Legendre transform as a map $\mathcal{L}: \ell^\infty(\R^n) \rightarrow \ell^\infty(\R^n)$ defined by
  \begin{equation*}
    \mathcal{L}(f)(x) = \sup_{u \in \R^n} \left( \langle u, x \rangle - f(u) \right).
  \end{equation*}
  Then the profit function can be seen as the conjugate dual function of the cost function: $\pi(p) = \mathcal{L}(c)(p) = \sup_{x \in \R^n} \left( \langle p, x \rangle - f(x) \right)$.  As discussed in the text, it is possible (i.e., when it is not possible to directly calculate $\pi$ from the cost function) that $\pi$ is not Hadamard differentiable.  Instead, consider the supremum of the absolute value of transformed functions $\check{\lambda}(f) = \|\mathcal{L}(f)\|_\infty$.  Since $\pi = \mathcal{L}(c)$, we can find a confidence band for $\pi$ using $\check{\lambda}(c)$ using a sample analog $\check{\lambda}(c_n)$.
\end{exm}

In Example~\ref{ex:bounds}, the bounds on the quantile function are an example where $A(x)$ varies with $x$.  It would be notationally simpler to consider optimization problems over the rectangle $\mathcal{X} \times \mathcal{U}$.  However, given an $x$-varying set of constraints we would need to use ``indicator functions'' as they are used in the optimization literature, e.g. we would change the optimization problem to $\tilde{f}(u, x) = f(u, x) + \delta_A(u, x)$, where $\delta_A(u, x) = 0$ for $u \in A(x)$ and (for maximization) $\delta_A(u, x) = -\infty$ for $u \notin A(x)$.  This takes us out of the realm of bounded functions and towards the consideration of epi- or hypographs, which may be more complicated.  For example, see \citet{BuecherSegersVolgushev14}.  The literature on weak convergence in the space of bounded functions is voluminous and more familiar to researchers in economics.

To analyze the asymptotic properties of the above examples with plug-in estimates, one would typically rely on the delta method because $\psi$ is a nonlinear map from objective function to value function.  In the next section we discuss the difficulties encountered with the delta method for value functions and a solution for the purposes of uniform inference.

\section{Differentiability properties} \label{sec:differentiability}

We use the delta method to establish uniform inference methods for value functions.  This depends on the notion of Hadamard (or compact) differentiability to linearize the map near the population distribution (see, for example, \citet[Section 3.9.3]{vanderVaartWellner96}), and has the advantage of dividing the analysis into a deterministic part and a statistical part.  We first consider Hadamard derivatives without regard to sample data, before explicitly considering the behavior of sample statistics in the next section.

\subsection{Directional differentiability}

The delta method between two metric spaces usually relies on (full) Hadamard differentiability \citep[Section 3.9]{vanderVaartWellner96}.  \citet{Shapiro90}, \citet{Duembgen93} and \citet{FangSantos19} discuss Hadamard directional differentiability and show that this weaker notion also allows for the application of the delta method.\footnote{The derivative $\phi_f'$ in the definition is also called a \emph{semiderivative} \citep[Definition 7.20]{RockafellarWets98}.}

\begin{defn}[Hadamard directional differentiability] \label{def:diff}
  Let $\bbD$ and $\bbE$ be Banach spaces and consider a map $\phi: \bbD_\phi \subseteq \bbD \rightarrow \bbE$.  $\phi$ is \emph{Hadamard directionally differentiable} at $f \in \bbD_\phi$ tangentially to a set $\bbD_0 \subseteq \bbD$ if there is a continuous map $\phi'_f: \bbD_0 \rightarrow \bbE$ such that
  \begin{equation*}
    \lim_{n \rightarrow \infty} \left\| \frac{\phi(f + t_n h_n) - \phi(f)}{t_n} - \phi_f'(h) \right\|_\bbE = 0
  \end{equation*}
  for all sequences $\{h_n\} \subset \bbD$ and $\{t_n\} \subset \R_+$ such that $h_n \rightarrow h \in \bbD_0$ and $t_n \rightarrow 0^+$ as $n \rightarrow \infty$ and $f + t_n h_n \in \bbD_\phi$ for all $n$.
\end{defn}

In case the derivative map $\phi'_f$ is linear and $t_n$ may approach $0$ from both negative and positive sides, the map is fully Hadamard differentiable.  The delta method and chain rule can be applied to maps that are either Hadamard differentiable or Hadamard directionally differentiable.

To discuss the differentiability of $\psi$ in equation \eqref{m_def}, for any $\epsilon \geq 0$ let $U_f: X \rightrightarrows U$ define the set-valued map of $\epsilon$-maximizers of $f(\cdot, x)$ in $u$ for each $x \in X$:
\begin{equation} \label{marginal_epsmax_def}
  U_f(x, \epsilon) = \left\{ u \in A(x): f(u, x) \geq \psi(f)(x) - \epsilon \right\}.
\end{equation}
Results on the directional derivatives of the optimization map $f \mapsto \sup_X f$ date to \citet{Danskin67}.  For example, \citet[Theorem 2.1]{CarcamoCuevasRodriguez19} show that, in our notation, for fixed $x \in X$, $\psi$ is directionally differentiable in $\ell^\infty(A(x) \times \{x\})$, and for directions $h(\cdot, x)$
\begin{equation} \label{psiprime_def}
  \psi'_f(h)(x) = \lim_{\epsilon \rightarrow 0^+} \sup_{u \in U_f(x, \epsilon)} h(u, x).
\end{equation}
Assuming the stronger conditions that $A$ is continuous and compact-valued and that $f$ is continuous on the graph of $A$ ($\gr A$), the maximum theorem implies that $U_f$ is non-empty, compact-valued and upper hemicontinuous \citep[Theorem 17.31]{AliprantisBorder06}, and \citet[Corollary 2.2]{CarcamoCuevasRodriguez19} show that tangentially to $\calC(U \times \{x\})$, the derivative of $\psi(f)(x)$ simplifies to $\psi'_f(h)(x) = \sup_{u \in U_f(x, 0)} h(u, x)$.  Further, when (for fixed $x$) $U_f(x, 0)$ is a singleton set, $\psi(f)(x)$ is fully differentiable, not just directionally so.

The pointwise directional differentiability of $\psi(f)(x)$ at each $x$ might lead one to suspect that $\psi$ is differentiable more generally as a map from $\ell^\infty(\gr A)$ to $\ell^\infty(X)$.  However, this is not true.  The following example illustrates a case where $\psi$ is not Hadamard directionally differentiable as a map from $\ell^\infty(\gr A)$ to $\ell^\infty(X)$, although $f$ and $h$ are both continuous functions.

\begin{exm}
  Let $U \times X = [0, 1] \times [-1, 1]$ and define $f: U \times X \rightarrow \R$ by
  \begin{equation*}
    f(u, x) = \begin{cases} 0 & x \in [-1, 0] \\ x(u^2 - u) & x \in (0, 1] \end{cases}.
  \end{equation*}
  Let $h(u, x) = -u^2 + u$.  Then $\psi(f)(x) = 0$ for all $x$ and for any $t > 0$,
  \begin{equation*}
    \psi(f + th)(x) = \begin{cases} t / 4 & x \in [-1, 0] \\ (t - x) / 4 & x \in (0, t] \\ 0 & x \in (t, 1] \end{cases}.
  \end{equation*}
  Therefore for each $x$,
  \begin{equation*}
    \lim_{t \rightarrow 0^+} \frac{\psi(f + th)(x) - \psi(f)(x)}{t} = \frac{1}{4} I(x \leq 0).
  \end{equation*}
  However, for any $t > 0$,
  \begin{equation*}
    \sup_{x \in [-1, 1]} \left| (\psi(f + th)(x) - \psi(f)(x)) / t - (1/4) I(x \leq 0) \right| = 1 / 4.
  \end{equation*}
  This implies that no derivative exists as an element of $\ell^\infty([-1, 1])$.  The functions $f$ and $h$ are well behaved, but because the candidate derivative $I(x \leq 0) / 4$ is not continuous in $x$, uniform convergence of the quotients to the candidate fails.
\end{exm}

\subsection{Directional differentiability of norms of value functions} \label{sec:stats}

The lack of uniformity of the convergence to $\psi_f'(h)(\cdot)$ appears to preclude the delta method for uniform inference, because the typical path of analysis would use a well-behaved limit of $r_n(\psi(f_n) - \psi(f))$ and the continuous mapping theorem to find the distribution of statistics applied to the limit.  However, real-valued statistics are used to find uniform inference results, so we examine maps that include not only the marginal optimization step but also a functional applied to the resulting value function.

Consider $L_p$ norms (for $1 \leq p \leq \infty$) applied to $\psi(f)$ or $[\psi(f)]_+$.  Letting $m$ denote the dominating measure, define $\lambda_j: \ell^\infty(\gr A) \rightarrow \R$ for $j = 1, \ldots, 4$ by
\begin{align}
  \begin{aligned}
    \lambda_1(f) &= \sup_{x \in X} \left| \sup_{u \in A(x)} f(u, x) \right|, &\lambda_2(f) &= \sup_{x \in X} \left[ \sup_{u \in A(x)} f(u, x) \right]_+, \\
    \lambda_3(f) &= \left( \int_X \left| \sup_{u \in A(x)} f(u, x) \right|^p \ud m(x) \right)^{1/p}, &\lambda_4(f) &= \left( \int_X \left[ \sup_{u \in A(x)} f(u, x) \right]_+^p \ud m(x) \right)^{1/p}. \label{lambdas_abstract}
  \end{aligned}
\end{align}

The maps $\lambda_2$ and $\lambda_4$ are included because one-sided comparisons may also be of interest, and the map $f \mapsto [f]_+$ is pointwise directionally differentiable but not differentiable as a map from $\ell^\infty(X)$ to $\ell^\infty(X)$.  We illustrate its use in a stochastic dominance example below.

We make the following definitions in order to construct well-defined derivatives.  First, let $\mu(f)$ be the supremum of $f$ over $\gr A$ and let $\sigma(f)$ be the maximin value of $f$ over $\gr A$:
\begin{align}
  \mu(f) &= \sup_{(u, x) \in \gr A} f(u, x) \label{mu_def} \\
  \sigma(f) &= \sup_{x \in X} \inf_{u \in A(x)} f(u, x). \label{sigma_def}
\end{align}
Define $A_f^\mu \subseteq \gr A$ for any $\epsilon > 0$ to be the set of $\epsilon$-maximizers of $f$ in $(u, x)$ over $\gr A$.  Next, let $X_f^\sigma \subseteq X$ collect the $\epsilon$-maximinimizers of $f$ in $x$ over $X$.  That is, define
\begin{align}
  A_f^\mu(\epsilon) &= \left\{ (u, x) \in \gr A : f(u, x) \geq \mu(f) - \epsilon \right\} \label{Afmu_def} \\
  X_f^\sigma(\epsilon) &= \left\{ x \in X : \inf_{u \in A(x)} f(u, x) \geq \sigma(f) - \epsilon \right\}. \label{Xfsigma_def}
\end{align}
Also, note that $U_{(-f)}(x, \epsilon)$ is the set of $\epsilon$-minimizers of $f(u, x)$ in $u$ for given $x$.  

\citet{CarcamoCuevasRodriguez19} show that the map~\eqref{mu_def} has directional derivative
\begin{equation} \label{muprime_def}
  \mu'_f(h) = \lim_{\epsilon \rightarrow 0^+} \sup_{x \in A_f^\mu(\epsilon)} h(u, x)
\end{equation}
and Lemma~\ref{lem:saddle} in the \hyperref[appn]{Appendix} shows that~\eqref{sigma_def} has directional derivative
\begin{equation} \label{sigmaprime_def}
  \sigma_f'(h) = \lim_{\delta \rightarrow 0^+} \sup_{x \in X_f^\sigma(\delta)} \lim_{\epsilon \rightarrow 0^+} \inf_{u \in U_{(-f)}(x, \epsilon)} h(u, x).
\end{equation}
To analyze derivatives of the one-sided $\lambda_2(f)$ and $\lambda_4(f)$, define the contact set
\begin{equation} \label{contact_def}
  X_0 = \left\{ x \in X : \psi(f)(x) = 0 \right\}.
\end{equation}

The following two theorems describe the form of the four directional derivatives generally and under the condition that the value function is zero everywhere or on some subset of $X$.
\begin{thm} \label{thm:supnorm_stats}
  Suppose $f, h \in \ell^\infty(\gr A)$, where $A: X \rightrightarrows U$ is non-empty-valued for all $x \in X$.  Define $\mu$ by~\eqref{mu_def} and $\sigma$ by~\eqref{sigma_def}, and their derivatives $\mu_f'$ and $\sigma_f'$ as in~\eqref{muprime_def} and~\eqref{sigmaprime_def}.  Then:
  \begin{enumerate}
    \item $\lambda_1(f)$ is Hadamard directionally differentiable and
      \begin{equation*}
        \lambda'_{1f}(h) = \begin{cases} \mu_f'(h), & \mu(f) > \sigma(-f) \\ \max \left\{ \mu_f'(h), \sigma_{(-f)}'(-h) \right\}, & \mu(f) = \sigma(-f) \\ \sigma_{(-f)}'(-h), & \mu(f) < \sigma(-f) \end{cases}.
      \end{equation*}
    \item If $\|\psi(f)\|_\infty = 0$, $\lambda_1(f)$ is Hadamard directionally differentiable and
      \begin{align*}
        \lambda'_{1f}(h) &= \max \left\{ \lim_{\epsilon \rightarrow 0^+} \sup_{(u, x) \in U_f(X, \epsilon)} h(u, x), \lim_{\epsilon \rightarrow 0^+} \sup_{x \in X} \inf_{u \in U_f(x, \epsilon)} (-h(u, x)) \right\} \\
        {} &= \lim_{\epsilon \rightarrow 0^+} \sup_{x \in X} \left| \sup_{u \in U_f(x, \epsilon)} h(u, x) \right|.
      \end{align*}
    \item $\lambda_2(f)$ is Hadamard directionally differentiable and
      \begin{equation*}
        \lambda'_{2f}(h) = \begin{cases} \mu_f'(h), & \mu(f) > 0 \\ \left[ \mu_f'(h) \right]_+, & \mu(f) = 0 \\ 0 & \mu(f) < 0 \end{cases}.
      \end{equation*}
    \item If $\|[\psi(f)]_+\|_\infty = 0$, $\lambda_2(f)$ is Hadamard directionally differentiable and
      \begin{equation*}
        \lambda'_{2f}(h) = \begin{cases} \lim_{\epsilon \rightarrow 0^+} \sup_{(u, x) \in U_f(X_0, \epsilon)} \left[ h(u, x) \right]_+, & \mu(f) = 0 \\ 0, & \mu(f) < 0 \end{cases}.
      \end{equation*}
  \end{enumerate}
\end{thm}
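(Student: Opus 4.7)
The plan is to reduce every case to a short chain-rule computation by recognizing that both test statistics are compositions of already-differentiable maps. The basic identities are
\begin{equation*}
  \lambda_1(f) = \sup_{x \in X} |\psi(f)(x)| = \max\left\{\sup_{x \in X} \psi(f)(x),\; \sup_{x \in X}(-\psi(f)(x))\right\} = \max\{\mu(f),\, \sigma(-f)\},
\end{equation*}
and, similarly, $\lambda_2(f) = \max\{\mu(f),\, 0\} = [\mu(f)]_+$. So $\lambda_1$ factors as $f \mapsto (\mu(f), \sigma(-f)) \mapsto \max$ on $\R^2$, while $\lambda_2$ factors as $f \mapsto \mu(f) \mapsto [\cdot]_+$ on $\R$. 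Directional differentiability of $\mu$ is given in~\eqref{muprime_def}, directional differentiability of $\sigma$ is established in Lemma~\ref{lem:saddle} with the derivative in~\eqref{sigmaprime_def}, the map $f \mapsto -f$ is linear, and both $\max:\R^2 \to \R$ and $[\cdot]_+:\R \to \R$ are directionally (but not fully) differentiable, with derivatives that depend on which side of the kink the base point lies on.

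For part 1, I will apply the chain rule for Hadamard directional differentiability (see, e.g., Proposition 3.6 of Shapiro 1990) to the composition above. The derivative of $\max$ at $(a,b) \in \R^2$ is the first coordinate if $a>b$, the second if $a<b$, and $(\alpha,\beta) \mapsto \max\{\alpha,\beta\}$ if $a=b$; plugging in $a=\mu(f)$, $b=\sigma(-f)$ and using $(d/df)[\sigma(-f)](h) = \sigma'_{-f}(-h)$ immediately reproduces the three-case formula. Part 3 is analogous with $[\cdot]_+$ in place of $\max$: the derivative at $a\in\R$ is $1$, $0$, or $[\cdot]_+$ depending on whether $a$ is positive, negative, or zero, so composing with $\mu'_f$ gives the three displayed cases.

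For the specializations in parts 2 and 4, the key point is that the hypothesis on $\psi(f)$ pins down which branch of the general formula is active. In part 2, $\|\psi(f)\|_\infty = 0$ forces $\psi(f)\equiv 0$, so $\mu(f) = \sup_x \psi(f)(x) = 0$ and $\sigma(-f) = -\inf_x \psi(f)(x) = 0$, placing us in the equality branch of part 1. It then remains to rewrite $\mu'_f(h)$ and $\sigma'_{(-f)}(-h)$ in the stated forms. Since $\mu(f)=0$ and $\psi(f)\equiv 0$, the sets $A_f(\epsilon)$ and $U_f(X,\epsilon)$ coincide, giving the first term. For the second term, the derivative formula~\eqref{sigmaprime_def} with $g=-f$ yields $X_{(-\psi(f))}(\delta) = X$ for every $\delta>0$ (every $x$ is a minimizer of $\psi(f)\equiv 0$), so the outer $\lim_{\delta\to 0^+}\sup_{x\in X_{(-\psi(f))}(\delta)}$ collapses to $\sup_{x\in X}$; then the standard monotonicity argument (the inner quantity $\inf_{u \in U_f(x,\epsilon)}(-h(u,x))$ is nonincreasing in $\epsilon$, so its limit equals $\sup_\epsilon$, and sup-of-sup can be interchanged) delivers the claimed form $\lim_{\epsilon\to 0^+}\sup_{x\in X}\inf_{u\in U_f(x,\epsilon)}(-h(u,x))$. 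Part 4 is easier: $\|[\psi(f)]_+\|_\infty = 0 \iff \mu(f)\le 0$, so only the last two branches of part 3 can occur, and substituting~\eqref{muprime_def} for $\mu'_f(h)$ gives the stated formula.

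The main obstacle I anticipate is the limit interchange in part 2: the formula~\eqref{sigmaprime_def} stacks an outer $\sup$ over near-maximinimizers with an inner $\lim\inf$, and one must verify that under $\psi(f)\equiv 0$ these collapse correctly. Apart from that step, everything is chain-rule bookkeeping, so I expect the proof to be short once the two identities $\lambda_1 = \max\{\mu,\sigma\circ(-\cdot)\}$ and $\lambda_2 = [\mu]_+$ are in hand.
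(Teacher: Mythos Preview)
Your proposal is correct and follows essentially the same route as the paper: both recognize $\lambda_1 = \max\{\mu(\cdot),\sigma(-\cdot)\}$ and $\lambda_2 = [\mu(\cdot)]_+$, then apply the chain rule for Hadamard directional derivatives using the known derivatives of $\mu$, $\sigma$ (via Lemma~\ref{lem:saddle}), $\max$, and $[\cdot]_+$, and finally specialize parts~2 and~4 by identifying the active branch and simplifying the $\epsilon$-maximizer sets under $\psi(f)\equiv 0$ or $\psi(f)\le 0$. Your monotonicity argument for the $\sup_x$/$\lim_{\epsilon\to 0^+}$ interchange in part~2 is in fact more explicit than the paper's own proof, which simply asserts the simplified form after noting $X_{(-\psi(f))}(\delta)=X$ and $U_{(-(-f))}(x,\epsilon)=U_f(x,\epsilon)$.
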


The first part of Theorem~\ref{thm:supnorm_stats} essentially takes advantage of the identity $|x| = \max\{x, -x\}$, and breaks the evaluation of the supremum of the absolute value of the value function into two parts that are well-defined, a maximization and a minimization problem.  The second part of the theorem is simpler because the $x \mapsto [x]_+$ map simplifies the evaluation of the supremum.

The next result shows that the $L_p$ functionals for $1 \leq p < \infty$ defined in~\eqref{lambdas_abstract}, are Hadamard directionally differentiable.  The form of the derivatives for these maps also change depending on whether $\lambda_j(f) = 0$ or $\lambda_j(f) \neq 0$.

\begin{thm} \label{thm:lpnorm_stats}
  Suppose that $f, h \in \ell^\infty(\gr A)$ where $A: X \rightrightarrows U$ is non-empty-valued for all $x \in X$ and $m(X) < \infty$.  Recall $\psi'_f$ defined in~\eqref{psiprime_def}.  Then:
  \begin{enumerate}
    \item If $\|\psi(f)\|_p \neq 0$, then $\lambda_3$ is Hadamard directionally differentiable and
      \begin{equation*}
        \lambda_{3f}'(h) = \| \psi(f) \|_p^{1-p} \int_X \sgn(\psi(f)(x)) |\psi(f)(x)|^{p-1} \psi_f'(h)(x) \ud m(x).
      \end{equation*}
    \item If $\|\psi(f)\|_p = 0$, then $\lambda_3$ is Hadamard directionally differentiable and
      \begin{equation*}
        \lambda_{3f}'(h) = \left( \int_X \left| \psi_f'(h)(x) \right|^p \ud m(x) \right)^{1/p}.
      \end{equation*}
    \item If $\|[\psi(f)]_+\|_p \neq 0$, then $\lambda_4$ is Hadamard directionally differentiable and
      \begin{equation*}
        \lambda_{4f}'(h) = \| [\psi(f)]_+ \|_p^{1-p} \int_X [\psi(f)(x)]_+^{p-1} \psi_f'(h)(x) \ud m(x).
      \end{equation*}
    \item If $\|[\psi(f)]_+\|_p = 0$, then $\lambda_4$ is Hadamard directionally differentiable and
      \begin{equation*}
        \lambda_{4f}'(h) = \left( \int_{X_0} \left[ \psi_f'(h)(x) \right]_+^p \ud m(x) \right)^{1/p}.
      \end{equation*}
  \end{enumerate}
\end{thm}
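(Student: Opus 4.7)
The plan is to reduce the theorem to the pointwise directional differentiability of $\psi$ recalled at the start of Section~\ref{sec:stats}, paired with dominated convergence driven by the fact that $\psi$ is $1$-Lipschitz in the supremum norm. Fix sequences $h_n \to h$ in $\ell^\infty(\gr A)$ and $t_n \to 0^+$, and write $g = \psi(f)$ and $g_n = \psi(f + t_n h_n)$. Two facts recur throughout: (i) $\|g_n - g\|_\infty \leq t_n \|h_n\|_\infty$, so $g_n \to g$ uniformly and the difference quotients $(g_n - g)/t_n$ are bounded on $X$ by a common constant (hence dominated on $X$ since $m(X) < \infty$); and (ii) for each $x \in X$, $(g_n(x) - g(x))/t_n \to \psi_f'(h)(x)$ by the pointwise result of \citet{CarcamoCuevasRodriguez19}.

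For part~1, I combine the elementary inequality $\bigl||a|^p - |b|^p\bigr| \leq p (|a| \vee |b|)^{p-1} |a-b|$ with~(i) to dominate $\bigl(|g_n|^p - |g|^p\bigr)/t_n$ by a constant and with~(ii) to obtain the pointwise limit $p \sgn(g(x)) |g(x)|^{p-1} \psi_f'(h)(x)$ (the $g(x) = 0$ case contributing zero for $p > 1$). Dominated convergence gives $\bigl(\|g_n\|_p^p - \|g\|_p^p\bigr)/t_n \to p \int_X \sgn(g) |g|^{p-1} \psi_f'(h)\, \ud m$, and the ordinary chain rule applied to $y \mapsto y^{1/p}$ at the nonzero point $\|g\|_p^p$ yields the stated formula. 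Part~3 is essentially the same with $[\,\cdot\,]_+$ in place of $|\cdot|$; the pointwise derivative $p [g(x)]_+^{p-1} \psi_f'(h)(x)$ vanishes on $\{g \leq 0\}$ automatically.

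For part~2, the hypothesis $\|g\|_p = 0$ means $g = 0$ $m$-a.e., so $g_n/t_n = (g_n - g)/t_n$ a.e.\ on $X$, and dominated convergence applied to $|g_n/t_n|^p$ yields $\|g_n\|_p/t_n \to \|\psi_f'(h)\|_p$. For part~4, split $X$ into $\{g < 0\}$, $X_0 = \{g = 0\}$, and the $m$-null set $\{g > 0\}$. On $\{g < 0\}$, $g_n < 0$ eventually, so $[g_n]_+/t_n \to 0$ pointwise; on $X_0$, the positive-part map gives $[g_n(x)/t_n]_+ \to [\psi_f'(h)(x)]_+$. Dominated convergence then delivers the integral over $X_0$ alone.

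Continuity of each derivative map in $h$ follows from $\|\psi_f'(h_1) - \psi_f'(h_2)\|_\infty \leq \|h_1 - h_2\|_\infty$ (a consequence of the supremum being $1$-Lipschitz), combined with dominated convergence to pass continuity through the integrals. The main obstacle I anticipate is the bookkeeping at the exceptional points where $g(x) = 0$: the pointwise derivative of $|\cdot|^p$ and $[\,\cdot\,]_+^p$ behaves differently depending on whether $g(x)$ is strictly positive, negative, or zero, and these cases determine whether one recovers $\psi_f'(h)$ (part~2) or $[\psi_f'(h)]_+$ (part~4) in the degenerate-norm case. Once these are tabulated cleanly, dominated convergence does the rest.
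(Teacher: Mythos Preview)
Your proposal is correct and follows essentially the same architecture as the paper's own proof: compute pointwise directional derivatives of $|\psi(f)(x)|^p$ and $[\psi(f)(x)]_+^p$ via the chain rule with the known pointwise derivative $\psi_f'(h)(x)$, pass the limit through the integral by dominated convergence (licensed by $m(X)<\infty$ and the uniform bound on the quotients), and finish with the scalar chain rule for $y\mapsto y^{1/p}$ in the nondegenerate cases while handling the degenerate cases by direct computation on $X_0$. Your write-up is in fact slightly more explicit than the paper's in two places---you spell out the domination via the $1$-Lipschitz property of $\psi$ and the inequality $\bigl||a|^p-|b|^p\bigr|\le p(|a|\vee|b|)^{p-1}|a-b|$, and you verify continuity of $h\mapsto\lambda_{jf}'(h)$---but the route is the same.
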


Theorem~\ref{thm:lpnorm_stats} is related to the results of \citet{ChenFang19}, who dealt with a squared $L_2$ statistic.  It is interesting to note here that using an $L_2$ statistic instead of its square results in first-order (directional) differentiability of the map, unlike the squared $L_2$ norm, which has a degenerate first-order derivative but nondegenerate second-order derivative.

We conclude this section by revisiting Examples \ref{ex:bounds}, \ref{ex:stoc_dominance}, and \ref{ex:legendre} and describing the computation of the directional derivatives in the corresponding cases.

\begin{excont}{ex:bounds}{Dependency bounds}
  To construct a uniform confidence band for $L_{X+Y}$ we can invert a test for the hypothesis that $L_{X+Y} = L_0$.  Recalling the map $\Pi^+$ defined in~\eqref{pi_def}, consider the statistic $\lambda: (\ell^\infty(\R))^2 \rightarrow \R$ defined by
  \begin{equation} \label{dep_stat_def}
    \lambda(F) = \sup_{x \in \R} |\max\{ \sup_{u \in \R} \Pi^+(F)(u, x), 0\} - L_0(x)|.
  \end{equation}
  This is challenging because of the pointwise maximum inside the absolute value.  Lemma~\ref{lem:other_sup} in the appendix, which uses calculations similar to those in the proof of Theorem~\ref{thm:supnorm_stats}, shows that, letting
  \begin{align*}
    X_0 &= \{x \in \R : \sup_{u \in \R} (F_X(u) + F_Y(x - u)) = 1\} \\
    X_+ &= \{x \in \R : \sup_{u \in \R} (F_X(u) + F_Y(x - u)) > 1\} \\
    U_F(x, \epsilon) &= \{u \in \R : F_X(u) + F_Y(x - u) \geq \sup_{u \in \R} (F_X(u) + F_Y(x - u)) - \epsilon\},
  \end{align*}
  $\lambda$ is Hadamard directionally differentiable and its derivative is, for direction $h = (h_X, h_Y)$,
  \begin{multline} \label{cband_derivative}
    \lambda_F'(h) = \max \Bigg\{ \lim_{\epsilon \rightarrow 0^+} \sup_{x \in X_0} \sup_{u \in U_F(x, \epsilon)} [h_X(u) + h_Y(x - u)]_+, \\
    \lim_{\epsilon \rightarrow 0^+} \sup_{x \in X_+} \left| \sup_{u \in U_F(x, \epsilon)} (h_X(u) - h_Y(x - u)) \right| \Bigg\}.
  \end{multline}

  Similarly, we may work with the bound $U_{X+Y}^{-1}$ on the quantile function $F_{X+Y}^{-1}$.  When a CDF $F$ is continuously differentiable on a set $\calT = [F^{-1}(p) - \epsilon, F^{-1}(q) + \epsilon]$ for $0<p<q<1$ and $\epsilon > 0$, it is fully Hadamard differentiable in $\ell^\infty(\calT)$ with derivative in direction $h$ at any $\tau \in \calT$ given by $-h(F^{-1}(\tau)) / f(F^{-1}(\tau)$, where $f$ is the density associated with $F$ \citep[Lemma 3.9.23]{vanderVaartWellner96}.  Recall the definition of $\tilde{\Pi}^+$ in~\eqref{pi_til_def} and consider, for $F = (F_X, F_Y)$,
  \begin{equation}
    \tilde{\lambda}(F) = \sup_{\tau \in \calT} |\sup_{u \in (0, \tau)} \tilde{\Pi}^+(F)(u, x) - U_0^{-1}(x)|.
  \end{equation}
  Under differentiability of the CDFs, we have
  \begin{equation*}
    (\tilde{\Pi}^+)'_F(h)(s, t) = - \frac{h_X(F_X^{-1}(s))}{f_X(F_X^{-1}(s))} - \frac{h_Y(F_Y^{-1}(t-s))}{f_Y(F_Y^{-1}(t-s))}.
  \end{equation*}
  Then this fact, Theorem~\ref{thm:supnorm_stats} and the chain rule imply that, letting $\tilde{U}(\tau, \epsilon) = \{u \in (0, \tau) : F_X^{-1}(u) + F_Y^{-1}(\tau - u) \geq \sup_{u \in (0, \tau)} (F_X^{-1}(u) - F_Y^{-1}(\tau - u)) - \epsilon \}$,
  \begin{equation*}
    \tilde{\lambda}'_F(h) = \lim_{\epsilon \rightarrow 0} \sup_{\tau \in \calT} \left| \sup_{u \in \tilde{U}(\tau, \epsilon)} \left( - \frac{h_X(F_X^{-1}(u))}{f_X(F_X^{-1}(u))} - \frac{h_Y(F_Y^{-1}(\tau-u))}{f_Y(F_Y^{-1}(\tau-u))} \right) \right|.
  \end{equation*}
\end{excont}

\begin{excont}{ex:stoc_dominance}{Stochastic dominance}
  Similarly to the previous example, we may test the necessary condition for stochastic dominance using the test statistic
  \begin{equation}
    \Lambda = \left( \int [ (L_{A}(x) - U_{B}(x))]_+^2 \ud m(x) \right)^{1/2}.
  \end{equation}
  This test statistic is zero when $L_A \leq U_B$ and greater than zero otherwise.  The map $\Pi^-$ is fully differentiable with derivative $(\Pi)_f'(h) = h_1(u) - h_0(u - x)$.  Then $\Lambda$ is directionally differentiable and, extending Theorem~\ref{thm:lpnorm_stats} to two functions inside the integral,
  \begin{multline} \label{biglambdaprime_def}
    \Lambda'_G(h) = \bigg( \int_{L_A = U_B} \Big[ \lim_{\epsilon \rightarrow 0^+} \sup_{u \in U_{\Pi^-(G_0, G_A)}(\epsilon, x)} \Pi^-(h_0, h_A) (u, x) \\
    - \lim_{\epsilon \rightarrow 0^+} \inf_{u \in U_{-\Pi^-(G_0, G_B)}(\epsilon, x)} \Pi^-(h_0, h_B)(u, x) \Big]_+^2 \ud m(x) \bigg)^{1/2}.
  \end{multline}
\end{excont}


\begin{excont}{ex:legendre}{Cost and profit functions}
  This example shows how some of the technical difficulties encountered in the previous two examples may be alleviated with assumptions on the economic model.  Recall that a uniform confidence band for $\pi = \mathcal{L}(c)$ depends on the derivative of $\check{\lambda}(f) = \sup_x |\mathcal{L}(f)(x)|$.  Suppose that $(u, x) \in \mathcal{U} \times \mathcal{X}$, a compact set.  More generally, the assumptions that firms will not operate with negative profits and that the cost function is 1-coercive (that is, that $\lim_{\|x\| \rightarrow \infty} c(x) / \|x\| = +\infty$) would ensure that $\pi$ is continuous.  Given the existence of optimizers and a maximum, write
  \begin{align*}
    \check{\lambda}(c) &= \max_p \left| \max_x (p \cdot x - c(x)) \right| \\
    {} &= \max \left\{ \max_{p,x} (p \cdot x - c(x)), \max_p \min_x (c(x) - p \cdot x) \right\}.
  \end{align*}
  Since $c$ is estimated and $c$ only depends on $x$, it is sufficient to perturb the above statistic in a direction that does not depend on $p$, i.e., $h(x)$, to calculate derivatives.

  Define the derivative of $\check{\mu}(c) = \max_{p,x} (p \cdot x - c(x))$ by
  \begin{equation*}
    \check{\mu}'_c(h) = \max_{\argmax (p \cdot x - c(x))} (-h(x)),
  \end{equation*}
  and the derivative of $\check{\sigma}(c) = \max_p \min_x (c(x) - p \cdot x)$ by
  \begin{equation*}
    \check{\sigma}'_c(h) = \max_{\argmax_p (\argmin_x (c(x) - p \cdot x))} \min_{\argmin_x (c(x) - p \cdot x)} h(x).
  \end{equation*}
  With strict convexity of $c$ the argmin expressions in $\check{\mu}'$ and $\check{\sigma}'$ would be single-valued, and these would be simpler.  The directional derivative of $\lambda(c)$ is
  \begin{equation} \label{leg_derivative}
    \lambda_c'(h) = \begin{cases} \check{\mu}'_c(h), & \check{\mu}(c) > \check{\sigma}(c), \\ \check{\mu}'_c(h) \vee \check{\sigma}'_c(h), & \check{\mu}(c) = \check{\sigma}(c), \\ \check{\sigma}'_c(h) & \check{\mu}(c) < \check{\sigma}(c) \end{cases}.
  \end{equation}
\end{excont}

\section{Inference}\label{sec:inference}

In this section we derive asymptotic distributions for uniform statistics applied to value functions, and propose bootstrap estimators of their distributions that can be used for practical inference.

\subsection{Asymptotic distributions}
The derivatives developed in the previous subsections can be used in a functional delta method.  We make a high-level assumption on the way that observations are related to estimated functions, but conditions under which such convergence holds are well-known, and examples will be shown below.

\begin{enumerate}[label=\textbf{A\arabic*.}, ref=\textbf{A\arabic*}]
  \item \label{A:estimator} Assume that for each $n$ there is a random sample $\{Z_i\}_{i=1}^n$  with $P$ denoting their joint probability distribution.  Assume there is a map $\{Z_i\}_{i=1}^n \mapsto f_n$ where $f_n \in \ell^\infty(U \times X)$.  In case $\lambda_3$ or $\lambda_4$ is used, also assume $m(X) < \infty$.  Furthermore, for some sequence $r_n \rightarrow \infty$, $r_n (f_n - f) \cw \calG_P$, where $\calG_P$ is a tight random element of $\ell^\infty(\gr A)$.
\end{enumerate}

In case an $L_p$ statistic is used (with $p < \infty$), we restrict the space of allowable functions.  In order to ensure that $L_p$ statistics are well-defined we make the assumption that the measure of $X$ is finite.  While stronger than the assumption used for the supremum statistic, it is sufficient to ensure that the $L_p$ statistics of the value function are finite.  This assumption also has the advantage of providing an easy-to-verify condition based on the objective function $f$, rather than the more direct, but less obvious condition $\psi(f) \in L_p(m)$.  Lifting this restriction would require some other restriction on the objective function $f$ that is sufficient to ensure the value function is $p$-integrable.\footnote{We attempted to show $p$-integrability by assuming that $f$ is bounded and integrable (note that this would imply $f$ is $p$-integrable in $\gr A$), but were unable to show that this implies that the value function is integrable (in $X$).  Note that if one were able to show that $f$ is bounded and integrable, then the elegant results in \citet[Section 3]{Kaji19} would apply for the purposes of verifying Assumption~\ref{A:estimator}.}

Given assumption~\ref{A:estimator} and the derivatives of the last section, the asymptotic distribution of test statistics applied to the value function is straightforward.
\begin{thm} \label{thm:asymptotic}
  Under Assumption~\ref{A:estimator},
  \begin{equation*}
    r_n \left(\lambda_j(f_n) - \lambda_j(f) \right) \cw \lambda_{jf}'(\calG_P),
  \end{equation*}
  where for $j \in \{1, \ldots 4\}$, $\lambda_j$ are defined in~\eqref{lambdas_abstract} and $\lambda_{jf}'$ are defined in Theorems~\ref{thm:supnorm_stats} and~\ref{thm:lpnorm_stats}.
\end{thm}
This theorem is abstract and hides the fact that the limiting distributions may depend on features of $\lambda_j$ and $f$.  Therefore it is only indirectly useful for inference.  A resampling scheme is the subject of the next section.

The delta method as used in Theorem~\ref{thm:asymptotic} is tailored to arguments that have weak limits in the space $\ell^\infty(\gr A)$, which is an intrinsic feature of the delta method.  That implies some limitations on the type of functions that may be considered.  For example, functions estimated using kernel methods do not converge to a weak limit as assumed in Assumption~\ref{A:estimator}.  One can in general only show the stochastic order of the sequence $r_n(f_n - f)$, and methods for uniform inference must shown using something other than the delta method.  We leave uniform inference for value functions based on less regular estimators to future research.

Our bootstrap technique ahead is designed for resampling under the null hypothesis that $\lambda_j(f) = 0$. We assume that when $j = 1$ or $j = 3$ (that is, conventional two-sided statistics are used), the statistics are used to test the null and alternative hypotheses
\begin{align}
  H_0&: \psi(f)(x) = 0 \quad \text{for all } x \in X \label{eq_null} \\
  H_1&: \psi(f)(x) \neq 0 \quad \text{for some } x \in X.
\end{align}
Meanwhile, when $j = 2$ or $j = 4$ (the one-sided cases) we assume the hypotheses are
\begin{align}
  H_0&: \psi(f)(x) \leq 0 \quad \text{for all } x \in X \label{ineq_null} \\
  H_1&: \psi(f)(x) > 0 \quad \text{for some } x \in X.
\end{align}

The probability measure $P$ is assumed to belong to a collection $\calP$ which describes the set of probability measures allowed by the model.  A few subcollections of $\calP$ serve to organize the asymptotic results below.  When $P$ is such that $\psi(f) \equiv 0$, we label $P \in \calP_{00}^E$.  For functional inequalities, the behavior of test statistics under the null is more complicated.  If $P$ is such that $\psi(f) \leq 0$ everywhere, then label $P \in \calP_0^I$.  When $P \in \calP_0^I$ makes $\psi(f)(x) = 0$ for at least one $x \in X$, then we label $P \in \calP_{00}^I$.

The following corollary combines the derivatives from Theorem~\ref{thm:supnorm_stats} and Theorem~\ref{thm:lpnorm_stats} with the result of Theorem~\ref{thm:asymptotic} for these distribution classes.  Recall that the set $X_0$, defined in~\eqref{contact_def}, represents the subset of $X$ where $\psi(f)$ is zero.

\begin{cor} \label{cor:null_asymptotics}
  Under Assumption~\ref{A:estimator}, for $j \in \{1, \ldots 4\}$,
  \begin{enumerate}
    \item For each $P \in \calP_{00}^E$, $r_n (\lambda_1(f_n) - \lambda_1(f)) \cw \lim_{\epsilon \rightarrow 0^+} \sup_{x \in X} \left| \sup_{u \in U_f(x, \epsilon)} \calG_P(u, x) \right|$.
    \item
      \begin{enumerate}
        \item For each $P \in \calP_{00}^I$, $r_n(\lambda_2(f_n) - \lambda_2(f)) \cw \lim_{\epsilon \rightarrow 0^+} \sup_{(u, x) \in U_f(X_0, \epsilon)} \left[ \calG_P(u, x) \right]_+.$
        \item For each $P \in \calP_0^I \backslash \calP_{00}^I$, $r_n(\lambda_2(f_n) - \lambda_2(f)) \cp 0$.
      \end{enumerate}
    \item For each $P \in \calP_{00}^E$, $r_n(\lambda_3(f_n) - \lambda_3(f)) \cw \left( \int_X \left| \psi_f'(\calG_P)(x) \right|^p \ud m(x) \right)^{1/p}$.
    \item
      \begin{enumerate}
        \item For each $P \in \calP_{00}^I$, $r_n(\lambda_4(f_n) - \lambda_4(f)) \cw \left( \int_{X_0} \left[ \psi_f'(\calG_P)(x) \right]_+^p \ud m(x) \right)^{1/p}$.
        \item For each $P \in \calP_0^I \backslash \calP_{00}^I$, $r_n(\lambda_4(f_n) - \lambda_4(f)) \cp 0$.
      \end{enumerate}
  \end{enumerate}
\end{cor}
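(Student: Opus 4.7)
The plan is straightforward: in each case, apply Theorem~\ref{thm:asymptotic} to obtain $r_n(\lambda_j(f_n) - \lambda_j(f)) \cw \lambda_{jf}'(\calG)$, and then specialize the derivative formulas of Theorems~\ref{thm:supnorm_stats} and~\ref{thm:lpnorm_stats} to the appropriate null subclass. The main bookkeeping step is to translate the distributional class definitions ($\calP_{00}^E$, $\calP_{00}^I$, and $\calP_0^I \setminus \calP_{00}^I$) into the quantitative conditions used in those theorems, namely conditions on $\|\psi(f)\|_\infty$, $\|\psi(f)\|_p$, $\|[\psi(f)]_+\|_p$, and on the sign of $\mu(f) = \sup_{x\in X} \psi(f)(x)$.

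I would handle part~1 first: under $P \in \calP_{00}^E$ we have $\psi(f) \equiv 0$, hence $\|\psi(f)\|_\infty = 0$, placing us in the second case of Theorem~\ref{thm:supnorm_stats}; substituting $h = \calG$ into its derivative formula and invoking Theorem~\ref{thm:asymptotic} gives the stated limit. Part~3 is analogous, using the second case of Theorem~\ref{thm:lpnorm_stats} since $\|\psi(f)\|_p = 0$ as well. For parts~2(a) and~4(a), $P \in \calP_{00}^I$ means $\psi(f) \leq 0$ everywhere with equality at some $x$, so $\lambda_j(f) = 0$, $\|[\psi(f)]_+\|_\infty = \|[\psi(f)]_+\|_p = 0$, and $\mu(f) = 0$. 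This places us in the $\mu(f) = 0$ branch of Theorem~\ref{thm:supnorm_stats}.4 and in the fourth case of Theorem~\ref{thm:lpnorm_stats}, and both formulas give the claimed expressions after setting $h = \calG$; note that $[\,\cdot\,]_+$ commutes with the sup so the position of the positive part in part~2(a) is immaterial.

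The remaining cases are parts~2(b) and~4(b), where $P \in \calP_0^I \setminus \calP_{00}^I$ gives $\psi(f)(x) < 0$ for every $x \in X$. In the regular settings contemplated (for instance, compact $X$ with continuous $\psi(f)$, as in all of the examples), this strengthens to $\mu(f) < 0$ strictly, and the $\mu(f) < 0$ branches of Theorem~\ref{thm:supnorm_stats}.3 and Theorem~\ref{thm:lpnorm_stats} yield $\lambda_{jf}' \equiv 0$; the delta method then delivers $r_n(\lambda_j(f_n) - \lambda_j(f)) \cw 0$, which is equivalent to convergence in probability to zero. The main obstacle is the boundary pathology $\psi(f) < 0$ pointwise with $\mu(f) = 0$, which is not covered by the listed derivative branches. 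I would sidestep this by a direct argument: since $\psi$ is $1$-Lipschitz in the supremum norm, $\lambda_j(f_n) \leq \|f_n - f\|_\infty = \Op{r_n^{-1}}$, which at least gives tightness at rate $r_n$; under any regularity ensuring that $\psi(f)$ attains its supremum on $X$, the pathology does not arise and the stated $\cp 0$ conclusion follows.
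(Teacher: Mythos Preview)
Your approach matches the paper's: the corollary is stated without an explicit proof, and the paper simply says it ``combines the derivatives from Theorem~\ref{thm:supnorm_stats} and Theorem~\ref{thm:lpnorm_stats} with the result of Theorem~\ref{thm:asymptotic},'' which is exactly what you do.

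One correction for part~4(b): there is no ``$\mu(f) < 0$ branch'' in Theorem~\ref{thm:lpnorm_stats}. The cleaner route is to stay in part~4 of that theorem. Under $P \in \calP_0^I \setminus \calP_{00}^I$ you have $\psi(f)(x) < 0$ for every $x$, hence $X_0 = \{x : \psi(f)(x) = 0\} = \varnothing$, and the integral $\int_{X_0} [\psi_f'(h)(x)]_+^p \, \ud m(x)$ is identically zero. This gives $\lambda_{4f}'(h) = 0$ directly, without any appeal to $\mu(f)$ and without the boundary pathology you worry about for $\lambda_2$. So for $\lambda_4$ the $\cp 0$ conclusion is clean.

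Your flag on part~2(b) is legitimate: if $\psi(f)(x) < 0$ for all $x$ but $\mu(f) = \sup_x \psi(f)(x) = 0$ (supremum not attained), Theorem~\ref{thm:supnorm_stats}.4 puts you in the $\mu(f) = 0$ case with a possibly nondegenerate derivative, so the $\cp 0$ claim would not follow from the derivative formulas alone. The paper does not address this edge case. Your observation that the examples in the paper have compact $X$ with continuous $\psi(f)$, so the supremum is attained and the pathology is excluded, is the right way to close the gap in practice.
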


The distributions described in the above corollary under the assumption that $P \in \calP_{00}^E$ or $\calP_{00}^I$ are those that we emulate using resampling methods in the next section.

\subsection{Resampling}

In the previous section we established asymptotic distributions for the test statistics of interest.  However, for practical inference we turn to resampling to estimate their distributions.  This section suggests the use of a resampling strategy that was proposed by \citet{FangSantos19}, and it combines a standard bootstrap procedure with estimates of the directional derivatives $\lambda_{jf}'$.  Hadamard directional differentiability also implies that the numerical approximation methods of \citet{HongLi18} may be used to estimate these directional derivatives, an approach that we address in simulations (in the supplementary appendix).

The resampling scheme described below is designed to reproduce the null distribution under the assumption that the statistic is equal to zero, or in other terms, that $P \in \calP_{00}^E$ or $P \in \calP_{00}^I$.  This is achieved by restricting the form of the estimates $\hat{\lambda}_{jn}'$.  The bootstrap routine is consistent under more general conditions, as in Theorem 3.1 of \citet{FangSantos19}.  However, we discuss behavior of bootstrap-based tests under the assumption that one of the null conditions described in the previous section holds.

All the derivative formulas in Corollary~\ref{cor:null_asymptotics} require some form of an estimate of the near-maximizers of $f$ in $u$, that is, of the set $U_f(x, \epsilon)$ defined in~\eqref{marginal_epsmax_def} for various $x$.  The set estimators we use are similar to those used in \citet{LintonSongWhang10}, \citet{ChernozhukovLeeRosen13} and \citet{LeeSongWhang18} and depend on slowly decreasing sequences of constants to estimate the relevant sets in the derivative formulas.  For a sequence $a_n \rightarrow 0^+$ we estimate $U_f(x, \epsilon)$ with the plug-in estimator $U_{f_n}(x, a_n)$.  The sequence $a_n$ should decrease more slowly than the rate at which $f_n$ converges uniformly to $f$, an assumption which will be formalized below.

The one-sided estimates $\hat{\lambda}_{2n}'$ and $\hat{\lambda}_{4n}'$ use a second estimate.  For another sequence $b_n \rightarrow 0^+$, estimate the contact set $X_0$ defined in~\eqref{contact_def} with
\begin{equation*}
  \hat{X}_0 = \begin{cases} \{x \in X : |\psi(f_n)(x)| \leq b_n\} & \text{if } \{x \in X : |\psi(f_n)(x)| \leq b_n\} \neq \varnothing \\ X & \text{if } \{x \in X : |\psi(f_n)(x)| \leq b_n\} = \varnothing \end{cases}.
\end{equation*}
This is a method used by~\citet{LintonSongWhang10} of achieving uniformity in the convergence of estimators that depend on contact set estimation under the null hypothesis $P \in \calP_{00}^I$.

Define estimators of $\lambda_{1f}'$ and $\lambda_{2f}'$ by
\begin{align}
  \hat{\lambda}_{1n}'(h) &= \max \left\{ \sup_{(u, x) \in U_{f_n}(X, a_n)} h(u, x), \sup_{x \in X} \inf_{u \in U_{f_n}(x, a_n)} (-h(u, x)) \right\} \label{est1_def} \\
  {} &= \sup_{x \in X} \left| \sup_{u \in U_{f_n}(x, a_n)} h(u, x) \right| \notag
\end{align}
and
\begin{equation}
  \hat{\lambda}_{2n}'(h) = \sup_{(u, x) \in U_{f_n}(\hat{X}_0, a_n)} \left[ h(u, x) \right]_+.
\end{equation}

These formulas impose the condition that $P \in \calP_{00}^E$ or $\calP_{00}^I$ on the derivative estimates, following the forms shown in Corollary~\ref{cor:null_asymptotics}.  For more general estimation we would need to devise more complex estimators to deal with the variety of limits discussed in Theorems~\ref{thm:supnorm_stats} and~\ref{thm:lpnorm_stats}, and also the technique of \citet{HongLi18} could be used in that case.  However, inference is usually of primary interest, and generating a bootstrap distribution that respects the null hypothesis can improve the performance of bootstrap inference procedures.  When $P \in \calP_{00}^E$ the imposition amounts to the assumption that $\psi(f)(x) = 0$ for each $x$, so that the set of population $\epsilon$-maximizers in $\gr A$ is $U_f(X, \epsilon)$, and the estimator $\hat{\lambda}_{1n}'$ emulates that.  The condition $P \in \calP_{00}^I$ implies that $X_0$ is not empty, and $\hat{X}_0$ is meant to estimate this contact set, while using all of $X$ when the distribution is not as is hypothesized.

The estimators $\hat{\lambda}_{3n}'$ and $\hat{\lambda}_{4n}'$ are defined similarly to the supremum-norm estimators: let
\begin{equation}
  \hat{\lambda}_{3n}'(h) = \left( \int_X \left| \sup_{u \in U_{f_n}(x, a_n)} h(u, x) \right|^p \ud m(x) \right)^{1/p}
\end{equation}
and
\begin{equation} \label{est4_def}
  \hat{\lambda}_{4n}'(h) = \left( \int_{\hat{X}_0} \sup_{u \in U_{f_n}(x, a_n)} \left[ h(u, x) \right]_+^p \ud m(x) \right)^{1/p}.
\end{equation}

We use an exchangeable bootstrap, which depends on a set of weights $\{W_i\}_{i=1}^n$ that put probability mass $W_i$ at each observation $Z_i$.  This type of bootstrap describes many well-known bootstrap techniques including sampling with replacement from the observations \citep[Section 3.6.2]{vanderVaartWellner96}.

We make the following assumptions to ensure the bootstrap is well-behaved.
\begin{enumerate}[label=\textbf{A\arabic*.}, ref=\textbf{A\arabic*}]
    \setcounter{enumi}{1}
  \item \label{A:sequences} Assume $a_n, b_n \rightarrow 0^+$, $a_n r_n \rightarrow \infty$ and $b_n r_n \rightarrow \infty$.
  \item \label{A:bootstrapf} Suppose that for each $n$, $W$ is independent of the data $Z$ and there is a map $\{Z_i, W_i\}_{i=1}^n \mapsto f_n^*$ where $f_n^* \in \ell^\infty(\gr A)$.  Suppose $r_n(f_n^* - f_n)$ is asymptotically measurable; for all continuous and bounded $g$, $g(r_n(f_n^* - f_n))$ is a measurable function of $\{W_i\}$ outer almost surely in $\{Z_i\}$; and $r_n(f_n^* - f_n) \cw \calG_P$, where $\calG_P$ was defined in Assumption~\ref{A:estimator}.
\end{enumerate}

We remark that it is difficult to propose a general method for choosing how to set $a_n$ and $b_n$ in practice.  We numerically investigated their effect in simulations. We apply similar rules to those used in \citet{LintonSongWhang10} and \citet{LeeSongWhang18}.  However, in general situations, where the variance of the objective function changes significantly over its domain, one would need to estimate the pointwise standard deviation of the function.  For example, \citet{FanPark10} provide pointwise asymptotic distributions for estimated dependency bound functions for $F_{X-Y}$. Nevertheless, we leave this extension for future research.

Assumption~\ref{A:sequences} is used to ensure consistency of the $\epsilon$-maximizer set estimators used in the bootstrap algorithm.  Assumption~\ref{A:bootstrapf} is a condensed version of Assumption 3 of~\citet{FangSantos19} to ensure that $r_n(f_n^* - f_n)$ behaves asymptotically like $r_n(f_n - f)$.  

\vspace{1em}
\noindent \textbf{Resampling routine to estimate the distribution of $r_n(\lambda_j(f_n) - \lambda_j(f))$}
\begin{enumerate}
  \item Estimate $\hat{\lambda}_{jn}'$ using sample data and formulas~\eqref{est1_def}-\eqref{est4_def} above.
\end{enumerate}
Then repeat steps \ref{resample12}-\ref{resample22} for $r = 1, \ldots R$:
\begin{enumerate}
    \setcounter{enumi}{1}
  \item \label{resample12} Use an exchangeable bootstrap to construct $r_n(f_n^* - f_n)$.
  \item \label{resample22} Calculate the resampled test statistic $\lambda^*_r = \hat{\lambda}_{jn}'(r_n(f_n^* - f_n))$ using the estimate from step 1.
\end{enumerate}
Finally,
\begin{enumerate}
    \setcounter{enumi}{3}
  \item Let $\hat{q}_{\lambda^{*}}(1-\alpha)$ be the $(1-\alpha)$-th sample quantile from the bootstrap distribution of $\{\lambda_r^*\}_{r=1}^R$, where $\alpha \in (0, 1)$ is the nominal size of the test. Reject the null hypothesis if $r_n \lambda_j(f_n)$ is larger than $\hat{q}_{\lambda^{*}}(1-\alpha)$.
\end{enumerate}

The consistency of this resampling procedure under the null hypothesis is summarized in the following theorem.  To discuss weak convergence it is easiest to use the space of bounded Lipschitz functions, which indicates weak convergence.  That is, defining $BL_1(\R) = \{g \in C_b(\R) : \sup_x |g(x)| \leq 1, \sup_{x \neq y} |g(x) - g(y)| \leq |x - y|\}$, $X_n$ converges weakly to $X$ if and only if $\sup_{g \in BL_1(\R)} |\ex{g(X_n)} - \ex{g(x)}| \rightarrow 0$ \citep[\S 1.12]{vanderVaartWellner96}.

\begin{thm} \label{thm:bootstrap_consistency_teststats}
  Under Assumptions~\ref{A:estimator}-\ref{A:bootstrapf}, if either $j \in \{1, 3\}$ and $P \in \calP_{00}^E$, or $j \in \{2, 4\}$ and $P \in \calP_{00}^I$, then for any $\epsilon > 0$,
  \begin{equation*}
    \limsup_{n \rightarrow \infty} \prob{ \sup_{g \in BL_1(\R)} \left| \ex{g \left( \hat{\lambda}_{jn}' \left( r_n (f_n^* - f_n) \right) \right) \big| \{Z_i\}_{i=1}^n} - \ex{g \left( \lambda_{jf}'(\calG_P) \right)} \right| > \epsilon } = 0.
  \end{equation*}
\end{thm}

We conclude this section by revisiting Examples \ref{ex:bounds}, \ref{ex:stoc_dominance}, and \ref{ex:legendre} and describing the resampling procedures in the corresponding cases.

\begin{excont}{ex:bounds}{Dependency bounds}
  Recall that tests using the statistic $\lambda(F)$ defined in~\eqref{dep_stat_def} may be inverted to find a uniform confidence band for $L_{X+Y}$ (here we focus on the lower dependency bound for the CDF $F_{X+Y}$ only).  Suppose that we observe two independent random samples, $\{X_i\}_{i=1}^{n_X}$ and $\{Y_i\}_{i=1}^{n_Y}$ and $\bbF_n = (\bbF_{Xn}, \bbF_{Yn})$ be their empirical CDFs.  Then
  \begin{equation} \label{twosided_ex}
    \lambda(\bbF_n) = \sup_{x \in \R} | \bbL_n(x) - L_0(x) |,
  \end{equation}
  where $\bbL_n(x) = \bbL_n(x, \bbF_n) = \max \left\{ \sup_{u \in \R} (\bbF_{Xn}(u) + \bbF_{Yn}(x - u)) - 1, 0 \right\}$.

  Standard conditions ensure that, with $n = n_X + n_Y$, $\sqrt{n}(\bbF_n - F) \cw \calG_F$, where $\calG_F$ is a Gaussian process in $(\ell^\infty(\R))^2$.  Under the hypothesis that $L_0$ is the true lower bound function, $\sqrt{n} \lambda(\bbF_n) \cw \lambda'_F(\calG_F)$, where $\lambda_F'$ is defined in~\eqref{cband_derivative}.  This derivative needs to be estimated as in~\eqref{est1_def}, estimating the set-valued map $U_F(x, \epsilon)$ with the plug-in estimate
  \begin{equation}
    U_{\bbF_n}(x, a_n) = \left\{ u \in \R : \max\{ \bbF_{Xn}(u) + \bbF_{Yn}(x - u), 0 \} \geq \bbL_n(x) - a_n \right\},
  \end{equation}
  where $a_n$ is a sequence that converges slowly to zero, and with an additional slowly-decreasing sequence $b_n$ that is used to determine the regions where $\Pi^+(\bbF_n)$ falls in the interval $(-b_n, b_n)$ or is above $b_n$ to estimate $\hat{X}_0$ and $\hat{X}_+$.  Labeling the estimated derivative $\hat{\lambda}_F'$, slight modifications of Corollary~\ref{cor:null_asymptotics} and Theorem~\ref{thm:bootstrap_consistency_teststats} imply that for all $q \geq 0$,
  \begin{equation*}
    \prob{ \hat{\lambda}_F' (\sqrt{n} (\bbF_n^* - \bbF_n)) \leq q \; \Big| \{X_i\}_{i=1}^{n_X}, \{Y_i\}_{i=1}^{n_Y}  } \cp \prob{ \lambda_F' (\calG_F) \leq q}.
  \end{equation*}
  We can find a critical value of the asymptotic distribution using the bootstrap by estimating
  \begin{equation*}
    \hat{q}_{\lambda^*}(1-\alpha) = \min \left\{ q : \prob{ \hat{\lambda}_F' (\sqrt{n} (\bbF_n^* - \bbF_n)) \leq c \; \Big| \{X_i\}_{i=1}^{n_X}, \{Y_i\}_{i=1}^{n_Y} } \geq 1 - \alpha \right\}
  \end{equation*}
  by simulation.  For $r = 1, \ldots R$, let $\lambda_r^* = \hat{\lambda}_F' (\sqrt{n} (\bbF_n^* - \bbF_n))$, and let $\hat{q}_{\lambda^*}(1-\alpha)$ be the $(1-\alpha)$-th quantile of the bootstrap sample $\{\lambda^*_r\}_{r=1}^R$.  Simulation evidence presented in the supplementary appendix verifies that the coverage probability of these intervals is accurate in moderate sample sizes by examining the empirical rejection probabilities of the associated tests.
\end{excont}

\begin{excont}{ex:stoc_dominance}{Stochastic dominance}
Suppose that three independent random samples $\{X_{0i}\}_{i=1}^n$, $\{X_{Ai}\}_{i=1}^n$ and $\{X_{Bi}\}_{i=1}^n$ are observed.  Let $\bbG_n = (\bbG_{0n}, \bbG_{An}, \bbG_{Bn})$ be their empirical distribution functions, and define the sample statistic
  \begin{equation} \label{eq:sd_test_stat}
    \hat{\Lambda}(\bbG_n) = \left( \int_\R \left[ \bbL_{An}(x) - \bbU_{Bn}(x) \right]_+^2 \ud m(x) \right)^{1/2},
  \end{equation}
  where $\bbL_{An}$ is the plug-in estimate of $L_A$ using $\bbG_{0n}$ and $\bbG_{An}$ and $\bbU_{Bn}$ is the plug-in estimate of $U_B$ using $\bbG_{0n}$ and $\bbG_{Bn}$.  Under standard conditions, with $n = n_0 + n_A + n_B$, $\sqrt{n}(\bbG_n - G) \cw \calG_G$, where $\calG_G$ is a Gaussian process in $(\ell^\infty(\R))^3$.  The theory above can be extended in a straightforward way to show that $\hat{\Lambda}(\sqrt{n} \bbG_n) \cw \Lambda_G'(\calG_G)$, where $\Lambda_G'(h)$ was defined in~\eqref{biglambdaprime_def}.  To estimate the distribution of $\Lambda_G'(\calG_G)$, some estimates of the derivative are required.  Given sequence $\{b_n\}$, let
  \begin{equation*}
    \hat{X}_0 = \{x \in \R: |\psi(\Pi^-(\bbG_0, \bbG_A))(x) - \psi(-\Pi^-(\bbG_0, \bbG_B))(x)| \leq b_n\},
  \end{equation*}
  and estimate the near-maximizers in $u$ for each dependency bound and all $x$ as in the previous example.  Estimate the distribution by calculating resampled statistics, for $r = 1, \ldots, R$,
  \begin{multline*}
    \Lambda^*_r = \sqrt{n} \Bigg( \int_{\hat{X}_0} \bigg[ \sup_{u \in U_{\Pi^-(\bbG_{0n}, \bbG_{An})}(x, a_n)} \Pi(\bbG^*_{0n} - \bbG_{0n}, \bbG^*_{An} - \bbG_{An})(u, x) \\
    + \sup_{u \in U_{-\Pi^-(\bbG_{0n}, \bbG_{Bn})}(x, a_n)} (-\Pi)(\bbG_{0n}^* - \bbG_{0n}, \bbG_{Bn}^* - \bbG_{Bn}) (u, x) \bigg]_+^2 \ud m(x) \Bigg)^{1/2}.
  \end{multline*}
  A test can be conducted by comparing $\sqrt{n}\hat{\Lambda}(\bbG_n)$ to the $(1-\alpha)$-th quantile of the bootstrap distribution $\{\Lambda^*_r\}_{r=1}^R$.  A simulation in the supplementary appendix illustrates the accurate size and power of this testing strategy.
\end{excont}

\begin{excont}{ex:legendre}{Cost and profit functions}
  Now consider estimating a confidence band for a conjugate function.  Suppose that we observe $\{X_i\}_{i=1}^n$ and have a parametric model $c(x, \theta)$, estimated using $c(x, \hat{\theta})$.  Supposing that $\theta \in \R^p$, this is a VC-subgraph class of functions and $\sqrt{n}(c(\cdot, \hat{\theta}) - c(\cdot, \theta)) \cw \calG_c$ where $\calG_c$ is a Gaussian process \citep[Example 19.7]{vanderVaart98}.  To estimate the directional derivative $\lambda'(\calG_c)$ where $\lambda'$ is defined in~\eqref{leg_derivative}, we need to estimate several quantities.  Let $\widehat{\check{\mu}} = \max_{p, x} (px - c(x, \hat{\theta}))$ and $\widehat{\check{\sigma}} = \max_p \min_x (c(x) - px)$.  For a slowly decreasing sequence $a_n$, decide whether $\widehat{\check{\mu}} + a_n < \widehat{\check{\sigma}}$, $|\widehat{\check{\mu}} - \widehat{\check{\sigma}}| \leq a_n$ or $\widehat{\check{\mu}} > \widehat{\check{\sigma}} + a_n$.  Given this determination, estimate the derivative $\check{\lambda}'(\calG_c)$ by $\widehat{\check{\sigma}}'(\calG_c^*)$, $\max\{ \widehat{\check{\mu}}'(\calG_c^*), \widehat{\check{\sigma}}'(\calG_c^*)\}$ or $\widehat{\check{\mu}}'(\calG_c^*)$ respectively, where $\calG_c^* = \sqrt{n}(c(\cdot, \hat{\theta}^*) - c(\cdot, \hat{\theta}))$.  Given $\check{q}_{\lambda^*}(1-\alpha/2)$, a quantile from the bootstrap sample $\{\check{\lambda}^*_r\}_{r=1}^R$ calculated as just described, we could estimate a symmetric level $(1-\alpha)$ uniform confidence band for $\mathcal{L}(c)$ using $\mathcal{L}(c(\cdot, \hat{\theta})) \pm \check{q}_{\lambda^*}(1-\alpha/2)$.
\end{excont}

\subsection{Local size control}
It is also of interest to examine how these tests behave under sequences of distributions local to distributions that satisfy the null hypothesis.  We consider sequences of local alternative distributions $\{P_n\}$ such that for each $n$, $\{Z_i\}_{i=1}^n$ are distributed according to $P_n$, and $P_n$ converges towards a limit $P$ that satisfies the null hypothesis.  To describe this process, for $t \geq 0$ define a path $t \mapsto P_t$, where $P_t$ is an element of the space of distribution functions $\calP$, such that
\begin{equation} \label{local_sqrt}
  \lim_{t \rightarrow 0} \int \left( ((\ud P_t)^{1/2} - (\ud P_0)^{1/2}) / t - \frac{1}{2} h (\ud P_0)^{1/2} \right)^2 = 0,
\end{equation}
where the score function $h \in L_2(F)$ satisfies $\ex{h} = 0$ and $P_0$ is a distribution that satisfies the null hypothesis.  The direction that the sequence approaches the null is described asymptotically by the score $h$.  We assume that by letting $t = c / r_n$ for $c \in \R$, we can parameterize distributions that are local to $P_0$ and for $t \geq 0$ denote $f(P_t)$ as the function $f$ under distribution $f(P_t)$ so that the unmarked $f$ described above can be rewritten $f = f(P_0)$.  See, e.g., \citet[Section 3.10.1]{vanderVaartWellner96} for more details.

The following assumption ensures that $f$ remains suitably regular under such local perturbations to the null distribution.
\begin{enumerate}[label=\textbf{A\arabic*.}, ref=\textbf{A\arabic*}]
    \setcounter{enumi}{3}
  \item \label{A:local_alts} For all $c \in \R$,
    \begin{enumerate}
      \item For given $c$, there exists some $f'(c) \in \ell^\infty(\gr A)$ such that $\|r_n (f(P_{c/r_n}) - f(P_0)) - f'(c)\|_\infty \rightarrow 0$, where $P_{c/r_n}$ satisfy~\eqref{local_sqrt}.
      \item $r_n(f_n - f(P_{c/r_n})) \cw \calG_{P_0}$ in $\ell^\infty(\gr A)$, where for each $n$, $\{Z_i\}_{i=1}^n \sim P_{c/r_n}$.
    \end{enumerate}
\end{enumerate}
Both parts of Assumption~\ref{A:local_alts} ensure that $f_n$ behave regularly as distributions drift towards $P_0 \in \calP_{00}^E$ or $P_0 \in \calP_{00}^I$.  These additional regularity conditions allow us to describe the size and local power properties of the test statistics.

\begin{thm} \label{thm:size_control}
  Under Assumptions~\ref{A:estimator}-\ref{A:local_alts}:
  \begin{enumerate}
    \item Suppose $j = 1$ or $3$, $P_0 \in \calP_{00}^E$ and for each $n$, the observations $\{Z_i\}_{i=1}^n$ are distributed according to $P_n = P_{c / r_n} \in \calP_{00}^E$.  Suppose that the CDF of $\lambda_{jf}'(\calG_{P_0})$ is continuous and increasing at $q_{1-\alpha}$, its $(1-\alpha)$-th quantile.  Then
      \begin{equation*}
        \limsup_{n \rightarrow \infty} P_n \left\{ r_n \lambda_j(f_n) > \hat{q}_{\lambda^*}(1-\alpha) \right\} \geq P_0 \left\{ \lambda_{jf}'( \calG_{P_0} + f'(c) ) > q_{1-\alpha} \right\}.
      \end{equation*}
      This holds with equality if $q_{1-\alpha}$ is a continuity point of the CDF of $\lambda_{jf}'( \calG_{P_0} + f'(c) )$.  If $c = 0$ then the limiting rejection probability is equal to $\alpha$.
    \item Suppose $j = 2$ or $4$, $P_0 \in \calP_{00}^I$ and for each $n$, the observations $\{Z_i\}_{i=1}^n$ are distributed according to $P_n = P_{c/r_n} \in \calP_0^I$.  Suppose that the CDF of $\lambda_{jf}'(\calG_{P_0})$ is continuous and increasing at its $(1-\alpha)$-th quantile.  Then
      \begin{equation*}
        \limsup_{n \rightarrow \infty} P_n \left\{ r_n \lambda_j(f_n) > \hat{q}_{\lambda^*}(1-\alpha) \right\} \leq \alpha.
      \end{equation*}
  \end{enumerate}
\end{thm}

Theorem~\ref{thm:size_control} shows that the size of tests can be controlled locally to the null region (and the nominal rejection probability matches the intended probability) in only some cases.  In particular, the tests for the null that $P_0 \in \calP_{00}^E$ cannot be shown to control local size without more information about the direction of the local alternatives.  This is because we must make assumptions about the underlying objective functions without being able to make similar assumptions about the corresponding value functions~--- note that Assumption~\ref{A:local_alts} is made with regard to $f(P)$, not $\psi(f(P))$.  To see where the problem lies we may take $\lambda_1$ as an example.  The result of Theorem~\ref{thm:size_control} shows that when $P_n \in \calP_{00}^E$ for each $n$,
\begin{equation*}
  \sqrt{r_n} \lambda_1(f_n) \cw \lambda_{1f}'(\calG_{P_0} + f'(c)). 
\end{equation*}
Although $P \in \calP_{00}^E$ implies $\lambda_f'(f'(c)) = 0$, as shown in the proof of Theorem~\ref{thm:size_control}, it does not imply that $f'(c) \equiv 0$.  For example, if $f$ are CDFs corresponding to a location-shift family of distributions, i.e., for all $c \in \R$, $G_{c}(x) = G_0(x - c)$ with differentiable densities that have square-integrable derivatives.  Then since the $h$ in~\eqref{local_sqrt} is $cg'_0(x) / g_0(x)$, the $f'(c)$ of assumption~\ref{A:local_alts} would be $-cg_0(x)$.  This non-zero derivative can cause problems when optimizing it interacts with the absolute value function.  Generally $|\sup_u (f(u) + g(u))| \not\leq |\sup_u f(u)| + |\sup_u g(u)|$ (take $f$ and $g$ strictly less than zero), and we do not achieve a simplification that would imply the probability on the right-hand side is less than or equal to $\alpha$.  We cannot assume regularity of $\psi(f)$ because this would require the existence of a derivative $\psi_f'(\cdot) \in \ell^\infty(X)$, but this derivative cannot generally exist.  It may be that size control could be shown for special cases (for example when theory implies a shape for the value function), but not in general.  On the other hand, for the same reason, the size of one-sided test statistics is as intended.  That is, when the value function is found through maximization, the positive-part map in the definitions of $\lambda_2$ and $\lambda_4$ interacts with it in a way that maintains size control.  In contrast, the negative-part map would not guarantee size control.  A stronger result in this vein will be shown in the next section.

The inequality in the second part of this theorem results from the one-sidedness of the test statistics $\lambda_2$ and $\lambda_4$.  This is related to a literature in econometrics on moment inequality testing.  Tests may exhibit size that is lower than nominal for local alternatives that are from the interior of the null region.  A few possible solutions to this problem have been proposed.  For example, one might evaluate the region where the moments appear to hold with equality, which leads to contact set estimates like in the bootstrap routine described above \citep{LintonSongWhang10}.  Alternatively, we may alter the reference distribution by shifting it in the regions where equality does not seem to hold \citep{AndrewsShi17}.

\subsection{Uniform size control}

Uniformity of inference procedures in the data distribution was introduced in~\citet{GineZinn91} and~\citet{SheehyWellner92}, and uniformity has become a topic of great interest in the econometrics literature (see, for example, \citet{AndrewsGuggenberger10}, \citet{LintonSongWhang10}, \citet{RomanoShaikh12} or \citet{HongLi18} for an application like ours).  Under stronger assumptions that ensure the underlying $f_n$ converge weakly to a limiting process uniformly over the set of possible data distributions, some of the above results can be extended from size control under local alternative distributions to size control that holds uniformly over the class of distributions that satisfy the null.  To ensure uniformity we assume the following regularity conditions are satisfied.

\begin{enumerate}[label=\textbf{A\arabic*.}, ref=\textbf{A\arabic*}]
    \setcounter{enumi}{4}
  \item \label{A:unif_convf} Suppose that $r_n (f_n - f) \cw \calG_P$ uniformly over $P \in \calP$:
    \begin{equation*}
      \limsup_{n \rightarrow \infty} \sup_{P \in \calP} \sup_{g \in BL_1(\ell^\infty(\gr A))} \left| \ex{g(r_n(f_n - f(P)))} - \ex{g(\calG_P)} \right| = 0.
    \end{equation*}
  \item \label{A:unif_tight} Suppose that the limiting process $\calG_P$ is tight uniformly over $P \in \calP$:
    \begin{equation*}
      \lim_{M \rightarrow \infty} \sup_{P \in \calP} \prob{ \|\calG_P\|_\infty \geq M } = 0.
    \end{equation*}
  \item \label{A:regular} Suppose that on $\calP_{00}^I$, the processes $\calG_P$ are regular for $\lambda_{2f}'$ or $\lambda_{4f}'$ in the following sense: for any $0 < \alpha < 1/2$, there exists a $\underline{q} > 0$ depending only on $\alpha$ such that
    \begin{equation*}
      \sup_{P \in \calP_{00}^I} \prob{ \lambda_{jf}'(\calG_P) < \underline{q} } < 1 - \alpha
    \end{equation*}
    and for any $s > 0$,
    \begin{equation*}
      \limsup_{\eta \rightarrow 0^+} \sup_{P \in \calP_{00}^I} \prob{ \left| \lambda_{jf}'(\calG_P) - s \right| \leq \eta } = 0.
    \end{equation*}
    In addition, suppose that over $\calP_0^I \backslash \calP_{00}^I$, the processes $\calG_P$ are regular for statistics related to the entire domain $X$ instead of $X_0$~--- letting $\tilde{\lambda}_{2f}'(h) = \lim_{\epsilon \rightarrow 0^+} \sup_{(u, x) \in U_f(X, \epsilon)} [h(u, x)]_+$ and $\tilde{\lambda}_{4f}'(h) = (\int_X \psi_f'(h)(x) \ud m(x))^{1/p}$, assume that the above two displays are satisfied with $\tilde{\lambda}_{jf}'(\calG_P)$ in the place of $\lambda_{jf}'(\calG_P)$.
  \item \label{A:unif_boot} Suppose that conditional on the observations $\{Z_i\}_{i=1}^n$, $r_n(f_n^* - f_n) \cw \calG_P$ uniformly over $P \in \calP$: for all $\epsilon > 0$,
    \begin{equation*}
      \limsup_{n \rightarrow \infty} \sup_{P \in \calP} \prob{ \sup_{g \in BL_1(\ell^\infty(\gr A))} \left| \ex{g(r_n(f_n^* - f_n)) | \{Z_i\}_{i=1}^n } - \ex{g(\calG_{P})} \right| > \epsilon } = 0.
    \end{equation*}
\end{enumerate}

The above assumptions help define the collection $\calP$ on which we may assert that inference is valid uniformly for $P \in \calP$.  Assumptions~\ref{A:unif_convf} and~\ref{A:unif_tight} require that the sample objective functions converge to well-behaved limits uniformly on $\calP$.  Assumption~\ref{A:regular} requires that also the limiting distributions of the sample statistics $r_n(\lambda_j(f_n) - \lambda_j(f))$ are regular enough over $\calP_0^I$ that their CDFs do not have extremely large atoms (i.e., extending above the $(1-\alpha)$-th quantile of the distribution) and are strictly increasing at the relevant critical value.  This assumption may be simplified when the $\calG_P$ are Gaussian since the functionals $\lambda_2$ and $\lambda_4$ are convex \citep[Theorem 11.1]{DavydovLifshitsSmorodina98}.  Assumption~\ref{A:unif_boot} requires that the bootstrap objective functions also converge uniformly, as the original sample functions do.  This last assumption may be implied by Assumption~\ref{A:unif_convf}, see Lemma~A.2 of~\citet{LintonSongWhang10}, for example.

\begin{thm} \label{thm:uniform_size_control}
  Under Assumptions~\ref{A:estimator}-\ref{A:bootstrapf} and~\ref{A:unif_convf}-\ref{A:unif_boot}, for $j = 2$ or $4$,
  \begin{equation*}
    \limsup_{n \rightarrow \infty} \sup_{P \in \calP_0^I} \prob{ r_n \lambda_j(f_n) > \hat{q}_{\lambda^*}(1-\alpha) } \leq \alpha.
  \end{equation*}
\end{thm}

Uniformity for $P \in \calP$ is maintained by the one-sided statistics when optimization matches the ``direction'' of the test.  That is, when the value function matches the positive-part map in $\lambda_2$ and $\lambda_4$, we have uniformity over $P \in \calP$, and analogously uniform inference would be possible for minimization and the negative-part map.  Uniformity can be lost when optimization and the direction of the test do not match, which may be the case with two-sided tests.

In the next section we illustrate the usefulness of our results by providing details on the construction of uniform confidence bands around bound functions for the CDF of a treatment effect distribution.  See the first section of the supplemental appendix for numerical simulations evaluating the finite-sample performance of the tests developed in Examples~\ref{ex:bounds} and~\ref{ex:stoc_dominance}.

\section{The treatment effect distribution of job training on wages}\label{sec:application}

This section illustrates the inference methods with an evaluation of a job training program.  We construct upper and lower bounds for both the distribution and quantile function of the treatment effects, confidence bands for these bound function estimates, and describe a few inference results. This application uses an experimental job training program data set from the National Supported Work (NSW) Program, which was first analyzed by \citet{LaLonde86} and later by many others, including \citet{HeckmanHotz89}, \citet{DehejiaWahba99}, \citet{SmithTodd01, SmithTodd05}, \citet{Imbens03}, and \citet{Firpo07}.

Recent studies in statistical inference for features of the treatment effects distribution in the presence of partial identification include, among others, \citet{FirpoRidder08, FirpoRidder19, FanPark09, FanPark10, FanWu10, FanPark12, FanShermanShum14, FanGuerreZhu17}. These studies have concentrated on distributions of finite-dimensional functionals of the distribution and quantile functions, including these functions themselves evaluated at a point.  Additional work includes, among others, \citet{GautierHoderlein11}, \citet{ChernozhukovLeeRosen13}, \citet{Kim14} and \citet{ChesherRosen15}.  Each of these papers provides pointwise inference methods for bounds on the distribution or quantile functions, and often for more complex objects.  We contribute to this literature by applying the general results in this paper to provide uniform inference methods for the bounds developed by Makarov and others, and hope that it may indicate the direction that pointwise inference for bounds in more involved models may be extended to be uniformly valid.

The data set we use is described in detail in \citet{LaLonde86}.  We use the publicly available subset of the NSW study used by \citet{DehejiaWahba99}. The program was designed as an experiment where applicants were randomly assigned into treatment. The treatment was work experience in a wide range of possible activities, such as learning to operating a restaurant or a child care center, for a period not exceeding 12 months. Eligible participants were targeted from recipients of Aid to Families With Dependent Children, former addicts, former offenders, and young school dropouts. The NSW data set consists of information on earnings and employment (outcome variables), whether treated or not.\footnote{The data set also contains background characteristics, such as education, ethnicity, age, and employment variables before treatment.  Nevertheless, since we only use the experimental part of the data we refrain from using this portion of the data.} We consider male workers only and focus on earnings in 1978 as the outcome variable of interest. There are a total of 445 observations, where 260 are control observations and 185 are treatment observations. Summary statistics for the two parts of the data are presented in Table \ref{tb:summary}.

\begin{table}[!ht]
\linespread{1.5}
\par
\begin{center}
{\small \setlength{\tabcolsep}{4pt}
\begin{tabular}{ccccccccccc}
\cline{1-11}
\multicolumn{1}{c}{} & \multicolumn{1}{c}{} & \multicolumn{4}{c}{Treatment Group} & \multicolumn{1}{c}{} & \multicolumn{4}{c}{Control Group}\\
\cline{3-6}\cline{8-11}
\multicolumn{1}{c}{} & \multicolumn{1}{c}{} & \multicolumn{1}{c}{Mean} & \multicolumn{1}{c}{Median} & \multicolumn{1}{c}{Min.} & \multicolumn{1}{c}{Max.} & \multicolumn{1}{c}{} & \multicolumn{1}{c}{Mean} & \multicolumn{1}{c}{Median} & \multicolumn{1}{c}{Min.} & \multicolumn{1}{c}{Max.}\\
\hline
Earnings (1978) &  & 6,349.1 & 4,232.3 & 0.0 & 60,307.9 &   & 4,554.8 & 3,138.8 & 0.0 & 39,483.5 \\
&  & (7,867.4) &  &  &  &   & (5,483.8) &  &  &  \\
\hline
\end{tabular}
}
\end{center}
  \caption{Summary statistics for the experimental National Supported Work (NSW) program data.}
\label{tb:summary}
\end{table}

To provide a more complete overview of the data, we also compute the empirical CDFs of the treatment and control groups in Figure~\ref{fig:lalonde_cdfs}. From this figure we note that the empirical treatment CDF stochastically dominates the empirical control CDF, and that there are a large number of zeros in each sample. In particular, $\bbF_{tr,n}(0) \approx 0.24$ and $\bbF_{co,n}(0) \approx 0.35$.

\begin{figure}[!ht]
\begin{center}
  \includegraphics[height=3in]{./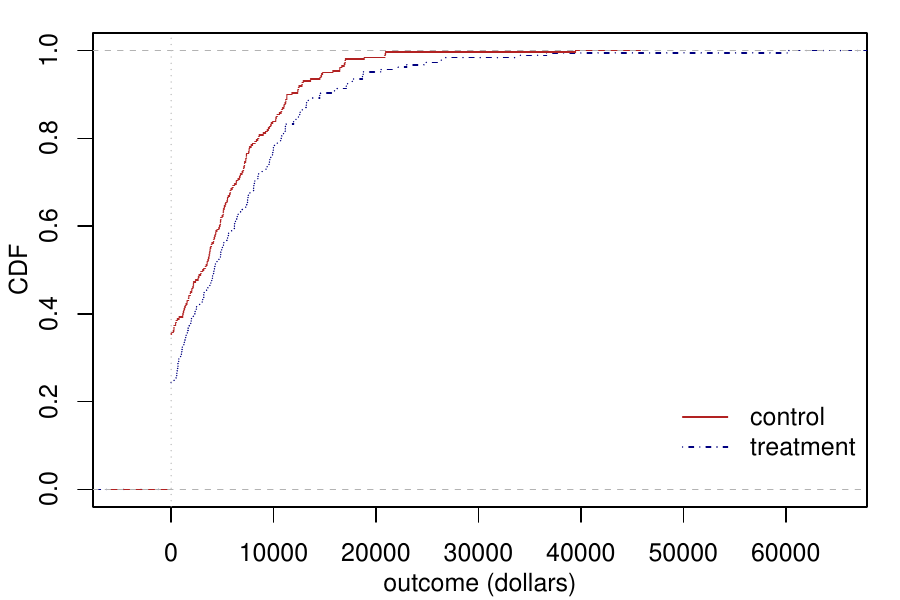}
\end{center}
  \caption{Empirical CDFs of treatment and control observations in the experimental NSW data.  There are 260 control group observations and 185 treatment group observations, many outcomes equal to zero and the treatment group outcomes stochastically dominate the control group outcomes at first order.} \label{fig:lalonde_cdfs}
\end{figure}

Suppose a binary treatment is independent of two potential outcomes $(X_{co}, X_{tr})$, where $X_{co}$ denotes outcomes under a control regime and $X_{tr}$ denotes outcomes under a treatment, and $X_{co}$ and $X_{tr}$ have marginal distribution functions $F_{co}$ and $F_{tr}$ respectively.  Suppose that interest is in the distribution of the treatment effect $\Delta = X_{tr} - X_{co}$ but we are unwilling to make any assumptions regarding the dependence between $X_{co}$ and $X_{tr}$.  In this section we study the relationship between the identifiable functions $F_{co}$ and $F_{tr}$ and functions that bound the distribution function of the unobservable random variable $\Delta$.

$F_\Delta(\cdot)$ is not point-identified because the full bivariate distribution of $(X_{co}, X_{tr})$ is unidentified and the analyst has no knowledge of the dependence between potential outcomes.  However, $F_\Delta$ can be bounded.  Suppose we observe samples $\{X_{ki}\}_{i=1}^{n_k}$ for $k \in \{co, tr\}$.  Define the empirical lower and upper bound functions
\begin{align}
  \bbL_{\Delta,n}(x) &= \sup_{u \in \R} \{ \bbF_{tr,n}(u) - \bbF_{co,n}(u - x) \} \label{eq:SDLB} \\
  \bbU_{\Delta,n}(x) &= \inf_{u \in \R} \{ 1 + \bbF_{tr,n}(u) - \bbF_{co,n}(u - x) \}. \label{eq:SDUB}
\end{align}
These are plug-in estimates of analogous population bounds $L_\Delta$ and $U_\Delta$, and are similar to Example~\ref{ex:bounds} because $F_\Delta = F_{(X_{tr}-X_{co})}$.  We make the following assumptions on the observed samples of treatment and control observations.

\begin{enumerate}[label=\textbf{B\arabic*.}, ref=\textbf{B\arabic*}]
  \item \label{assume:first} The observations $\{X_{co,i}\}_{i=1}^{n_{co}}$ and $\{X_{tr,i}\}_{i=1}^{n_{tr}}$ are iid samples and independent of each other and are distributed with marginal distribution functions $F_{co}$ and $F_{tr}$ respectively.  Refer to the pair of distribution functions and their empirical distribution function estimates as $F = (F_{co}, F_{tr})$ and $\bbF_n = (\bbF_{co,n}, \bbF_{tr,n})$.
  \item \label{assume:last} The sample sizes $n_{co}$ and $n_{tr}$ increase in such a way that $n_k / (n_{co} + n_{tr}) \rightarrow \nu_k$ as $n_{co}, n_{tr} \rightarrow \infty$, where $0 < \nu_k < 1$ for $k \in \{co, tr\}$.  Define $n = n_{co} + n_{tr}$.
\end{enumerate}

Under Assumptions~\ref{assume:first} and~\ref{assume:last}, it is a standard result \citep[Example 19.6]{vanderVaart98} that for $k \in \{co, tr\}$, $\sqrt{n_k} (\bbF_{kn} - F_k) \cw \calG_k$, where $\calG_{co}$ and $\calG_{tr}$ are independent $F_{co}$- and $F_{tr}$-Brownian bridges, that is, mean-zero Gaussian processes with covariance functions $\rho_k(x, y) = F_k(x \wedge y) - F_k(x) F_k(y)$.  This implies in turn that \( \sqrt{n} (\bbF_n - F) \cw \calG_F = ( \calG_{co} / \sqrt{\nu_{co}}, \calG_{tr} / \sqrt{\nu_{tr}} )\), where $\calG_F$ is a mean-zero Gaussian process with covariance process $\rho_F(x, y) = \text{diag}\{\rho_k(x, y) / \nu_k\}$.

Now we focus on the calculation of a uniform confidence band for $L$ only.  Because the bootstrap algorithm was described previously we only verify that the regularity conditions for this plan hold.  Assumptions~\ref{assume:first} and~\ref{assume:last}, along with the above discussion of the weak convergence of $\sqrt{n}(\bbF_n - F)$ imply that Assumption~\ref{A:estimator} is satisfied.  The class of functions $\{ (I(X \leq x), I(Y \leq y), x,y \in \R \}$ is uniform Donsker and the sequences $a_n$ and $b_n$ can be chosen to satisfy Assumption~\ref{A:sequences} by making them converge to zero more slowly than $n^{-1/2}$.  Assuming the weights are independent of the observations, assumption~\ref{A:bootstrapf} is satisfied by Lemma A.2 of \citet{LintonSongWhang10}, which implies that the bootstrap algorithm described above is consistent, as described in Theorem~\ref{thm:bootstrap_consistency_teststats}.  It is also straightforward to verify that, under the high-level assumption~\eqref{local_sqrt}, both parts of Assumption~\ref{A:local_alts} are satisfied~--- in the language of the assumption, $f(P_{c/\sqrt{n}})$ are the pair $(F_{co}^n, F_{tr}^n)$ under the local probability distribution $P_{c/\sqrt{n}}$, and $f' = (\int_{-\infty}^\cdot h_{co}^c, \int_{-\infty}^\cdot h_{tr}^c)$) for direction $h^c$ indexed by $c$ and $\sqrt{n}(\bbF_n - F^n) \cw \calG_F$ \citep[Theorem 3.10.12]{vanderVaartWellner96}.

The main objective is to provide uniform confidence bands for the CDF of the treatment effects distribution.  We calculate the lower and upper bounds for the distribution function using the control and treatment samples as in equations \eqref{eq:SDLB}--\eqref{eq:SDUB}.  The bounds are computed on a grid with increments of \$100 dollars along the range of the common support of the bounds, which is roughly from -\$40,000 to \$60,000. We focus on the region between -\$40,000 and \$40,000, which contains almost all the observations (there is a single \$60K observation in the treated sample). The results are presented in Figure~\ref{fig:uniform_bounds_lalonde} and given by the black solid lines in the picture. The most prominent feature is that, as expected, the upper bound for the CDF of treatment effects stochastically dominates the corresponding lower bound.

Next, we compute the uniform confidence bands as described in the text.  They are shown in Figure~\ref{fig:uniform_bounds_lalonde} as the dashed lines around the corresponding solid lines.
\begin{figure}[!ht]
\begin{center}
  \includegraphics[height=3in]{./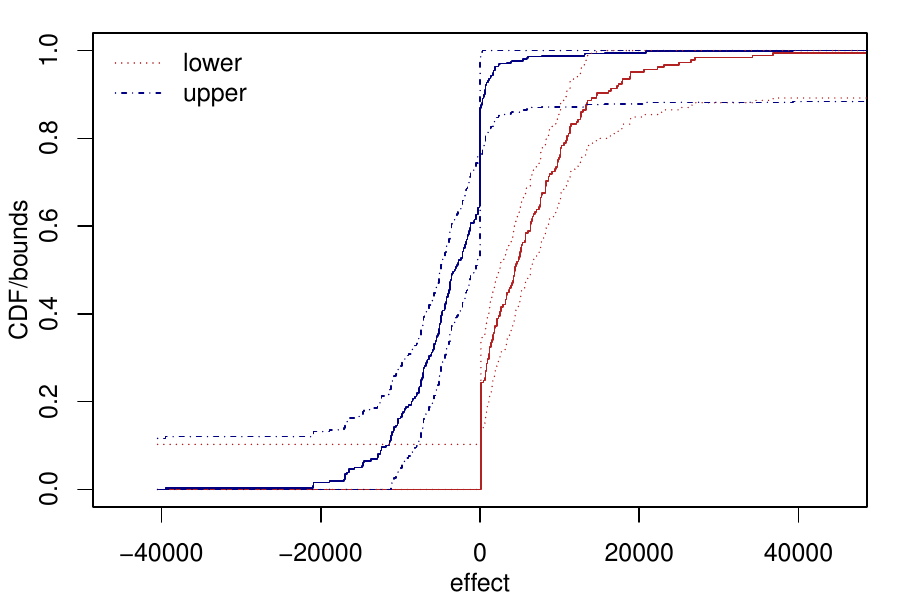}
\end{center}
  \caption{Bound functions and their uniform confidence bands.  These confidence bands were constructed by inverting Kolmogorov-Smirnov-type test statistics as described in the text.} \label{fig:uniform_bounds_lalonde}
\end{figure}
Due to the large number of zero outcomes in both samples, these bounds have some interesting features that we  discuss further. First, we note that for any $\epsilon$ greater than zero but smaller than the next smallest outcome (about \$45), $\bbF_{tr,n}(0) - \bbF_{co,n}(-\epsilon) = 0.24$, which explains the jump in the lower bound estimate near zero (it is really for a point in the grid just above zero).  Likewise, for the same $\epsilon$, $\bbF_{tr,n}(-\epsilon) - \bbF_{co,n}(0) = -0.35$, which explains the jump in the upper bound just below zero.  Without these point masses at zero, both bounds would more smoothly tend towards 0 or 1.  Second, the point masses at zero imply another feature of the bounds that can be discerned in the picture.  The upper bound to the left of 0 is the same as $1 - \bbF_{co,n}(-x)$ and the lower bound to the right of zero is the same as $\bbF_{tr,n}(x)$.  Taking the lower bound as an example, for each $x > 0$ find the closest observation from the control sample $y_{co,i^*}$, and set $x^*(x) = X_{co,i^*} + \epsilon$, leading to the supremum $\bbF_{tr,n}(X_{co,i} + \epsilon)$ at every point where $X_{co,i} + \epsilon < X_{tr,j}$ for all $j$ in the treated sample.  It is identical to the empirical treatment CDF for the entire positive part of the support because the treatment first-order stochastically dominates the control.  The situation would be different if there were a jump in the empirical control CDF at least as large as the jump in the empirical treatment CDF at zero.  Because the opposite is the case for the upper bound, it does change slightly above the zero mark, tending from $1 - \bbF_{tr,n}(0) + \bbF_{co,n}(0)$ to 1 as $x$ goes from 0 to the right.

We can also use the confidence bands for the bound functions to construct a confidence band for the true distribution function of treatment effects.  This is shown in Figure~\ref{fig:cdf_confband_lalonde}.  A $1 - \alpha$ confidence band can be constructed by using the upper $1 - \alpha / 2$ limit of the upper bound confidence band and the lower $\alpha / 2$ limit of the lower confidence band.  This band is a uniform asymptotic confidence band for the true CDF, and uniform over correlation between the potential outcomes between samples.  In other words, if $\mathcal{P}$ is the collection of bivariate distributions that have marginal distributions $F_{tr}$ and $F_{co}$, then
\begin{equation*}
  \liminf_{n \rightarrow \infty} \inf_{\textnormal{P} \in \mathcal{P}} \prob{F_\Delta(x) \in CB(x) \text{ for all } x} \geq 1 - \alpha.
\end{equation*}

This confidence band is likely conservative, since
\begin{align*}
  \text{P} \big\{ \exists x &: F_\Delta(x) \not\in CB(x) \big\} \\ {} &= \prob{ \{\exists x : F_\Delta(x) < \bbL_{\Delta,n}(x) - q^*_{L, 1-\alpha/2}/\sqrt{n} \} \cup \{\exists x : F_\Delta(x) > \bbU_{\Delta,n}(x) + q^*_{U, 1-\alpha/2} / \sqrt{n}\} } \\
  {} &\leq \prob{ \{\exists x : L_\Delta(x) < \bbL_{\Delta,n}(x) - q^*_{L, 1-\alpha/2}/\sqrt{n} \} \cup \{\exists x : U_\Delta(x) > \bbU_{\Delta,n}(x) + q^*_{U, 1-\alpha/2}/\sqrt{n} \} }.
\end{align*}
See \citet{Kim14} and \citet{FirpoRidder19} for a more thorough discussion of the sense in which these bounds are not uniformly sharp for the treatment effect distribution function.  We leave more sophisticated, potentially tighter confidence bands for future research.  Note that the technique of \citet{ImbensManski04} cannot be used to tighten these bounds, because the parameter, a function, could violate the null hypothesis at both sides of the confidence band.

\begin{figure}[!ht]
\begin{center}
  \includegraphics[height=3in]{./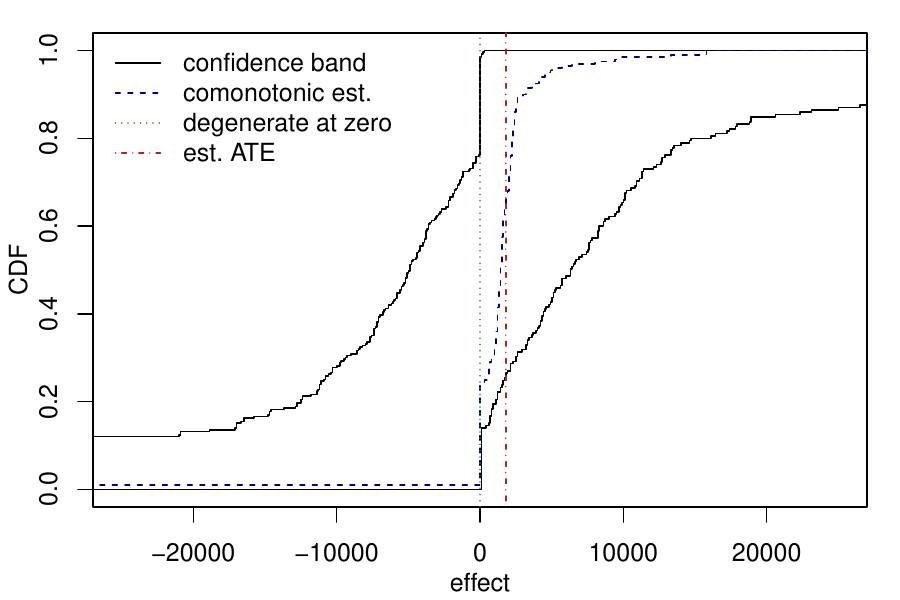}
\end{center}
  \caption{A uniform confidence band for the true treatment effect CDF.  This bound is constructed by using the lower $\alpha/2$ limit of the lower bound and the upper $\alpha/2$ limit of the upper bound.  A few other estimates of the treatment effect distribution are made: the vertical line at zero represents an informal hypothesis test that the effect is zero across the entire distribution, and is rejected (see the calculations at the beginning of this section about the nontrivial parts of the bounds to see why this is so).  The vertical line to the right of zero is the average treatment effect, and it can be seen that this average ignores some variation in treatment effect outcomes.  The dotted curve in between the bounds is the same as the (inverted) quantile treatment effect, which is equivalent to assuming rank invariance between potential treatment and control outcomes.} \label{fig:cdf_confband_lalonde}
\end{figure}

We plot some other features in this figure for context.  First, the dotted vertical line is positioned at $y = 0$, and it can be seen that we (just) reject the null hypothesis $H_0: \prob{\Delta = 0} = 1$.  This hypothesis is closest to non-rejection, and it is clearer that one should reject the null that the treatment effect distribution is degenerate at any other point besides zero.  This supports the notion that treatment effect heterogeneity is an important feature of these observations, especially because this band is completely agnostic about the form of the joint distribution.  On the other hand, by examining the bands at horizontal levels, it can be seen that for the median effect and a wide interval in the center of the distribution, the hypothesis of zero treatment effect cannot be rejected (although these are uniform bounds and not tests of individual quantile levels).

The final feature in the figure is the dashed curve that represents the estimate that one would make under the assumption of comonotonicity (or rank invariance)~--- the assumption that, had an individual been moved from the treatment to the control group, their rank in the control would be the same as their observed rank.  Under this strong assumption the quantile treatment effects are the quantiles of the treatment distribution and they can be inverted to make an estimate.  Clearly, the estimate under this assumption is just one point-identified treatment effect distribution function of many.
\begin{figure}[!ht]
\begin{center}
  \includegraphics[height=3in]{./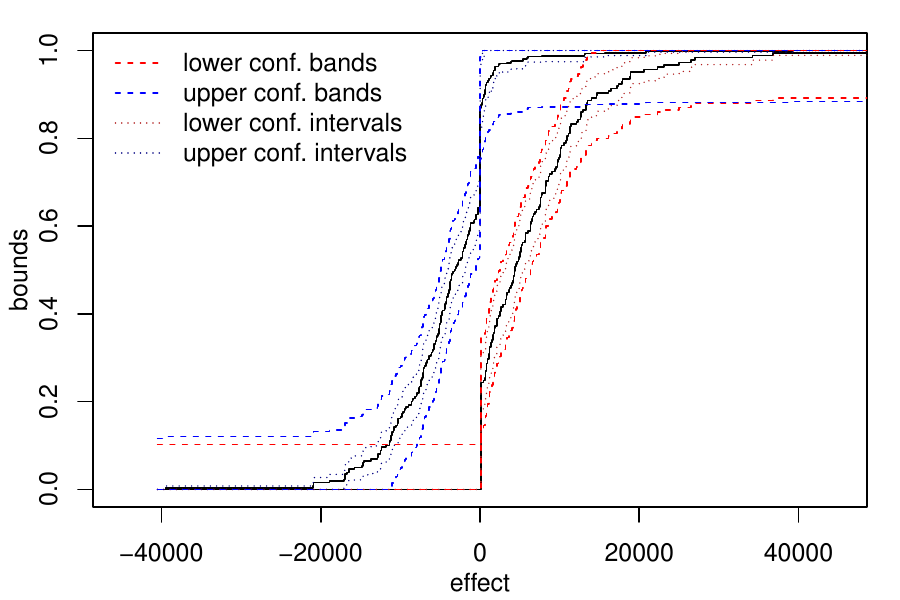}
\end{center}
  \caption{Uniform confidence bands for the CDF of treatment effects plotted with a collection of confidence intervals at treatment outcome levels.  The confidence intervals are narrower because they represent confidence statements at each treatment effect individually, while the uniform bands give confidence statements about where the entire bound function lies.  Pointwise confidence intervals were calculated using the method proposed in \citet{FanPark10}.} \label{fig:bounds_compare}
\end{figure}

Finally, to provide context for the uniform confidence band results within the literature on inference for bounds, we compare the proposed uniform bands to the pointwise confidence intervals suggested by \citet{FanPark10}. We used \citet{BickelSakov08}'s automatic procedure to choose subsample size and constructed confidence intervals for each individual point in the grid of the treatment effect support. This collection of pointwise confidence intervals are plotted along with uniform confidence bands in Figure~\ref{fig:bounds_compare}.  The results show that the uniform bands are farther from the bound estimates than the set of pointwise confidence intervals.

\section{Conclusion}\label{sec:conclusion}

This paper develops uniform statistical inference methods for optimal value functions, that is, functions constructed by optimizing an objective function in one argument over all values of the remaining arguments.  Value functions can be seen as a nonlinear transformation of the objective function.  The map from objective function to value function is not Hadamard differentiable, but statistics used to conduct uniform inference are Hadamard directionally differentiable.  We establish the asymptotic properties of nonparametric plug-in estimators of these uniform test statistics and develop a resampling technique to conduct practical inference.  Examples involving dependency bounds for treatment effects distributions are used for illustration.  Finally, we provide an application to the evaluation of a job training program, estimating a conservative uniform confidence band for the distribution function of the program's treatment effect without making any assumptions about the dependence between potential outcomes.

\section*{Acknowledgements}
We would like to thank Timothy Christensen, Zheng Fang, Hidehiko Ichimura, Lynda Khalaf, Alexandre Poirier, Pedro H.C. Sant'Anna, Andres Santos, Rami Tabri, Brennan Thompson, Tiemen Woutersen, and seminar participants at the University of Kentucky and New York Camp Econometrics XIV for helpful comments and discussions.  We would also like to thank the referees who helped improve the paper, especially the comments of one referee that improved subsection 4.4 considerably.  All the remaining errors are ours.  This research was enabled in part by support from the Shared Hierarchical Academic Research Computing Network (SHARCNET: \url{https://www.sharcnet.ca}) and Compute Canada (\url{http://www.computecanada.ca}).  Code used to implement the methods described in this paper and to replicate the simulations can be found at~\url{https://github.com/tmparker/uniform_inference_for_value_functions}.

\newpage
\appendix
\setcounter{equation}{0}
\renewcommand{\theequation}{S.\arabic{equation}}

\begin{center}
  { \Large \bf Supplementary appendix to ``Uniform inference for value functions''}
\end{center}

This supplemental appendix contains the results of simulation experiments designed to test the size and power of tests described in the main text, and proofs of theorems and lemmas stated in the main text.

\section{Simulation study} \label{sec:bounds}
We investigate the finite-sample behavior of the tests using examples A and B the main text.  Specifically, we consider uniform inference for a lower bound function $L_{X+Y}$ for the distribution function $F_{X+Y}$ without knowledge of the dependence between $X$ and $Y$ by looking at empirical rejection probabilities for a two-sided test that $L_{X+Y} = L_0$ as in Example~\ref{ex:bounds}. Then we investigate the performance of one-sided testing procedures through the stochastic dominance test as outlined in Example~\ref{ex:stoc_dominance}.  Code to reproduce these simulation experiments may be found through the third author's personal website.\footnote{\url{https://github.com/tmparker/uniform_inference_for_value_functions}.}  In all cases, we test size and power in tests with nominal size $5\%$ against local alternatives for samples of size 100, 500 and 1000 with respectively 499, 999 and 1999 bootstrap repetitions for each sample size, and use 1,000 simulation repetitions to calculate empirical rejection probabilities.

\subsection{Tuning parameter selection} \label{sec:tune}

In this experiment, simulated observations are normally distributed and we shift the location of one distribution to examine size and power properties of tests using statistic~\eqref{twosided_ex} from Example~\ref{ex:bounds}.  The bound function $L_{X+Y}$ was estimated using sample data and standard empirical distribution functions.  The bounds must be equal to zero for small enough arguments and equal to one for large enough arguments, and are monotonically increasing in between.  However, $\bbL_n$ may change at all possible $X_i - Y_j$ combinations, and computing the function on all $n_X \times n_Y$ points could be computationally prohibitive.  Therefore we compute the bound functions on a reasonably fine grid $\{x_k\}_{k=1}^K$ (we used increments of $0.05$ in the experiment).

In order to make calculations as efficient as possible, it is helpful to know some features of the support of the bounds given sample data, where the support of a bound is the region where it is strictly inside the unit interval.  The empirical lower bound at a given $x$ may be written as the maximum of the function $\bbF_{Xn}(\cdot) - \bbF_{(x - Y)n}(\cdot)$, which takes steps of size $+1/n_X$ at sample observations $\{X_i\}_{i=1}^{n_X}$ and, given a value of $x$, steps of size $-1/n_Y$ at shifted observations $\{x - Y_j\}_{hj=1}^{n_Y}$.  This function is not right-continuous, but this is not a problem when $Y$ is distributed continuously and calculations are done on a grid.  The upper endpoint of the support of the lower bound is the smallest value for which it is equal to 1.  If for some (large) $x$, $x - Y_j \geq X_i$ for all $i$ and $j$, then the shape of $\bbF_{Xn}(u) - \bbF_{(x - Y)n}(u)$ rises monotonically to one, then falls, as $u$ increases.  The $x$ that satisfy this condition are $x \geq \max_{i,j} \{X_i + Y_j\} = \max_i X_i + \max_j Y_j$.  To find the minimum of the support, the maximum value of $x$ such that $\bbL_n(x) = 0$, note that the function $\bbF_{Xn}(u) - \bbF_{(x - Y)n}(u)$ is always equal to zero for some $u$, but zero is the maximum only when $\bbF_{Xn}$ first-order stochastically dominates $\bbF_{(x - Y)n}$ and the functions are equal at at least one argument.  Therefore the smallest $x$ in the support of $\bbL_n$ is the smallest $x$ that shifts the $\bbF_{(x-Y)n}$ distribution function enough so that it is dominated by $\bbF_{Xn}$.  This is can be estimated by computing the minimum that is, $\min_k \{\hat{Q}_{Xn}(\tau_k) - \hat{Q}_{(-Y)n}(\tau_k)\}$ over a common set of quantiles $\{\tau_k\}_k$.  Similar logic can also be applied to restrict the size of the grid to as small a search area as possible for the one-sided test experiment below.

We test the null hypothesis that the lower bound of the treatment effect distribution corresponds to the bound associated with two standard normal marginal distributions, which is $L_0(x) = 2 \Phi(x / 2) - 1$ for $x > 0$ and zero otherwise, where $\Phi$ is the CDF of the standard normal distribution.

We used this design to choose a sequence $\{a_n\}$ used to estimate the set of $\epsilon$-maximizers in the estimation of the derivative $\psi'$ and $\{b_n\}$ used to estimate contact sets in one-sided derivative estimates.  We look for the tuning parameters $a_n$ and $b_n$ that provide size as close as possible to the nominal 5\% rejection probability.  We chose among $a_n = c_a \sqrt{\log(\log (n)) / n}$ and $b_n = c_b \log (\log (n)) / \sqrt{n}$ where $n = n_0 + n_1$ and $c_a, c_b \in \{0.5, 1.0, 1.5, \ldots, 5\}$.  Empirical rejection probabilities are reported in Tables~\ref{size_05_tab} and~\ref{size_10_tab}.

\begin{table}[!tbp]
{\footnotesize
\begin{center}
\begin{tabular}{lrrrrrrrrrr}
\hline\hline
\multicolumn{1}{l}{tab}&\multicolumn{1}{c}{$c_a$ = 0.5}&\multicolumn{1}{c}{$c_a$ = 1}&\multicolumn{1}{c}{$c_a$ = 1.5}&\multicolumn{1}{c}{$c_a$ = 2}&\multicolumn{1}{c}{$c_a$ = 2.5}&\multicolumn{1}{c}{$c_a$ = 3}&\multicolumn{1}{c}{$c_a$ = 3.5}&\multicolumn{1}{c}{$c_a$ = 4}&\multicolumn{1}{c}{$c_a$ = 4.5}&\multicolumn{1}{c}{$c_a$ = 5}\tabularnewline
\hline
{\bfseries s. size 100}&&&&&&&&&&\tabularnewline
~~$c_b$ = 0.5&$ 0.8$&$-1.0$&$-1.0$&$-1.3$&$-0.8$&$-1.0$&$-1.4$&$-2.2$&$-1.6$&$-1.8$\tabularnewline
~~$c_b$ = 1&$ 0.8$&$ 0.5$&$-0.8$&$-1.8$&$-1.3$&$-1.9$&$-1.6$&$-2.0$&$-1.4$&$-1.5$\tabularnewline
~~$c_b$ = 1.5&$ 1.6$&$-1.3$&$-1.3$&$-1.7$&$-2.0$&$-2.2$&$-1.7$&$-2.6$&$-1.9$&$-2.1$\tabularnewline
~~$c_b$ = 2&$ 0.6$&$-0.5$&$ 0.2$&$-1.2$&$-2.1$&$-2.0$&$-2.0$&$-0.9$&$-1.2$&$-2.7$\tabularnewline
~~$c_b$ = 2.5&$ 0.7$&$-0.5$&$-1.7$&$-1.5$&$-1.3$&$-2.2$&$-1.8$&$-2.0$&$-1.3$&$-2.4$\tabularnewline
~~$c_b$ = 3&$ 2.0$&$ 1.4$&$-1.7$&$-0.1$&$-1.0$&$-2.2$&$-1.6$&$-2.3$&$-1.7$&$-2.0$\tabularnewline
~~$c_b$ = 3.5&$ 1.2$&$-0.7$&$ 0.1$&$-1.3$&$-0.8$&$-1.4$&$-1.5$&$-1.3$&$-2.0$&$-1.6$\tabularnewline
~~$c_b$ = 4&$ 0.6$&$-1.5$&$-0.3$&$-1.5$&$-2.0$&$-1.9$&$-1.6$&$-2.7$&$-2.0$&$-2.2$\tabularnewline
~~$c_b$ = 4.5&$ 1.0$&$-1.0$&$-1.5$&$-1.4$&$-2.6$&$-2.0$&$-1.8$&$-1.5$&$-2.2$&$-1.9$\tabularnewline
~~$c_b$ = 5&$ 0.4$&$-0.5$&$-0.8$&$-2.4$&$-1.7$&$-0.6$&$-1.7$&$-1.3$&$-1.4$&$-2.8$\tabularnewline
\hline
{\bfseries s. size 500}&&&&&&&&&&\tabularnewline
~~$c_b$ = 0.5&$ 1.4$&$-0.4$&$-1.1$&$-2.1$&$-1.7$&$-2.0$&$-0.6$&$-2.4$&$-2.5$&$-1.7$\tabularnewline
~~$c_b$ = 1&$ 0.4$&$-1.2$&$-0.9$&$-1.1$&$-1.6$&$-1.9$&$-1.8$&$-2.3$&$-1.5$&$-2.4$\tabularnewline
~~$c_b$ = 1.5&$ 1.2$&$-1.8$&$-2.5$&$-1.4$&$-1.1$&$-2.5$&$-1.6$&$-1.4$&$-2.1$&$-1.2$\tabularnewline
~~$c_b$ = 2&$-0.4$&$-1.5$&$-1.8$&$-2.3$&$-2.4$&$-2.8$&$-2.3$&$-2.5$&$-1.8$&$-3.0$\tabularnewline
~~$c_b$ = 2.5&$ 0.9$&$-0.4$&$-1.9$&$-2.4$&$-2.2$&$-1.2$&$-2.4$&$-1.9$&$-2.1$&$-2.2$\tabularnewline
~~$c_b$ = 3&$ 0.5$&$-1.4$&$-1.6$&$-2.6$&$-1.5$&$-2.4$&$-1.8$&$-2.1$&$-2.5$&$-1.8$\tabularnewline
~~$c_b$ = 3.5&$-0.1$&$-0.3$&$-1.3$&$-1.7$&$-1.7$&$-1.5$&$-1.4$&$-1.2$&$-1.8$&$-2.1$\tabularnewline
~~$c_b$ = 4&$ 1.1$&$-0.8$&$-1.4$&$-1.7$&$-1.6$&$-2.6$&$-1.6$&$-2.2$&$-1.8$&$-1.5$\tabularnewline
~~$c_b$ = 4.5&$ 0.7$&$-2.1$&$-0.4$&$-1.7$&$-2.2$&$-1.7$&$-3.0$&$-2.4$&$-1.2$&$-2.7$\tabularnewline
~~$c_b$ = 5&$ 0.5$&$-1.2$&$-0.7$&$-1.4$&$-2.4$&$-2.6$&$-1.8$&$-2.5$&$-1.7$&$-2.4$\tabularnewline
\hline
{\bfseries s. size 1000}&&&&&&&&&&\tabularnewline
~~$c_b$ = 0.5&$ 0.0$&$-1.5$&$-0.8$&$-1.6$&$-1.8$&$-2.3$&$-2.3$&$-2.4$&$-1.7$&$-2.5$\tabularnewline
~~$c_b$ = 1&$ 0.3$&$-0.4$&$-0.9$&$-2.3$&$-2.4$&$-1.5$&$-1.7$&$-1.1$&$-1.3$&$-1.9$\tabularnewline
~~$c_b$ = 1.5&$-0.9$&$-1.2$&$-1.3$&$-1.5$&$-2.1$&$-1.7$&$-1.8$&$-2.4$&$-1.6$&$-2.6$\tabularnewline
~~$c_b$ = 2&$-1.7$&$-0.8$&$-2.3$&$-0.8$&$-2.6$&$-1.7$&$-0.9$&$-2.4$&$-2.6$&$-2.4$\tabularnewline
~~$c_b$ = 2.5&$ 0.7$&$-1.8$&$-1.4$&$-1.9$&$-2.9$&$-2.1$&$-1.8$&$-1.1$&$-2.8$&$-2.5$\tabularnewline
~~$c_b$ = 3&$ 0.1$&$-1.2$&$-1.8$&$-2.2$&$-2.1$&$-1.9$&$-2.2$&$-2.1$&$-1.7$&$-1.6$\tabularnewline
~~$c_b$ = 3.5&$-0.1$&$-1.4$&$-1.2$&$-1.2$&$-0.8$&$-1.9$&$-2.5$&$-2.8$&$-1.2$&$-2.0$\tabularnewline
~~$c_b$ = 4&$-0.1$&$-1.5$&$-1.8$&$-2.2$&$-1.2$&$-2.2$&$-1.8$&$-1.7$&$-2.4$&$-2.0$\tabularnewline
~~$c_b$ = 4.5&$ 1.1$&$-1.2$&$-2.3$&$-2.1$&$-1.5$&$-0.7$&$-3.2$&$-2.0$&$-2.1$&$-1.6$\tabularnewline
~~$c_b$ = 5&$ 0.0$&$-2.0$&$-1.4$&$-3.1$&$-2.1$&$-1.6$&$-2.4$&$-2.4$&$-2.2$&$-1.9$\tabularnewline
\hline
\end{tabular}
\caption{Empirical minus nominal 5\% rejection frequency for a test that $L_{X+Y} \equiv L_0$ for known $L_0$.  Numbers in the table closer to zero are better.  Rows describe tuning parameters for contact set estimation and columns for epsilon-maximizer estimation.\label{size_05_tab}}\end{center}}
\end{table}

\begin{table}[!tbp]
{\footnotesize
\begin{center}
\begin{tabular}{lrrrrrrrrrr}
\hline\hline
\multicolumn{1}{l}{tab}&\multicolumn{1}{c}{$c_a$ = 0.5}&\multicolumn{1}{c}{$c_a$ = 1}&\multicolumn{1}{c}{$c_a$ = 1.5}&\multicolumn{1}{c}{$c_a$ = 2}&\multicolumn{1}{c}{$c_a$ = 2.5}&\multicolumn{1}{c}{$c_a$ = 3}&\multicolumn{1}{c}{$c_a$ = 3.5}&\multicolumn{1}{c}{$c_a$ = 4}&\multicolumn{1}{c}{$c_a$ = 4.5}&\multicolumn{1}{c}{$c_a$ = 5}\tabularnewline
\hline
{\bfseries s. size 100}&&&&&&&&&&\tabularnewline
~~$c_b$ = 0.5&$ 1.1$&$-2.3$&$-1.7$&$-2.8$&$-2.9$&$-2.2$&$-2.3$&$-4.6$&$-2.3$&$-4.0$\tabularnewline
~~$c_b$ = 1&$ 1.8$&$-0.1$&$-1.6$&$-3.2$&$-2.6$&$-3.4$&$-4.0$&$-4.1$&$-3.2$&$-3.2$\tabularnewline
~~$c_b$ = 1.5&$ 1.3$&$-1.9$&$-1.7$&$-2.5$&$-1.8$&$-2.7$&$-2.8$&$-4.6$&$-3.1$&$-4.0$\tabularnewline
~~$c_b$ = 2&$ 0.9$&$-1.6$&$-1.2$&$-2.6$&$-3.2$&$-3.5$&$-2.3$&$-4.1$&$-3.0$&$-5.3$\tabularnewline
~~$c_b$ = 2.5&$ 0.8$&$-1.5$&$-1.8$&$-1.5$&$-2.4$&$-4.1$&$-3.3$&$-3.3$&$-1.9$&$-4.4$\tabularnewline
~~$c_b$ = 3&$ 3.0$&$ 0.9$&$-2.8$&$-2.9$&$-2.4$&$-3.5$&$-3.6$&$-4.5$&$-3.4$&$-4.2$\tabularnewline
~~$c_b$ = 3.5&$ 1.2$&$-1.5$&$-1.9$&$-3.4$&$-1.0$&$-2.4$&$-3.9$&$-3.1$&$-3.1$&$-3.2$\tabularnewline
~~$c_b$ = 4&$ 1.1$&$-1.0$&$-1.6$&$-2.5$&$-3.4$&$-4.6$&$-3.2$&$-5.0$&$-3.8$&$-3.3$\tabularnewline
~~$c_b$ = 4.5&$ 2.1$&$-0.4$&$-3.2$&$-3.0$&$-4.6$&$-3.3$&$-4.0$&$-2.4$&$-4.8$&$-3.3$\tabularnewline
~~$c_b$ = 5&$-0.8$&$-1.5$&$-2.1$&$-4.5$&$-3.2$&$-1.5$&$-3.0$&$-2.4$&$-2.8$&$-4.4$\tabularnewline
\hline
{\bfseries s. size 500}&&&&&&&&&&\tabularnewline
~~$c_b$ = 0.5&$-0.1$&$-0.6$&$-2.3$&$-2.5$&$-4.3$&$-4.2$&$-2.5$&$-4.3$&$-4.2$&$-3.4$\tabularnewline
~~$c_b$ = 1&$ 0.3$&$-2.6$&$-2.2$&$-1.7$&$-2.7$&$-3.2$&$-4.6$&$-4.0$&$-3.3$&$-4.9$\tabularnewline
~~$c_b$ = 1.5&$ 1.4$&$-2.6$&$-2.8$&$-2.1$&$-2.7$&$-3.7$&$-2.9$&$-3.4$&$-4.5$&$-3.4$\tabularnewline
~~$c_b$ = 2&$ 0.5$&$-2.0$&$-2.7$&$-3.5$&$-3.6$&$-4.3$&$-3.2$&$-4.5$&$-3.4$&$-4.3$\tabularnewline
~~$c_b$ = 2.5&$ 1.3$&$-0.5$&$-3.4$&$-3.4$&$-4.2$&$-2.7$&$-4.3$&$-3.2$&$-4.2$&$-3.9$\tabularnewline
~~$c_b$ = 3&$-0.3$&$-3.6$&$-2.4$&$-5.1$&$-3.1$&$-4.0$&$-3.5$&$-2.5$&$-4.0$&$-3.8$\tabularnewline
~~$c_b$ = 3.5&$-0.1$&$-0.6$&$-2.6$&$-3.8$&$-2.6$&$-3.1$&$-2.7$&$-2.6$&$-4.0$&$-3.4$\tabularnewline
~~$c_b$ = 4&$ 1.3$&$-2.2$&$-3.2$&$-2.4$&$-3.2$&$-4.3$&$-3.6$&$-5.2$&$-3.8$&$-3.9$\tabularnewline
~~$c_b$ = 4.5&$ 1.9$&$-3.7$&$-2.0$&$-2.8$&$-4.5$&$-3.7$&$-4.5$&$-3.9$&$-2.5$&$-4.3$\tabularnewline
~~$c_b$ = 5&$ 0.8$&$-2.2$&$-1.6$&$-2.9$&$-4.3$&$-4.1$&$-3.2$&$-4.1$&$-2.9$&$-4.1$\tabularnewline
\hline
{\bfseries s. size 1000}&&&&&&&&&&\tabularnewline
~~$c_b$ = 0.5&$-1.2$&$-3.2$&$-2.0$&$-3.9$&$-3.5$&$-4.3$&$-2.8$&$-4.1$&$-3.9$&$-4.8$\tabularnewline
~~$c_b$ = 1&$ 1.0$&$-1.7$&$-1.7$&$-4.9$&$-3.4$&$-2.8$&$-2.6$&$-3.6$&$-2.1$&$-2.9$\tabularnewline
~~$c_b$ = 1.5&$-1.0$&$-1.9$&$-3.5$&$-3.1$&$-2.9$&$-3.1$&$-3.0$&$-4.1$&$-3.5$&$-5.0$\tabularnewline
~~$c_b$ = 2&$-1.6$&$-2.0$&$-4.3$&$-2.3$&$-3.5$&$-3.6$&$-2.3$&$-5.0$&$-3.9$&$-4.6$\tabularnewline
~~$c_b$ = 2.5&$ 0.5$&$-2.5$&$-2.0$&$-3.0$&$-4.6$&$-4.7$&$-3.0$&$-3.5$&$-5.9$&$-4.6$\tabularnewline
~~$c_b$ = 3&$ 0.5$&$-2.6$&$-3.0$&$-3.5$&$-3.9$&$-3.6$&$-4.0$&$-4.1$&$-2.5$&$-3.7$\tabularnewline
~~$c_b$ = 3.5&$-0.2$&$-1.9$&$-2.0$&$-1.8$&$-1.6$&$-3.8$&$-4.7$&$-4.3$&$-3.0$&$-4.4$\tabularnewline
~~$c_b$ = 4&$-0.5$&$-1.9$&$-2.5$&$-3.6$&$-2.5$&$-3.3$&$-3.5$&$-3.4$&$-3.6$&$-3.6$\tabularnewline
~~$c_b$ = 4.5&$ 1.9$&$-2.2$&$-4.0$&$-3.3$&$-3.3$&$-2.1$&$-4.5$&$-3.8$&$-4.4$&$-2.5$\tabularnewline
~~$c_b$ = 5&$-0.1$&$-2.8$&$-1.8$&$-5.0$&$-3.2$&$-3.0$&$-3.6$&$-4.6$&$-4.5$&$-4.4$\tabularnewline
\hline
\end{tabular}
\caption{Empirical minus nominal 10\% rejection frequency for a test that $L_{X+Y} \equiv L_0$ for known $L_0$.  Numbers in the table closer to zero are better.  Rows describe tuning parameters for contact set estimation and columns for epsilon-maximizer estimation.\label{size_10_tab}}\end{center}}
\end{table}

We decided on $a_n = 0.5 \log(\log(n)) / \sqrt{n}$, since in general smaller $c_a$ resulted in more accurate size.  For contact set estimation we chose $b_n = 3.5 \log(\log(n)) / \sqrt{n}$, which is similar to the sequence used in \citet{LintonSongWhang10}, who concentrated on estimating the contact set in a stochastic dominance experiment.\footnote{They used sequences of the form $c \log\log(\bar{n}) / \sqrt{\bar{n}}$, where $\bar{n} = (n_1 + n_2) / 2$, where they suggested $c$ between 3 and 4.  Using $c = 4$ but adjusting the formula to depend on $n$, the sum of the two samples, we have $4 / \sqrt{2} \approx 2.8$, so there is some justification for setting the constant for $b_n$ lower.  In our simulations there was a slight improvement when using constant 3.5 instead of 3.}

\subsection{Two-sided tests} \label{sec:twosides}

We verify that our proposed method provides accurate coverage probability for uniform confidence bands and power against local alternatives.  We test the null hypothesis that the lower bound of the treatment effect distribution corresponds to the bound associated with two standard normal marginal distributions, which is $L_0(x) = 2 \Phi(x / 2) - 1$ for $x > 0$ and zero otherwise, where $\Phi$ is the CDF of the standard normal distribution.  The upper bound is symmetric so it is sufficient to examine only the lower bound test.  Alternatives are local location-shift alternatives with the mean $\mu_X = 0$ while the mean $\mu_Y$ changes.  In previous experiments we found good power against alternatives with $\mu_Y = k/\sqrt{n}$ for varying $k$, but thanks to referee suggestions we also consider a wider range of local alternative specifications: corresponding to those alternatives used in \citet{HongLi18}, we let $\mu_Y \in \{0, n^{-1}, n^{-2/3}, n^{-1/2}, n^{-1/3}, n^{-1/6}, n^{-1/10}, 2\}$, and also consider these means multiplied by negative 1.  These are tests that should reject for any alternative $\mu_Y \neq 0$, but nontrivial power against all local alternatives cannot be guaranteed for them.  Results from this simulation experiment are collected in Tables~\ref{cband_tab100}, \ref{cband_tab500} and \ref{cband_tab1000}.

\citet{HongLi18} suggest estimating the directional derivative nonparametrically using
\begin{equation*}
  \hat{\lambda}'_{HL} = \frac{\lambda(\bbF_n + \epsilon_n \sqrt{n}(\bbF_n^* - \bbF_n)) - \lambda(\bbF_n)}{\epsilon_n},
\end{equation*}
where $\epsilon_n \rightarrow 0$ and $n^{1/2} \epsilon_n \rightarrow \infty$.  However, it is difficult to use this estimate here, because it is difficult to impose a null hypothesis on the form of the derivative in this experiment.  Specifically, the derivative~\eqref{cband_derivative} relies on $L_0$.  Although $L_0$ does not need to be known in the analytic construction of a derivative estimate, it is needed to estimate the derivative numerically using the above formula.  We addressed this problem in two ways.  First, we imposed the hypothesis that $L_{X+Y} = L_0$, that is, using statistic~\eqref{twosided_ex} exactly as written with the hypothesized $L_0$ in $\hat{\lambda}'_{HL}$.  Second, we used a plug-in estimate of $L_{X+Y}$.  Recalling~\eqref{dep_stat_def}, let
\begin{align*}
  \tilde{\lambda}'_{HL} &= \epsilon_n^{-1} \Bigg( \sup_{x \in \R} |\max\{ \sup_{u \in \R} \Pi^+(\bbF_n + \epsilon_n \sqrt{n}(\bbF_n^* - \bbF_n))(u, x), 0\} - \bbL_n(x)| \\
  {} &\phantom{=} \qquad \qquad \qquad \qquad \qquad \qquad - \sup_{x \in \R} |\max\{ \sup_{u \in \R} \Pi^+(\bbF_n)(u, x), 0\} - \bbL_n(x)| \Bigg) \\
  {} &= \epsilon_n^{-1} \sup_{x \in \R} |\max\{ \sup_{u \in \R} \Pi^+(\bbF_n + \epsilon_n \sqrt{n}(\bbF_n^* - \bbF_n))(u, x), 0\} - \bbL_n(x)|,
\end{align*}
where the second equality arises from the fact that $\bbL_n(x) = \max\{ \sup_{u \in \R} \Pi^+(\bbF_n)(u, x), 0\}$ by definition.  Neither of these options is very satisfactory, since $L_0$ is unlikely to be known and using $\bbL_n$ as in $\tilde{\lambda}_{HL}$ results in a statistic that appears to estimate the derivative inside the $L_\infty$ norm.  The method proposed in \citet{HongLi18} requires $\epsilon_n$ to be larger than $n^{-1/2}$, so size distortions for smaller choices of $\epsilon_n$ may be expected.  However, those $\epsilon_n$ were also considered in their paper so we included them here.

\begin{landscape}\begin{table}[!tbp]
{\footnotesize
\begin{center}
\begin{tabular}{lrcrrrrcrrrr}
\hline\hline
\multicolumn{1}{l}{\bfseries Location}&\multicolumn{1}{c}{\bfseries Analytic}&\multicolumn{1}{c}{\bfseries }&\multicolumn{4}{c}{\bfseries Numeric (null known)}&\multicolumn{1}{c}{\bfseries }&\multicolumn{4}{c}{\bfseries Numeric (null estimated)}\tabularnewline
\cline{2-2} \cline{4-7} \cline{9-12}
\multicolumn{1}{l}{}&\multicolumn{1}{c}{}&\multicolumn{1}{c}{}&\multicolumn{1}{c}{$\epsilon_n = n^{-1/6}$}&\multicolumn{1}{c}{$\epsilon_n = n^{-1/3}$}&\multicolumn{1}{c}{$\epsilon_n = n^{-1/2}$}&\multicolumn{1}{c}{$\epsilon_n = n^{-1}$}&\multicolumn{1}{c}{}&\multicolumn{1}{c}{$\epsilon_n = n^{-1/6}$}&\multicolumn{1}{c}{$\epsilon_n = n^{-1/3}$}&\multicolumn{1}{c}{$\epsilon_n = n^{-1/2}$}&\multicolumn{1}{c}{$\epsilon_n = n^{-1}$}\tabularnewline
\hline
{\bfseries }&&&&&&&&&&&\tabularnewline
~~$-2$&$1.000$&&$1.000$&$1.000$&$1.000$&$1.000$&&$1.000$&$1.000$&$1.000$&$1.000$\tabularnewline
~~$-n^{-1/10}$&$0.968$&&$0.976$&$0.984$&$0.992$&$0.997$&&$0.967$&$0.977$&$0.981$&$0.973$\tabularnewline
~~$-n^{-1/6}$&$0.816$&&$0.841$&$0.893$&$0.938$&$0.981$&&$0.806$&$0.842$&$0.860$&$0.824$\tabularnewline
~~$-n^{-1/3}$&$0.301$&&$0.335$&$0.441$&$0.568$&$0.739$&&$0.283$&$0.330$&$0.367$&$0.297$\tabularnewline
~~$-n^{-1/2}$&$0.130$&&$0.155$&$0.245$&$0.363$&$0.570$&&$0.118$&$0.151$&$0.173$&$0.131$\tabularnewline
~~$-n^{-2/3}$&$0.093$&&$0.110$&$0.180$&$0.294$&$0.500$&&$0.081$&$0.108$&$0.120$&$0.092$\tabularnewline
~~$-n^{-1}$&$0.072$&&$0.085$&$0.161$&$0.266$&$0.482$&&$0.063$&$0.088$&$0.098$&$0.069$\tabularnewline
\hline
{\bfseries Null is true}&&&&&&&&&&&\tabularnewline
~~$0$&$0.058$&&$0.072$&$0.141$&$0.273$&$0.483$&&$0.053$&$0.069$&$0.077$&$0.059$\tabularnewline
\hline
{\bfseries }&&&&&&&&&&&\tabularnewline
~~$n^{-1}$&$0.048$&&$0.061$&$0.120$&$0.246$&$0.433$&&$0.043$&$0.057$&$0.066$&$0.049$\tabularnewline
~~$n^{-2/3}$&$0.037$&&$0.052$&$0.118$&$0.237$&$0.432$&&$0.030$&$0.046$&$0.055$&$0.035$\tabularnewline
~~$n^{-1/2}$&$0.030$&&$0.050$&$0.132$&$0.266$&$0.427$&&$0.025$&$0.036$&$0.045$&$0.030$\tabularnewline
~~$n^{-1/3}$&$0.017$&&$0.057$&$0.200$&$0.322$&$0.419$&&$0.016$&$0.020$&$0.024$&$0.019$\tabularnewline
~~$n^{-1/6}$&$0.242$&&$0.497$&$0.750$&$0.850$&$0.836$&&$0.222$&$0.283$&$0.320$&$0.254$\tabularnewline
~~$n^{-1/10}$&$0.662$&&$0.863$&$0.947$&$0.972$&$0.964$&&$0.646$&$0.714$&$0.740$&$0.665$\tabularnewline
~~$2$&$1.000$&&$1.000$&$1.000$&$1.000$&$1.000$&&$1.000$&$1.000$&$1.000$&$1.000$\tabularnewline
\hline
\end{tabular}
\caption{Empirical rejection frequencies for a test that $L_{X+Y} \equiv L_0$ for known $L_0$.  Analytic derivative estimates are described in the text, while Numeric derivative estimates use the method suggested by \citet{HongLi18}.  Each row label gives the location parameter of one of the distributions, while the null assumes both parameters are zero.  Samples of size 100, 499 bootstrap repetitions per test, $1000$ simulation repetitions.\label{cband_tab100}}\end{center}}
\end{table}\end{landscape}

\begin{landscape}\begin{table}[!tbp]
{\footnotesize
\begin{center}
\begin{tabular}{lrcrrrrcrrrr}
\hline\hline
\multicolumn{1}{l}{\bfseries Location}&\multicolumn{1}{c}{\bfseries Analytic}&\multicolumn{1}{c}{\bfseries }&\multicolumn{4}{c}{\bfseries Numeric (null known)}&\multicolumn{1}{c}{\bfseries }&\multicolumn{4}{c}{\bfseries Numeric (null estimated)}\tabularnewline
\cline{2-2} \cline{4-7} \cline{9-12}
\multicolumn{1}{l}{}&\multicolumn{1}{c}{}&\multicolumn{1}{c}{}&\multicolumn{1}{c}{$\epsilon_n = n^{-1/6}$}&\multicolumn{1}{c}{$\epsilon_n = n^{-1/3}$}&\multicolumn{1}{c}{$\epsilon_n = n^{-1/2}$}&\multicolumn{1}{c}{$\epsilon_n = n^{-1}$}&\multicolumn{1}{c}{}&\multicolumn{1}{c}{$\epsilon_n = n^{-1/6}$}&\multicolumn{1}{c}{$\epsilon_n = n^{-1/3}$}&\multicolumn{1}{c}{$\epsilon_n = n^{-1/2}$}&\multicolumn{1}{c}{$\epsilon_n = n^{-1}$}\tabularnewline
\hline
{\bfseries }&&&&&&&&&&&\tabularnewline
~~$-2$&$1.000$&&$1.000$&$1.000$&$1.000$&$1.000$&&$1.000$&$1.000$&$1.000$&$1.000$\tabularnewline
~~$-n^{-1/10}$&$1.000$&&$1.000$&$1.000$&$1.000$&$1.000$&&$1.000$&$1.000$&$1.000$&$1.000$\tabularnewline
~~$-n^{-1/6}$&$0.997$&&$0.997$&$0.998$&$1.000$&$1.000$&&$0.995$&$0.997$&$0.998$&$0.997$\tabularnewline
~~$-n^{-1/3}$&$0.383$&&$0.388$&$0.483$&$0.655$&$0.813$&&$0.356$&$0.416$&$0.438$&$0.383$\tabularnewline
~~$-n^{-1/2}$&$0.106$&&$0.114$&$0.182$&$0.339$&$0.549$&&$0.096$&$0.122$&$0.139$&$0.108$\tabularnewline
~~$-n^{-2/3}$&$0.050$&&$0.056$&$0.103$&$0.256$&$0.465$&&$0.041$&$0.064$&$0.073$&$0.051$\tabularnewline
~~$-n^{-1}$&$0.061$&&$0.065$&$0.124$&$0.272$&$0.452$&&$0.054$&$0.067$&$0.074$&$0.060$\tabularnewline
\hline
{\bfseries Null is true}&&&&&&&&&&&\tabularnewline
~~$0$&$0.058$&&$0.060$&$0.108$&$0.257$&$0.453$&&$0.051$&$0.066$&$0.072$&$0.055$\tabularnewline
\hline
{\bfseries }&&&&&&&&&&&\tabularnewline
~~$n^{-1}$&$0.050$&&$0.056$&$0.111$&$0.240$&$0.430$&&$0.042$&$0.057$&$0.069$&$0.050$\tabularnewline
~~$n^{-2/3}$&$0.040$&&$0.041$&$0.100$&$0.215$&$0.397$&&$0.034$&$0.049$&$0.059$&$0.035$\tabularnewline
~~$n^{-1/2}$&$0.035$&&$0.042$&$0.113$&$0.241$&$0.362$&&$0.027$&$0.039$&$0.044$&$0.034$\tabularnewline
~~$n^{-1/3}$&$0.064$&&$0.116$&$0.281$&$0.455$&$0.494$&&$0.057$&$0.074$&$0.088$&$0.067$\tabularnewline
~~$n^{-1/6}$&$0.904$&&$0.953$&$0.989$&$0.996$&$0.991$&&$0.896$&$0.924$&$0.933$&$0.909$\tabularnewline
~~$n^{-1/10}$&$1.000$&&$1.000$&$1.000$&$1.000$&$1.000$&&$1.000$&$1.000$&$1.000$&$1.000$\tabularnewline
~~$2$&$1.000$&&$1.000$&$1.000$&$1.000$&$1.000$&&$1.000$&$1.000$&$1.000$&$1.000$\tabularnewline
\hline
\end{tabular}
\caption{Empirical rejection frequencies for a test that $L_{X+Y} \equiv L_0$ for known $L_0$.  Analytic derivative estimates are described in the text, while Numeric derivative estimates use the method suggested by \citet{HongLi18}.  Each row label gives the location parameter of one of the distributions, while the null assumes both parameters are zero.  Samples of size 500, 999 bootstrap repetitions per test, $1000$ simulation repetitions.\label{cband_tab500}}\end{center}}
\end{table}\end{landscape}

\begin{landscape}\begin{table}[!tbp]
{\footnotesize
\begin{center}
\begin{tabular}{lrcrrrrcrrrr}
\hline\hline
\multicolumn{1}{l}{\bfseries Location}&\multicolumn{1}{c}{\bfseries Analytic}&\multicolumn{1}{c}{\bfseries }&\multicolumn{4}{c}{\bfseries Numeric (null known)}&\multicolumn{1}{c}{\bfseries }&\multicolumn{4}{c}{\bfseries Numeric (null estimated)}\tabularnewline
\cline{2-2} \cline{4-7} \cline{9-12}
\multicolumn{1}{l}{}&\multicolumn{1}{c}{}&\multicolumn{1}{c}{}&\multicolumn{1}{c}{$\epsilon_n = n^{-1/6}$}&\multicolumn{1}{c}{$\epsilon_n = n^{-1/3}$}&\multicolumn{1}{c}{$\epsilon_n = n^{-1/2}$}&\multicolumn{1}{c}{$\epsilon_n = n^{-1}$}&\multicolumn{1}{c}{}&\multicolumn{1}{c}{$\epsilon_n = n^{-1/6}$}&\multicolumn{1}{c}{$\epsilon_n = n^{-1/3}$}&\multicolumn{1}{c}{$\epsilon_n = n^{-1/2}$}&\multicolumn{1}{c}{$\epsilon_n = n^{-1}$}\tabularnewline
\hline
{\bfseries }&&&&&&&&&&&\tabularnewline
~~$-2$&$1.000$&&$1.000$&$1.000$&$1.000$&$1.000$&&$1.000$&$1.000$&$1.000$&$1.000$\tabularnewline
~~$-n^{-1/10}$&$1.000$&&$1.000$&$1.000$&$1.000$&$1.000$&&$1.000$&$1.000$&$1.000$&$1.000$\tabularnewline
~~$-n^{-1/6}$&$1.000$&&$1.000$&$1.000$&$1.000$&$1.000$&&$1.000$&$1.000$&$1.000$&$1.000$\tabularnewline
~~$-n^{-1/3}$&$0.432$&&$0.434$&$0.555$&$0.708$&$0.852$&&$0.411$&$0.460$&$0.506$&$0.438$\tabularnewline
~~$-n^{-1/2}$&$0.108$&&$0.107$&$0.172$&$0.349$&$0.536$&&$0.095$&$0.115$&$0.132$&$0.107$\tabularnewline
~~$-n^{-2/3}$&$0.072$&&$0.074$&$0.135$&$0.271$&$0.450$&&$0.061$&$0.079$&$0.088$&$0.070$\tabularnewline
~~$-n^{-1}$&$0.041$&&$0.044$&$0.098$&$0.267$&$0.428$&&$0.037$&$0.048$&$0.057$&$0.041$\tabularnewline
\hline
{\bfseries Null is true}&&&&&&&&&&&\tabularnewline
~~$0$&$0.046$&&$0.045$&$0.107$&$0.245$&$0.435$&&$0.037$&$0.053$&$0.060$&$0.045$\tabularnewline
\hline
{\bfseries }&&&&&&&&&&&\tabularnewline
~~$n^{-1}$&$0.051$&&$0.053$&$0.116$&$0.259$&$0.419$&&$0.046$&$0.058$&$0.068$&$0.050$\tabularnewline
~~$n^{-2/3}$&$0.046$&&$0.049$&$0.108$&$0.239$&$0.408$&&$0.039$&$0.050$&$0.060$&$0.041$\tabularnewline
~~$n^{-1/2}$&$0.027$&&$0.030$&$0.103$&$0.264$&$0.387$&&$0.024$&$0.032$&$0.035$&$0.026$\tabularnewline
~~$n^{-1/3}$&$0.089$&&$0.131$&$0.328$&$0.520$&$0.539$&&$0.074$&$0.099$&$0.115$&$0.087$\tabularnewline
~~$n^{-1/6}$&$0.994$&&$0.998$&$0.999$&$1.000$&$0.999$&&$0.993$&$0.995$&$0.996$&$0.995$\tabularnewline
~~$n^{-1/10}$&$1.000$&&$1.000$&$1.000$&$1.000$&$1.000$&&$1.000$&$1.000$&$1.000$&$1.000$\tabularnewline
~~$2$&$1.000$&&$1.000$&$1.000$&$1.000$&$1.000$&&$1.000$&$1.000$&$1.000$&$1.000$\tabularnewline
\hline
\end{tabular}
\caption{Empirical rejection frequencies for a test that $L_{X+Y} \equiv L_0$ for known $L_0$.  Analytic derivative estimates are described in the text, while Numeric derivative estimates use the method suggested by \citet{HongLi18}.  Each row label gives the location parameter of one of the distributions, while the null assumes both parameters are zero.  Samples of size 1000, 1999 bootstrap repetitions per test, $1000$ simulation repetitions.\label{cband_tab1000}}\end{center}}
\end{table}\end{landscape}

The tables indicate power loss only for alternatives that are very small and positive (for locations of size $n^{-2/3}$ and $n^{1/2}$).  The numerically estimated reference distribution with known $L_0$ was usually outperformed by the analytically estimated distribution and that using $\bbL_n$ in the place of $L_0$.  When using a known $L_0$, the choice of $\epsilon_n$ is crucial.  On the other hand, the numerical derivatives do well when $L$ is estimated and $\epsilon_n$ converges either slowly or very quickly to zero~--- surprisingly, the last column, with estimated null and $\epsilon_n = n^{-1}$ is closest to the behavior of the test with analytically estimated reference distribution.  The analytic estimate does not require knowledge of $L_0$ and performs as well as the best-performing numeric estimates.

\subsection{One-sided tests} \label{sec:oneside}

In a second experiment we simulate uniformly distributed samples and test the condition that implies stochastic dominance of $F_{\Delta_A}$ over $F_{\Delta_B}$ using statistic~\eqref{eq:sd_test_stat}.  A control sample is made up of uniform observations on the unit interval, as is treatment $B$, while treatment $A$ is distributed uniformly on $[\mu, \mu + 1]$.  We test the hypothesis that $L_A - U_B \leq 0$.  When $\mu = -1$, $L_A - U_B \equiv 0$ so that value of $\mu = -1$ represents a least-favorable null hypothesis.\footnote{It can be verified that if $X \sim \text{Unif}[\mu, \mu + 1]$ and $Y \sim \text{Unif}[0, 1]$, then $L_{X-Y}$ is the CDF of a $\text{Unif}[\mu, \mu+1]$ random variable and $U_{X-Y}$ is the CDF of a $\text{Unif}[\mu-1, \mu]$ random variable.}  When $\mu > -1$, the null hypothesis is satisfied with a strict inequality and should not be rejected, while for $\mu < -1$ the null should be rejected.  We examine local alternatives around the central value $\mu = -1$, which is normalized to zero in Tables~\ref{sd_tab100}-\ref{sd_tab1000}.  We test size and power against local alternatives for samples of size 100, 500 and 1000 with respectively 499, 999 and 1999 bootstrap repetitions for each sample size and a grid with increment size $0.02$.  1,000 simulation repetitions were used for each sample size.  In this experiment the method of \citet{HongLi18} can be implemented without issue.  We compare the analytically-estimated derivative using the same $a_n$ and $b_n$ and numerically-estimated derivatives using the same choices for $\epsilon_n$ as in the previous experiment.

\begin{table}[!tbp]
\begin{center}
\begin{tabular}{lrcrrrr}
\hline\hline
\multicolumn{1}{l}{\bfseries Location}&\multicolumn{1}{c}{\bfseries Analytic}&\multicolumn{1}{c}{\bfseries }&\multicolumn{4}{c}{\bfseries Numeric}\tabularnewline
\cline{2-2} \cline{4-7}
\multicolumn{1}{l}{}&\multicolumn{1}{c}{}&\multicolumn{1}{c}{}&\multicolumn{1}{c}{$\epsilon_n = n^{-1/6}$}&\multicolumn{1}{c}{$\epsilon_n = n^{-1/3}$}&\multicolumn{1}{c}{$\epsilon_n = n^{-1/2}$}&\multicolumn{1}{c}{$\epsilon_n = n^{-1}$}\tabularnewline
\hline
{\bfseries }&&&&&&\tabularnewline
~~$-2$&$1.000$&&$1.000$&$1.000$&$1.000$&$1.000$\tabularnewline
~~$-n^{-1/10}$&$1.000$&&$1.000$&$1.000$&$1.000$&$1.000$\tabularnewline
~~$-n^{-1/6}$&$1.000$&&$1.000$&$1.000$&$1.000$&$1.000$\tabularnewline
~~$-n^{-1/3}$&$1.000$&&$1.000$&$1.000$&$1.000$&$1.000$\tabularnewline
~~$-n^{-1/2}$&$0.645$&&$0.479$&$0.819$&$0.984$&$1.000$\tabularnewline
~~$-n^{-2/3}$&$0.266$&&$0.179$&$0.410$&$0.721$&$0.961$\tabularnewline
~~$-n^{-1}$&$0.114$&&$0.070$&$0.199$&$0.438$&$0.812$\tabularnewline
\hline
{\bfseries Null bndry.}&&&&&&\tabularnewline
~~$0$&$0.100$&&$0.059$&$0.193$&$0.407$&$0.778$\tabularnewline
\hline
{\bfseries }&&&&&&\tabularnewline
~~$n^{-1}$&$0.070$&&$0.043$&$0.129$&$0.355$&$0.708$\tabularnewline
~~$n^{-2/3}$&$0.032$&&$0.018$&$0.059$&$0.177$&$0.432$\tabularnewline
~~$n^{-1/2}$&$0.005$&&$0.004$&$0.011$&$0.044$&$0.128$\tabularnewline
~~$n^{-1/3}$&$0.000$&&$0.000$&$0.000$&$0.000$&$0.003$\tabularnewline
~~$n^{-1/6}$&$0.000$&&$0.000$&$0.000$&$0.000$&$0.000$\tabularnewline
~~$n^{-1/10}$&$0.000$&&$0.000$&$0.000$&$0.000$&$0.000$\tabularnewline
~~$2$&$0.000$&&$0.000$&$0.000$&$0.000$&$0.000$\tabularnewline
\hline
\end{tabular}
\caption{Empirical rejection frequencies for a test that $L_A \leq U_B$.  Analytic derivative estimates are described in the text, while Numeric derivative estimates use the method suggested by \citet{HongLi18}.  Each row label gives the location parameter of one of the distributions, while the null assumes both parameters are zero.  Samples of size 100, 499 bootstrap repetitions per test, $1000$ simulation repetitions.\label{sd_tab100}}\end{center}
\end{table}

\begin{table}[!tbp]
\begin{center}
\begin{tabular}{lrcrrrr}
\hline\hline
\multicolumn{1}{l}{\bfseries Location}&\multicolumn{1}{c}{\bfseries Analytic}&\multicolumn{1}{c}{\bfseries }&\multicolumn{4}{c}{\bfseries Numeric}\tabularnewline
\cline{2-2} \cline{4-7}
\multicolumn{1}{l}{}&\multicolumn{1}{c}{}&\multicolumn{1}{c}{}&\multicolumn{1}{c}{$\epsilon_n = n^{-1/6}$}&\multicolumn{1}{c}{$\epsilon_n = n^{-1/3}$}&\multicolumn{1}{c}{$\epsilon_n = n^{-1/2}$}&\multicolumn{1}{c}{$\epsilon_n = n^{-1}$}\tabularnewline
\hline
{\bfseries }&&&&&&\tabularnewline
~~$-2$&$1.000$&&$1.000$&$1.000$&$1.000$&$1.000$\tabularnewline
~~$-n^{-1/10}$&$1.000$&&$1.000$&$1.000$&$1.000$&$1.000$\tabularnewline
~~$-n^{-1/6}$&$1.000$&&$1.000$&$1.000$&$1.000$&$1.000$\tabularnewline
~~$-n^{-1/3}$&$1.000$&&$1.000$&$1.000$&$1.000$&$1.000$\tabularnewline
~~$-n^{-1/2}$&$0.668$&&$0.526$&$0.781$&$0.983$&$1.000$\tabularnewline
~~$-n^{-2/3}$&$0.172$&&$0.099$&$0.245$&$0.570$&$0.943$\tabularnewline
~~$-n^{-1}$&$0.090$&&$0.057$&$0.140$&$0.363$&$0.802$\tabularnewline
\hline
{\bfseries Null bndry.}&&&&&&\tabularnewline
~~$0$&$0.085$&&$0.049$&$0.146$&$0.358$&$0.777$\tabularnewline
\hline
{\bfseries }&&&&&&\tabularnewline
~~$n^{-1}$&$0.063$&&$0.037$&$0.137$&$0.340$&$0.751$\tabularnewline
~~$n^{-2/3}$&$0.029$&&$0.013$&$0.053$&$0.162$&$0.482$\tabularnewline
~~$n^{-1/2}$&$0.004$&&$0.002$&$0.005$&$0.029$&$0.116$\tabularnewline
~~$n^{-1/3}$&$0.000$&&$0.000$&$0.000$&$0.000$&$0.000$\tabularnewline
~~$n^{-1/6}$&$0.000$&&$0.000$&$0.000$&$0.000$&$0.000$\tabularnewline
~~$n^{-1/10}$&$0.000$&&$0.000$&$0.000$&$0.000$&$0.000$\tabularnewline
~~$2$&$0.000$&&$0.000$&$0.000$&$0.000$&$0.000$\tabularnewline
\hline
\end{tabular}
\caption{Empirical rejection frequencies for a test that $L_A \leq U_B$.  Analytic derivative estimates are described in the text, while Numeric derivative estimates use the method suggested by \citet{HongLi18}.  Each row label gives the location parameter of one of the distributions, while the null assumes both parameters are zero.  Samples of size 500, 999 bootstrap repetitions per test, $1000$ simulation repetitions.\label{sd_tab500}}\end{center}
\end{table}

\begin{table}[!tbp]
\begin{center}
\begin{tabular}{lrcrrrr}
\hline\hline
\multicolumn{1}{l}{\bfseries Location}&\multicolumn{1}{c}{\bfseries Analytic}&\multicolumn{1}{c}{\bfseries }&\multicolumn{4}{c}{\bfseries Numeric}\tabularnewline
\cline{2-2} \cline{4-7}
\multicolumn{1}{l}{}&\multicolumn{1}{c}{}&\multicolumn{1}{c}{}&\multicolumn{1}{c}{$\epsilon_n = n^{-1/6}$}&\multicolumn{1}{c}{$\epsilon_n = n^{-1/3}$}&\multicolumn{1}{c}{$\epsilon_n = n^{-1/2}$}&\multicolumn{1}{c}{$\epsilon_n = n^{-1}$}\tabularnewline
\hline
{\bfseries }&&&&&&\tabularnewline
~~$-2$&$1.000$&&$1.000$&$1.000$&$1.000$&$1.000$\tabularnewline
~~$-n^{-1/10}$&$1.000$&&$1.000$&$1.000$&$1.000$&$1.000$\tabularnewline
~~$-n^{-1/6}$&$1.000$&&$1.000$&$1.000$&$1.000$&$1.000$\tabularnewline
~~$-n^{-1/3}$&$1.000$&&$1.000$&$1.000$&$1.000$&$1.000$\tabularnewline
~~$-n^{-1/2}$&$0.653$&&$0.528$&$0.776$&$0.978$&$1.000$\tabularnewline
~~$-n^{-2/3}$&$0.169$&&$0.104$&$0.261$&$0.559$&$0.932$\tabularnewline
~~$-n^{-1}$&$0.084$&&$0.054$&$0.135$&$0.355$&$0.793$\tabularnewline
\hline
{\bfseries Null bndry.}&&&&&&\tabularnewline
~~$0$&$0.067$&&$0.031$&$0.107$&$0.334$&$0.730$\tabularnewline
\hline
{\bfseries }&&&&&&\tabularnewline
~~$n^{-1}$&$0.069$&&$0.031$&$0.110$&$0.305$&$0.729$\tabularnewline
~~$n^{-2/3}$&$0.033$&&$0.017$&$0.054$&$0.167$&$0.518$\tabularnewline
~~$n^{-1/2}$&$0.002$&&$0.001$&$0.005$&$0.026$&$0.102$\tabularnewline
~~$n^{-1/3}$&$0.000$&&$0.000$&$0.000$&$0.000$&$0.000$\tabularnewline
~~$n^{-1/6}$&$0.000$&&$0.000$&$0.000$&$0.000$&$0.000$\tabularnewline
~~$n^{-1/10}$&$0.000$&&$0.000$&$0.000$&$0.000$&$0.000$\tabularnewline
~~$2$&$0.000$&&$0.000$&$0.000$&$0.000$&$0.000$\tabularnewline
\hline
\end{tabular}
\caption{Empirical rejection frequencies for a test that $L_A \leq U_B$.  Analytic derivative estimates are described in the text, while Numeric derivative estimates use the method suggested by \citet{HongLi18}.  Each row label gives the location parameter of one of the distributions, while the null assumes both parameters are zero.  Samples of size 1000, 1999 bootstrap repetitions per test, $1000$ simulation repetitions.\label{sd_tab1000}}\end{center}
\end{table}

The reference distribution estimated analytically performs well in terms of size and power, although not as well as in the previous experiment~--- the empirical rejection probability is higher than expected at the boundary of the null region.  The largest choice of $\epsilon_n$ appears to work best, although this leads to a conservative test for the largest sample size.  The analytic test with sample size 1,000 show higher rejection rate for a very small location shift towards the interior of the null region (location $n^{-1}$ in the table) than on the boundary of the null region.  Theorem~\ref{thm:uniform_size_control} predicts that this should disappear asymptotically.

\newpage
\section{Proofs} \label{appn}

\begin{lem} \label{lem:saddle}
  Suppose that $U \subseteq \R^{d_U}$ and $X \subseteq \R^{d_X}$ and $f, h \in \ell^\infty(\gr A)$.  Let $A: X \rightrightarrows U$ be non-empty-valued and define $\sigma: \ell^\infty(\gr A) \rightarrow \R$ by $\sigma(f) = \sup_{x \in X} \inf_{u \in A(x)} f(u, x)$.  Then $\sigma$ is Hadamard directionally differentiable and
  \begin{equation*} 
    \sigma_f'(h) = \lim_{\delta \rightarrow 0^+} \sup_{x \in X_f^\sigma(\delta)} \lim_{\epsilon \rightarrow 0^+} \inf_{u \in U_{(-f)}(x, \epsilon)} h(u, x),
  \end{equation*}
  where $U_f$ and $X_f^\sigma$ are defined in~\eqref{marginal_epsmax_def} and~\eqref{Xfsigma_def}.
\end{lem}

\begin{proof}[Proof of Lemma~\ref{lem:saddle}]
  It can be verified using that $|\sigma(f) - \sigma(g)| \leq \|f - g\|_\infty$, so we can focus on G\^ateaux differentiability for a fixed $f$ and $h$ \citep[Proposition 3.5]{Shapiro90}.  Fix some $f, h \in \ell^\infty(\gr A)$ and $t_n \rightarrow 0^+$.

  Start with an upper bound for the scaled difference $(\sigma(f + t_nh) - \sigma(f)) / t_n$, which, using $s_n = t_n^{-1}$ may be rewritten as $\sigma(s_nf + h) - s_n\sigma(f)$.  Consider the inner optimization problem in $u$.  For any $x \in X$ and $\epsilon > 0$, by definition of $U_{(-f)}$ (which collects near-minimizers in $u$ of $f$) there exists a $u_\epsilon \in U_{(-f)}(x, \epsilon)$ such that $h(u_\epsilon, x) \leq \inf_{u \in U_{(-f)}(x, \epsilon)} h(u, x) + \epsilon$ and $f(u_\epsilon, x) \leq \inf_{u \in A(x)} f(u, x) + \epsilon$ and therefore
  \begin{align*}
    \inf_{u \in U_{(-f)}(x, \epsilon)} h(u, x) &\geq h(u_\epsilon, x) - \epsilon \\
    {} &= s_n f(u_\epsilon, x) + h(u_\epsilon, x) - s_n f(u_\epsilon, x) - \epsilon \\
    {} &\geq \inf_{u \in A(x)} (s_n f + h)(u, x) - s_n \inf_{u \in A(x)} f(u, x) + s_n \left( \inf_{u \in A(x)} f(u, x) - f(u_\epsilon, x) \right) - \epsilon \\
    {} &\geq \inf_{u \in A(x)} (s_n f + h)(u, x) - s_n \inf_{u \in A(x)} f(u, x) - (s_n + 1) \epsilon.
  \end{align*}
  Thus, for each $n$, for any $x \in X$,
  \begin{equation} \label{ineq_inside}
    \inf_{u \in A(x)} (s_n f + h)(u, x) - s_n \inf_{u \in A(x)} f(u, x) \leq \lim_{\epsilon \rightarrow 0^+} \inf_{u \in U_{(-f)}(x, \epsilon)} h(u, x).
  \end{equation}
  Next consider the outer optimization problem in $x$: for each $x \in X$, we may write
  \begin{multline*}
    \inf_{u \in A(x)} (s_n f + h)(u, x) - s_n \sigma(f) = \left( \inf_{u \in A(x)} (s_n f + h)(u, x) - s_n \inf_{u \in A(x)} f(u, x) \right) \\
    {} + s_n \left( \inf_{u \in A(x)} f(u, x) - \sigma(f) \right) \\
    {} \leq \lim_{\epsilon \rightarrow 0^+} \inf_{u \in U_{(-f)}(x, \epsilon)} h(u, x) + s_n \left( \inf_{u \in A(x)} f(u, x) - \sigma(f) \right),
  \end{multline*}
  where the inequality comes from~\eqref{ineq_inside}.  Consider this inequality for some $x' \notin X_f^\sigma(\delta)$, for any $\delta > 0$:
  \begin{align*}
    \inf_{u \in A(x')} (s_n f + h)(u, x') - s_n \sigma(f) &\leq \lim_{\epsilon \rightarrow 0^+} \inf_{u \in U_{(-f)}(x', \epsilon)} h(u, x') + s_n \left( \inf_{u \in A(x')} f(u, x') - \sigma(f) \right) \\
    {} &\leq \lim_{\epsilon \rightarrow 0^+} \inf_{u \in U_{(-f)}(x', \epsilon)} h(u, x') - s_n \delta.
  \end{align*}
  Recall that $s_n = t_n^{-1}$ so that $s_n$ diverges as $n \rightarrow \infty$, and that $x'$ that are not $\delta$-maximinimizers cannot be optimal.  Then for any $\delta > 0$,
  \begin{align*}
    \limsup_{n \rightarrow \infty} \left( \sigma (s_n f + h) - s_n \sigma(f) \right) &= \limsup_{n \rightarrow \infty} \left( \sup_{x \in X_f^\sigma(\delta)} \inf_{u \in A(x)} (s_n f + h)(u, x) - s_n \sigma(f) \right) \\
    {} &\leq \sup_{x \in X_f^\sigma(\delta)} \left( \lim_{\epsilon \rightarrow 0^+} \inf_{u \in U_{(-f)}(x, \epsilon)} h(u, x) \right)
  \end{align*}
  and therefore this inequality holds as $\delta \rightarrow 0$.

  To obtain a lower bound, start again with the inner problem.  For any $x \in X$, choose an $\epsilon > 0$ and note that for any $u' \in A(x) \backslash U_{(-f)}(x, \epsilon)$,
  \begin{equation*}
    s_n f(u', x) + h(u', x) - s_n \inf_{u \in A(x)} f(u, x) \geq \inf_{u \in A(x)} h(u, x) + s_n \epsilon.
  \end{equation*}
  Therefore, again recalling that $s_n$ diverges with $n$, we may restrict attention to the $\epsilon$-minimizers (in $u$) for each $x \in X$:
  \begin{align}
    \liminf_{n \rightarrow \infty} \bigg( \inf_{u \in A(x)} (s_n f + h)(u, x) &- s_n \inf_{u \in A(x)} f(u, x) \bigg) \notag \\
    {} &= \liminf_{n \rightarrow \infty} \left( \inf_{u \in U_{(-f)}(x, \epsilon)} (s_n f + h)(u, x) - s_n \inf_{u \in A(x)} f(u, x) \right) \label{UisAinlimit} \\
    {} &\geq \inf_{u \in U_{(-f)}(x, \epsilon)} h(u, x) \label{lowerbnd_inner}
  \end{align}
  and this holds as $\epsilon \rightarrow 0^+$ as well.

  Consider again the outer maximization problem.  For any $\delta > 0$, define
  \begin{equation*}
    \tilde{h}(\delta) = \sup_{x \in X_f^\sigma(\delta)} \lim_{\epsilon \rightarrow 0^+} \inf_{u \in U_{(-f)}(x, \epsilon)} h(u, x).
  \end{equation*}
  For each $\delta$ there is an $x_\delta \in X_f^\sigma(\delta)$ such that
  \begin{equation*}
    \lim_{\epsilon \rightarrow 0^+} \inf_{u \in U_{(-f)}(x_\delta, \epsilon)} h(u, x_\delta) \geq \tilde{h}(\delta) - \delta
  \end{equation*}
  and by construction,
  \begin{align*}
    \lim_{\epsilon \rightarrow 0^+} \inf_{u \in U_{(-f)}(x_\delta, \epsilon)} f(u, x_\delta) &\geq \inf_{u \in A(x_\delta)} f(u, x_\delta) \\
    {} &\geq \sup_{x \in X} \inf_{u \in A(x)} f(u, x) - \delta = \sigma(f) - \delta,
  \end{align*}
  where the second inequality follows from the definition of $X_f^\sigma(\delta)$.  Then for each $n$,
  \begin{align*}
    \tilde{h}(\delta) &\leq \lim_{\epsilon \rightarrow 0^+} \inf_{u \in U_{(-f)}(x_\delta, \epsilon)} h(u, x_\delta) + \delta \\
    {} &= s_n \lim_{\epsilon \rightarrow 0^+} \inf_{u \in U_{(-f)}(x_\delta, \epsilon)} f(u, x_\delta) + \lim_{\epsilon \rightarrow 0^+} \inf_{u \in U_{(-f)}(x_\delta, \epsilon)} h(u, x_\delta) - s_n \lim_{\epsilon \rightarrow 0^+} \inf_{u \in U_{(-f)}(x_\delta, \epsilon)} f(u, x_\delta) + \delta \\
    {} &\leq \lim_{\epsilon \rightarrow 0^+} \inf_{u \in U_{(-f)}(x_\delta, \epsilon)} (s_n f + h)(u, x_\delta) - s_n \lim_{\epsilon \rightarrow 0^+} \inf_{u \in U_{(-f)}(x_\delta, \epsilon)} f(u, x_\delta) + \delta \\
    {} &= \left( \lim_{\epsilon \rightarrow 0^+} \inf_{u \in U_{(-f)}(x_\delta, \epsilon)} (s_n f + h)(u, x_\delta) - s_n \sigma(f) \right) - s_n \left( \lim_{\epsilon \rightarrow 0^+} \inf_{u \in U_{(-f)}(x_\delta, \epsilon)} f(u, x_\delta) - \sigma(f) \right) + \delta \\
    {} &\leq \left( \lim_{\epsilon \rightarrow 0^+} \inf_{u \in U_{(-f)}(x_\delta, \epsilon)} (s_n f + h)(u, x_\delta) - s_n \sigma(f) \right) + (s_n + 1) \delta
  \end{align*}
  Then letting $\delta \rightarrow 0$ and using~\eqref{UisAinlimit} and~\eqref{lowerbnd_inner} we have
  \begin{equation*}
    \liminf_{n \rightarrow \infty} \left( \sigma(s_n f + h) - s_n \sigma(f) \right) \geq \lim_{\delta \rightarrow 0^+} \tilde{h}(\delta).
  \end{equation*}
\end{proof}

\begin{proof}[Proof of Theorem~\ref{thm:supnorm_stats}]
  The function $\gamma: \R^2 \rightarrow \R$ defined by $\gamma(x, y) = \max\{x, y\}$ has Hadamard directional derivative
  \begin{equation*}
    \gamma'_{x,y}(h, k) = \begin{cases} h & x > y \\ \max\{h, k\} & x = y \\ k & x < y \end{cases}.
  \end{equation*}

  Use the equivalence $\sup |f| = \max\{ \sup f, \sup (-f)\}$ to rewrite the difference for any $n$ as
  \begin{multline*}
    \frac{1}{t_n} \left( \sup_{x \in X} \left| \sup_{u \in A(x)} \left( f(u, x) + t_n h_n(u, x) \right) \right| - \sup_{x \in X} \left| \sup_{u \in A(x)} f(u, x) \right| \right) = \\
    \frac{1}{t_n} \Bigg( \max \left\{ \sup_{(u, x) \in \gr A} \left( f(u, x) + t_n h_n(u, x) \right), \sup_{x \in X} \left( -\sup_{u \in A(x)} \left( f(u, x) + t_n h_n(u, x) \right) \right) \right\} - \\
    - \max \left\{ \sup_{(u, x) \in \gr A} f(u, x), \sup_{x \in X} \left( - \sup_{u \in A(x)} f(u, x) \right) \right\} \Bigg) \\
    = \frac{1}{t_n} \left( \max \left\{ \mu \left( f + t_n h_n \right), \sigma \left( -f - t_n h_n \right) \right\} - \max \left\{ \mu(f), \sigma(-f) \right\} \right).
  \end{multline*}

  Defining $\tilde{\sigma}(f) = \sigma(-f)$, Lemma~\ref{lem:saddle} implies that $\tilde{\sigma}'_f(h) = \sigma_{(-f)}'(-h)$.  Then the chain rule implies
  \begin{equation*}
    (\gamma(\mu(f), \tilde{\sigma}(f)))'(h) = \gamma'_{(\mu(f), \tilde{\sigma}(f))}(\mu_f'(h), \tilde{\sigma}_f'(h)),
  \end{equation*}
  and writing the derivatives out is the first result.  The condition $\|\psi(f)\|_\infty = 0$ is equivalent to $\mu(f) = \sigma(-f) = 0$.  Specializing the general derivative so that $\psi(f) \equiv 0$ implies $A_f^\mu(\epsilon) = U_f(X, \epsilon)$.  Lemma~\ref{limits}, the continuity of $\max\{x, y\}$ and the identity $|x| = \max\{-x, x\}$ imply the second result.

  Next consider the functional $\lambda_2$.  Write $\sup_x \max\{ \sup_u f(u, x), 0\} = \max\{ \sup_{(u, x)} f(u, x), 0 \}$ and apply the chain rule with $\gamma(x, 0)$.  Then the general result for $\lambda_2(f)$ results from reversing the order of the maximum and supremum again.  When $\|[\psi(f)]_+\|_\infty = 0$, the derivative is only nonzero if $\sup_{\gr A} f = 0$ so that $A_f^\mu(\epsilon) = U_f(X_0, \epsilon)$, and similar calculations lead to the final result.
\end{proof}

\begin{lem} \label{limits}
  Let $A_f^\mu(\epsilon)$ and $U_f(\cdot, \epsilon)$ be defined by~\eqref{Afmu_def} and~\eqref{marginal_epsmax_def} in the text.  Let
  \begin{equation*}
    X_f(\epsilon) = \left\{ x \in X : \sup_{u \in A(x)} f(u, x) \geq \sup_{(u, x) \in \gr A} f(u, x) - \epsilon \right\}.
  \end{equation*}
  Then
  \begin{equation*}
    \lim_{\epsilon \rightarrow 0^+} \sup_{(u, x) \in A_f^\mu(\epsilon)} h(u, x) = \lim_{\epsilon \rightarrow 0^+} \lim_{\delta \rightarrow 0^+} \sup_{x \in X_f(\epsilon)} \sup_{u \in U_f(x, \delta)} h(u, x).
  \end{equation*}
\end{lem}

\begin{proof}[Proof of Lemma~\ref{limits}]
  Note that for any $\epsilon > 0$, $A_f^\mu(\epsilon) = \cup_{x \in X_f(\epsilon)} U_f(x, \epsilon)$.  Define
  \begin{equation*}
    \calA = \lim_{\epsilon \rightarrow 0^+} \sup_{(u, x) \in A_f^\mu(\epsilon)} h(u, x), \quad \calB = \lim_{\epsilon \rightarrow 0^+} \lim_{\delta \rightarrow 0^+} \sup_{x \in X_f(\epsilon)} \sup_{u \in U_f(x, \delta)} h(u, x).
  \end{equation*}
  As families of sets indexed by $\epsilon$, $A_f^\mu(\epsilon)$, $X_f(\epsilon)$ and $U_f(x, \epsilon)$ (for any $x$) are nonincreasing in diameter as $\epsilon \rightarrow 0$.  Therefore suprema evaluated over any of these sets are nonincreasing as $\epsilon \rightarrow 0$.

  For any $\epsilon, \delta > 0$, $\calA \leq \sup_{x \in X_f(\epsilon)} \sup_{u \in U_f(x, \delta)} h(u, x)$, and by letting $\epsilon, \delta \rightarrow 0$ we have $\calA \leq \calB$.  Since $h$ is bounded, $\calA < \infty$ and for all $\eta > 0$ there exist $\epsilon^*, \delta^* > 0$ such that
  \begin{equation*}
    \sup_{x \in X_f(\epsilon^*)} \sup_{u \in U_f(x, \delta^*)} h(u, f) \leq \calA + \eta.
  \end{equation*}
  Then
  \begin{equation*}
    \calB \leq \lim_{\epsilon \rightarrow 0^+} \sup_{x \in X_f(\epsilon)} \sup_{u \in U_f(x, \delta^*)} h(u, f) \leq \sup_{x \in X_f(\epsilon^*)} \sup_{u \in U_f(x, \delta^*)} h(u, f) \leq \calA + \eta.
  \end{equation*}
  By letting $\eta \rightarrow 0$ we find $\calB \leq \calA$.
\end{proof}

\begin{proof}[Proof of Theorem~\ref{thm:lpnorm_stats}]
  The result of this theorem follows from several applications of the chain rule for the function evaluated at (almost) every point in $X$, along with dominated convergence to move from pointwise convergence to convergence in the $L_p$ norm.  The maps and their Hadamard directional derivatives are discussed first and then composed.

  First, we note that the condition $f \in \ell^\infty(\gr A)$ and $m(X) < \infty$ imply $\|\psi(f)\|_p < \infty$: since $\| \psi(f) \|_p^p \leq \|f\|_\infty^p m(X)$, the $L_p$ norm of $\psi(f)$ is well-defined.

  Let $f$ and $h$ satisfy the assumptions in the statement of the theorem and choose $t_n \rightarrow 0^+$ and $h_n$ such that $\|h_n - h\|_\infty \rightarrow 0$.  It can be verified from the definition that:
  \begin{enumerate}
    \item For $x \in \R$ and $a \in \R$, the derivative of $x \mapsto |x|$ in direction $a$ is $\sgn(x) \cdot a$ when $x \neq 0$ and $|a|$ when $x = 0$.
    \item For $x \in \R$ and $b \in \R$, the derivative of $x \mapsto [x]_+$ in direction $b$ is $b$ when $x > 0$, $[b]_+$ when $x = 0$ and $0$ when $x < 0$. \label{pos_part}
    \item For $x \geq 0$, $\alpha > 0$ and $c \in \R$, the derivative of $x \mapsto x^\alpha$ in direction $c$ is $\alpha cx^{\alpha-1}$.
  \end{enumerate}

  Suppose that $\|\psi(f)\|_p \neq 0$.  For almost all $x \in X$ the chain rule implies that the derivative of $f(\cdot, x) \mapsto |\psi(f)(x)|^p$ in direction $h(\cdot, x)$ is
  \begin{equation*}
    \left\{ \begin{aligned} &p \sgn(\psi(f)(x)) |\psi(f)(x)|^{p-1} \psi'_f(h)(x), & \psi(f)(x) \neq 0 \\ &0, & \psi(f)(x) = 0 \end{aligned} \right\} = p\sgn(\psi(f)(x)) |\psi(f)(x)|^{p-1} \psi_f'(h)(x).
  \end{equation*}
  Similarly, for almost all $x \in X$ the derivative of $f(\cdot, x) \mapsto [\psi(f)(x)]_+^p$ in direction $h(\cdot, x)$ is
  \begin{equation*}
    p [\psi(f)(x)]_+^{p-1} \times \left\{ \begin{aligned} &\psi'_f(h)(x), & \psi(f)(x) > 0 \\ &[\psi'_f(h)(x)]_+, & \psi(f)(x) = 0 \\ &0, & \psi(f)(x) < 0 \end{aligned} \right\} = p [\psi(f)(x)]_+^{p-1} \psi_f'(h)(x).
  \end{equation*}
  Then the assumed $p$-integrability of these functions, implied by Assumption~\ref{A:estimator} and Proposition 6.10 of~\citet{Folland99}, and dominated convergence imply that when $\|\psi(f)\|_p \neq 0$,
  \begin{multline*}
    \lim_{n \rightarrow \infty} \frac{1}{t_n} \left( \int_X |\psi(f + t_n h_n)(x)|^p \ud m(x) - \int_X |\psi(f)(x)|^p \ud m(x) \right) \\
    = \int_X p\sgn(\psi(f)(x)) |\psi(f)(x)|^{p-1} \psi_f'(h)(x) \ud m(x)
  \end{multline*}
  and similarly, when $\|[\psi(f)]_+\|_p \neq 0$,
  \begin{equation*}
    \lim_{n \rightarrow \infty} \frac{1}{t_n} \left( \int_X [\psi(f + t_n h_n)(x)]_+^p \ud m(x) - \int_X [\psi(f)(x)]_+^p \ud m(x) \right) = \int_X p [\psi(f)(x)]_+^{p-1} \psi_f'(h)(x) \ud m(x).
  \end{equation*}
  When considering the $1/p$-th power of these integrals, one more application of the chain rule and the third basic derivative in the above list imply
  \begin{multline*}
    \lim_{n \rightarrow \infty} \frac{\lambda_3(f + t_nh_n) - \lambda_3(f)}{t_n} \\
    = \left( \int_X |\psi(f)(x)|^p \ud m(x) \right)^{(1-p)/p} \int_X \sgn(\psi(f)(x)) |\psi(f)(x)|^{p-1} \psi_f'(h)(x) \ud m(x) \\
  \end{multline*}
  and
  \begin{equation*}
    \lim_{n \rightarrow \infty} \frac{\lambda_4(f + t_nh_n) - \lambda_4(f)}{t_n} = \left( \int_X [\psi(f)(x)]_+^p \ud m(x) \right)^{(1-p)/p} \int_X [\psi(f)(x)]_+^{p-1} \psi_f'(h)(x) \ud m(x).
  \end{equation*}

  On the other hand, when $\|\psi(f)\|_p = 0$, one can calculate directly that
  \begin{align*}
    \lim_{n \rightarrow \infty} \frac{1}{t_n} \Bigg( \left( \int_X |\psi(f + t_nh_n)(x)|^p \ud m(x) \right)^{1/p} &- \left( \int_X |\psi(f)(x)|^p \ud m(x) \right)^{1/p} \Bigg) \\
    {} &= \left( \int_X \left| \lim_{n \rightarrow \infty} \frac{\psi(f + t_nh_n)(x) - 0}{t_n} \right|^p \ud m(x) \right)^{1/p} \\
    {} &= \left( \int_X \left| \lim_{\epsilon \rightarrow 0^+} \sup_{\{u : f(u, x) \geq -\epsilon\}} h(u, x) \right|^p \ud m(x) \right)^{1/p}.
  \end{align*}
  The case is slightly different for $\lambda_4$ because $\|[\psi(f)]_+\|_p = 0$ only implies $\psi(f)(x) \leq 0$ for almost all $x \in X$.  Rewrite the difference as in the above display.  Rule~\ref{pos_part} in the above list of derivative rules implies that only $X_0$ will contribute asymptotically to the inner integral, and in the limit, using $p$-integrability and dominated convergence we have the final statement.
\end{proof}

\begin{lem} \label{lem:other_sup}
  Let $\lambda$ be the map defined in~\eqref{dep_stat_def}.  Then $\lambda$ is Hadamard directionally differentiable and under the hypothesis that $\max\{ \Pi^+(F)(x), 0 \} = L_0(x)$ for all $x \in \R$, its derivative is defined in~\eqref{cband_derivative}.
\end{lem}

\begin{proof}[Proof of Lemma~\ref{lem:other_sup}]
  To save space, we have used the notation $\sup_A f = \sup_{x \in A} f(x)$ and $a \vee b = \max\{a, b\}$ in this proof.  The map $\Pi^+$ has a Hadamard derivative at any $(u, x)$ equal to $h_X(u) + h_Y(x - u)$.  Therefore we calculate, for $f \in \ell^\infty(\R^2)$, in the direction of some $h \in \ell^\infty(\R^2)$, the directional derivative
  \begin{equation} \label{derivative_to_solve}
    \varphi_f'(h) := \lim_{t \rightarrow 0^+} \frac{1}{t} \left( \sup_X \left| \psi(f + th) \vee 0 - L_0 \right| - \sup_X \left| \psi(f) \vee 0 - L_0 \right| \right).
  \end{equation}

  If $\psi(f)(x) < 0$ then for small enough $t$, $\max\{ \psi(f + th)(x), 0 \} = 0$.  Therefore we may focus on $\{x \in \R: \psi(f)(x) \geq 0\}$.  Break the domain into the two subsets $X_0 = \{x \in \R: \psi(f)(x) = 0\}$ and $X_+ = \{x \in \R: \psi(f)(x) > 0\}$.  Write
  \begin{equation*}
    \sup_X \left| \psi(f + th) \vee 0 - L_0 \right| = \max \left\{ \sup_{X_0} \left| \psi(f + th) \vee 0 - L_0 \right|, \sup_{X_+} \left| \psi(f + th) \vee 0 - L_0 \right| \right\},
  \end{equation*}
  and analogously for the second term in the difference~\eqref{derivative_to_solve}.  This indicates that the easiest way to find the derivative is via the chain rule.  We take each one of these two suprema in turn.

  Under the null hypothesis that $\psi(f) \equiv L_0$ and on $X_0$, where both are zero,
  \begin{align}
    \lim_{t \rightarrow 0^+} t^{-1} \bigg( \sup_{X_0} | \psi(f + th) \vee 0 - L_0 | - \sup_{X_0} | \psi(f) \vee 0 - L_0 | \bigg) &= \lim_{t \rightarrow 0^+} t^{-1} \sup_{X_0} \psi(f + th) \vee 0 \notag \\
    {} &= \lim_{\epsilon \rightarrow 0^+} \sup_{x \in X_0} \sup_{u \in U_f(x, \epsilon)} [h(u, x)]_+. \label{muzero_def}
  \end{align}
  Meanwhile, on $X_+$,
  \begin{multline*}
    \lim_{t \rightarrow 0^+} t^{-1} \left( \sup_{X_+} | \psi(f + th) \vee 0 - L_0 | - \sup_{X_+} | \psi(f) \vee 0 - L_0 | \right) \\
    = \lim_{t \rightarrow 0^+} t^{-1} \left( \sup_{X_+} | \psi(f + th) - L_0 | - \sup_{X_+} |\psi(f) - L_0| \right).
  \end{multline*}
  Apply Theorem~\ref{thm:supnorm_stats} using the function $\tilde{f}(u, x) = f(u, x) - L_0(x)$.  Note that $U_{\tilde{f}}(x, \epsilon) = U_f(x, \epsilon)$ for all $x \in X_+$.  Since the second term in the above expression is zero under the null hypothesis that $\psi(f) \equiv L_0$, Theorem~\ref{thm:supnorm_stats} implies
  \begin{equation} \label{muplus_def}
    \lim_{t \rightarrow 0^+} t^{-1} \left( \sup_{X_+} | \psi(f + th) - L_0 | - \sup_{X_+} | \psi(f) - L_0 | \right) = \lim_{\epsilon \rightarrow 0^+} \sup_{x \in X_+} \left| \sup_{u \in U_{f}(x, \epsilon)} h(u, x) \right|.
  \end{equation}

  Let
  \begin{equation*}
    \mu_0(f) = \sup_{X_0} |\psi(f) \vee 0 - L_0|, \qquad \mu_+(f) = \sup_{X_+} |\psi(f) \vee 0 - L_0|.
  \end{equation*}
  Then define the derivatives $\mu_{0f}'$ and $\mu_{+f}'$ by~\eqref{muzero_def} and~\eqref{muplus_def}.  Now, via the chain rule, under the hypothesis that $L_0$ has been correctly chosen, $\mu_0(f) = \mu_+(f) = 0$.  Then using the derivative of the map $\gamma$ defined in the proof of Theorem~\ref{thm:supnorm_stats}, we have a solution to~\eqref{derivative_to_solve}:
  \begin{equation} \label{varphi_pr_final}
    \varphi_f'(h) = \max \left\{ \mu_{0f}'(h), \mu_{+f}'(h) \right\}.
  \end{equation}
  To translate this back to a result on the $\lambda$ defined in~\eqref{dep_stat_def}, use the definitions that precede~\eqref{cband_derivative} in~\eqref{varphi_pr_final} and let $h(u, x) = h_X(u) + h_Y(x - u)$.
\end{proof}

\begin{proof}[Proof of Theorem~\ref{thm:asymptotic}]
  The limit statements follow directly from Theorem~\ref{thm:supnorm_stats}, Theorem~\ref{thm:lpnorm_stats} above combined with Theorem 2.1 of \citet{FangSantos19}.
\end{proof}

\begin{proof}[Proof of Theorem~\ref{thm:bootstrap_consistency_teststats}]
  We will apply Theorem 3.2 of \citet{FangSantos19} to show the result by verifying that the maps involved satisfy the appropriate regularity conditions.  First, Theorems~\ref{thm:supnorm_stats} and~\ref{thm:lpnorm_stats} show that the $\lambda_j$ are Hadamard directionally differentiable maps, implying their Assumption 1 is satisfied.  Our assumptions~\ref{A:sequences} and~\ref{A:bootstrapf} imply their Assumptions 2 and 3.

  Next, we show that each map is Lipschitz and consistent for each fixed direction $h$, so that we may apply Lemma S.3.6 of \citet{FangSantos19}, which implies their Assumption 4 is satisfied.  Start with $\lambda_1$.  Fix $f \in \ell^\infty(\gr A)$ and suppose that $h, k \in \ell^\infty(\gr A)$.  For any $n$,
  \begin{equation*}
    \left| \hat{\lambda}_{1n}'(h) - \hat{\lambda}_{1n}'(k) \right| \leq \sup_{(u, x) \in U_{f_n}(X, a_n)} \left| h(u, x) - k(u, x) \right| \leq \| h - k \|_\infty.
  \end{equation*}
  Next we show that for each $h$, $|\hat{\lambda}_{1n}'(h) - \lambda_{1f}'(h)| = o_p(1)$.  For any $\eta > 0$,
  \begin{align*}
    \prob{ \left| \hat{\lambda}_{1n}'(h) - \lambda_{1f}'(h) \right| > \eta } &\leq \prob{ \left| \sup_{(u, x) \in U_{f_n}(X, a_n)} h(u, x) - \lim_{\epsilon \rightarrow 0^+} \sup_{(u, x) \in U_{f}(X, \epsilon)} h(u, x) \right| > \eta }.
  \end{align*}
  If $(u_n, x_n) \in U_{f_n}(X, a_n)$, then $f_n(u_n, x_n) \geq \sup_{(u, x) \in gr A} f_n(u, x) - a_n$.  This is equivalent to
  \begin{align*}
    f(u_n, x_n) + (f_n(u_n, x_n) - f(u_n, x_n)) &\geq \sup_{(u, x) \in \gr A} f(u, x) + \left( \sup_{(u, x) \in \gr A} f_n(u, x) - \sup_{(u, x) \in \gr A} f(u, x) \right) - a_n \\
    \Rightarrow f(u_n, x_n) &\geq \sup_{(u, x) \in \gr A} f(u, x) - a_n - 2 \|f_n - f\|_\infty.
  \end{align*}
  That is, $U_{f_n}(X, a_n) \subseteq U_f(X, a_n + 2\|f_n - f\|_\infty)$.  Therefore as $n \rightarrow \infty$,
  \begin{multline*}
    \prob{ \sup_{(u, x) \in U_{f_n}(X, a_n)} h(u, x) > \lim_{\epsilon \rightarrow 0^+} \sup_{(u, x) \in U_{f}(X, \epsilon)} h(u, x) + \eta } \\
    \leq \prob{ \sup_{(u, x) \in U_{f}(X, a_n + 2\|f_n - f\|_\infty)} h(u, x) > \lim_{\epsilon \rightarrow 0^+} \sup_{(u, x) \in U_{f}(X, \epsilon)} h(u, x) + \eta } \rightarrow 0
  \end{multline*}
  It can similarly be shown that $U_f(X, a_n) \subseteq U_{f_n}(X, a_n + 2\|f_n - f\|_\infty)$.  This implies that
  \begin{equation*}
    \lim_{\epsilon \rightarrow 0^+} \sup_{(u, x) \in U_{f}(X, \epsilon)} h(u, x) \leq \sup_{(u, x) \in U_{f}(X, a_n)} h(u, x) \leq \sup_{(u, x) \in U_{f}(X, a_n + 2\|f_n - f\|_\infty)} h(u, x).
  \end{equation*}
  Then as $n \rightarrow \infty$,
  \begin{multline*}
    \prob{ \lim_{\epsilon \rightarrow 0^+} \sup_{(u, x) \in U_{f}(X, \epsilon)} h(u, x) > \sup_{(u, x) \in U_{f_n}(X, a_n)} h(u, x) + \eta } \\
    \leq \prob{ \sup_{(u, x) \in U_{f_n}(X, a_n + 2\|f_n - f\|_\infty)} h(u, x) > \sup_{(u, x) \in U_{f_n}(X, a_n)} h(u, x) + \eta } \rightarrow 0.
  \end{multline*}

  Next consider $\lambda_2$.  Using the fact that $x \mapsto [x]_+$ is Lipschitz with constant 1,
  \begin{equation*}
    \left| \hat{\lambda}_{2n}'(h) - \hat{\lambda}_{2n}'(k) \right| \leq \sup_{(u, x) \in U_{f_n}(X, a_n)} \left| [h(u, x)]_+ - [k(u, x)]_+ \right| \leq \| h - k \|_\infty.
  \end{equation*}
  In order to show consistency for fixed $h$, recall that under the hypothesis that $P \in \calP_{00}^I$ there is only one form for $\lambda_{2f}'$ (that is, the trivial case where it is equal to zero is ruled out by assumption).  Note that for any $\eta > 0$
  \begin{align*}
    \prob{ \left| \hat{\lambda}_{2n}'(h) - \lambda_{2f}'(h) \right| > \eta } \leq \prob{ \left| \sup_{(u, x) \in U_{f_n}(\hat{X}_0, a_n)} h(u, x) - \lim_{\epsilon \rightarrow 0^+} \sup_{(u, x) \in U_{f}(X_0, \epsilon)} h(u, x) \right| > \eta } \\
    \leq \prob{ \left| \sup_{(u, x) \in U_{f_n}(X_0 \cap \hat{X}_0, a_n)} h(u, x) - \lim_{\epsilon \rightarrow 0^+} \sup_{(u, x) \in U_{f}(X_0 \cap \hat{X}_0, \epsilon)} h(u, x) \right| > \eta } + \prob{ \hat{X}_0 \triangle X_0 \neq \varnothing }.
  \end{align*}
  On $X_0 \cap \hat{X}_0$, consistency is guaranteed by the argument used above for $\hat{\lambda}_{1n}'$.  Consider the probability that $X_0 \triangle \hat{X}_0$ is non-empty.  Theorem~\ref{thm:asymptotic} applied to $\lambda_1$ implies that $\sup_{x \in X} r_n |\psi(f_n)(x) - \psi(f)(x)|$ is asymptotically tight.  Therefore if $b_n$ satisfies the condition in Assumption~\ref{A:sequences}, for all $x \in X$, $| \psi(f_n)(x') - \psi(f)(x') | = O_p(r_n) = o_p(b_n)$.  First, if $x \in X_0$ then $\psi(f)(x) = 0$ and
  \begin{equation*}
    \prob{x \in X_0 \cap \hat{X}_0^{\mathsf{c}}} = \prob{ |\psi(f_n)(x)| > b_n} = \prob{ |\psi(f_n)(x) - \psi(f)(x)| > b_n } \rightarrow 0.
  \end{equation*}
  On the other hand, if $x' \notin X_0$ then $\psi(f)(x') \neq 0$ and $\prob{|\psi(f)(x')| \leq b_n} \rightarrow 0$.  Then
  \begin{equation*}
    \prob{x \in X_0^{\mathsf{c}} \cap \hat{X}_0} = \prob{ |\psi(f_n)(x')| \leq b_n } \leq \prob{ | \psi(f_n)(x') - \psi(f)(x') | + |\psi(f)(x')| \leq b_n } \rightarrow 0.
  \end{equation*}
  This implies that for all $\epsilon > 0$, $\prob{d(\hat{X}_0, X_0) > \epsilon} \rightarrow 0$.

  For the case of $\lambda_3$, suppose in addition that $m(X) < \infty$.  Then
  \begin{align*}
    \left| \hat{\lambda}_{3n}'(h) - \hat{\lambda}_{3n}'(k) \right| &\leq \left( \int_X \left| \sup_{u \in U_{f_n}(x, a_n)} h(u, x) - \sup_{u \in U_{f_n}(x, a_n)} k(u, x) \right|^p \ud m(x) \right)^{1/p} \\
    {} &\leq \left( \int_X \left| \sup_{u \in A(x)} | h(u, x) - k(u, x) | \right|^p \ud m(x) \right)^{1/p} \\
    {} &\leq m(X)^{1/p} \| h - k \|_\infty.
  \end{align*}
  The above inequalities follow from Minkowski's inequality (which implies a reverse triangle inequality) and a bound on the $L_p$ norm assuming $m(X)$ is finite.  Similarly, for any fixed $\eta > 0$, defining
  \begin{align*}
    X_1 &= \left\{ x \in X : \sup_{u \in U_{f_n}(x, a_n)} h(u, x) > \lim_{\epsilon \rightarrow 0^+} \sup_{u \in U_f(x, \epsilon)} h(u, x) \right\}, \\
    X_2 &= \left\{ x \in X : \sup_{u \in U_{f_n}(x, a_n)} h(u, x) < \lim_{\epsilon \rightarrow 0^+} \sup_{u \in U_f(x, \epsilon)} h(u, x) \right\},
  \end{align*}
  we may write
  \begin{multline*}
    \prob{ \left| \hat{\lambda}_{3n}'(h) - \lambda_{3f}'(h) \right| > \eta } \\
    \leq \prob{ \left( \int_{X_1} \left( \sup_{u \in U_{f_n}(x, a_n)} h(u, x) - \lim_{\epsilon \rightarrow 0^+} \sup_{u \in U_{f}(x, \epsilon)} h(u, x) \right)^p \ud m(x) \right)^{1/p} > \eta } \\
    + \prob{ \left( \int_{X_2} \left( \lim_{\epsilon \rightarrow 0^+} \sup_{u \in U_{f}(x, \epsilon)} h(u, x) - \sup_{u \in U_{f_n}(x, a_n)} h(u, x) \right)^p \ud m(x) \right)^{1/p} > \eta }
  \end{multline*}
  Using an analogous argument as for the $\lambda_1$ case, this sum is bounded by
  \begin{multline*}
    \prob{ \left( \int_{X_1} \left( \sup_{u \in U_{f}(x, a_n + 2\|f_n - f\|_\infty)} h(u, x) - \lim_{\epsilon \rightarrow 0^+} \sup_{u \in U_{f}(x, \epsilon)} h(u, x) \right)^p \ud m(x) \right)^{1/p} > \eta } \\
    + \prob{ \left( \int_{X_2} \left( \sup_{u \in U_{f_n}(x, a_n + 2\|f_n - f\|_\infty)} h(u, x) - \sup_{u \in U_{f_n}(x, a_n)} h(u, x) \right)^p \ud m(x) \right)^{1/p} > \eta } \rightarrow 0
  \end{multline*}
  by monotone convergence and the dominated convergence theorem.

  Finally, because $x \mapsto [x]_+$ is Lipschitz we find
  \begin{equation*}
    | \hat{\lambda}_{4n}'(h) - \hat{\lambda}_{4n}'(k) | \leq \left( \int_X \left| \sup_{u \in A(x)} | h(u, x) - k(u, x) | \right|^p \ud m(x) \right)^{1/p} \leq m(X)^{1/p} \|h - k\|_\infty.
  \end{equation*}
  Consistency for fixed $h$ follows from the argument for the $\lambda_3$ case.  Specifically, subdivide the domain of integration further into $X_1 \cap \hat{X}_0 \cap X_0$, $X_2 \cap \hat{X}_0 \cap X_0$ and $\hat{X}_0 \triangle X_0$.  The first two subsets can be dealt with as in the $\lambda_3$ case, and the last can be dealt with as in the argument for the $\lambda_2$ case.

  Bootstrap consistency for $\lambda_j$ for all $j$ then follows from Theorem 3.2 and Lemma S.3.6 of \citet{FangSantos19}.
\end{proof}

\begin{proof}[Proof of Theorem~\ref{thm:size_control}]
  For $j = 1$ or $j = 3$, under the assumption that $P \in \calP_{00}^E$, there are no interior distributions and so for any sequence $P_n \rightarrow P$ such that $P_n \in \calP_{00}^E$, $\lambda_j(f(P_n)) = 0$ for all $n$.  Therefore $\lambda_{jf}'(f'(c)) = \lim_{n \rightarrow \infty} r_n(\lambda_j(f(P_n)) - \lambda_j(f(P_0))) = 0$.  Now consider $\lambda_j$ for $j \in \{1, 3\}$.  Theorem 3.3 of \citet{FangSantos19} implies that, using $Z_n \stackrel{L_c}{\cw} Z$ to denote weak convergence under $\bigotimes_{i=1}^n P_{c/r_n}$,
  \begin{equation*}
    r_n(\lambda_j(f_n) - \lambda_j(f(P_n))) \stackrel{L_{c}}{\cw} \lambda_{jf}'(\calG_{P_0} + f'(c)) - \lambda_{jf}'(f'(c)) \sim \lambda_{jf}'(\calG_{P_0} + f'(c)),
  \end{equation*}


  The result for the other two statistics follows from Corollary 3.2 of \citet{FangSantos19} if we show the convexity of the derivatives $\lambda_{jf}'(h)$ when $\lambda_j(f) = 0$ for $j = 2$ or $j = 4$.  The case that $P \in \calP_0^I \backslash \calP_{00}^I$ is trivially convex, so we restrict our attention to $P \in \calP_{00}^I$.  It can be verified that for $a, b \in \R$, $[a + b]_+ \leq [a]_+ + [b]_+$.  Then for $h, k \in \ell^\infty(U \times X)$,
  \begin{equation*}
    \lambda_{2f}'(\alpha h + (1 - \alpha) k) \leq \left[ \alpha \lim_{\epsilon \rightarrow 0^+} \sup_{U_f(X, \epsilon)} h + (1 - \alpha) \lim_{\epsilon \rightarrow 0^+} \sup_{U_f(X, \epsilon)} k \right]_+ \leq \alpha \lambda_{2f}'(h) + (1 - \alpha) \lambda_{2f}'(k).
  \end{equation*}
  Similarly,
  \begin{align*}
    \lambda_{4f}'(\alpha h + (1 - \alpha) k) &\leq \left( \int_{X_0} \left| \alpha [\psi'_f(h)(x)]_+ + (1 - \alpha) [\psi'_f(k)(x)]_+ \right|^p \ud m(x) \right)^{1/p} \\
    &\leq \alpha \lambda_{4f}'(h) + (1 - \alpha) \lambda_{4f}'(k).
  \end{align*}
\end{proof}

The next lemma confirms that under assumptions that ensure convergence of the data occurs uniformly over $P \in \calP$, the results of Theorems~\ref{thm:asymptotic} and~\ref{thm:bootstrap_consistency_teststats} also hold uniformly over $P \in \calP$.  The second result is limited to $\calP_{00}^I$ because the estimates $\hat{\lambda}_{jn}'$ are constructed under the assumption that $P \in \calP_{00}^I$.
\begin{lem} \label{lem:uniform_weakconv}
  Under Assumptions~\ref{A:estimator}-\ref{A:bootstrapf} and \ref{A:unif_convf}-\ref{A:unif_boot}, for $j \in \{2, 4\}$,
  \begin{equation*}
    \limsup_{n \rightarrow \infty} \sup_{P \in \calP} \sup_{g \in BL_1(\R))} \left| \ex{g(r_n (\lambda_j(f_n) - \lambda_j(f))} - \ex{g(\lambda_{jf}'(\calG_P))} \right| = 0
  \end{equation*}
  and for any $\epsilon > 0$,
  \begin{equation*}
    \limsup_{n \rightarrow \infty} \sup_{P \in \calP_{00}^I} \prob{ \sup_{g \in BL_1(\R)} \left| \ex{g \left( \hat{\lambda}_{jn}'(r_n (f_n^\ast - f_n)) \right) | \{Z_i\}_{i=1}^n } - \ex{g(\lambda_{jf}'(\calG_P))} \right| > \epsilon } = 0.
  \end{equation*}
\end{lem}

\begin{proof}[Proof of Lemma~\ref{lem:uniform_weakconv}]
  First note that under Assumptions~\ref{A:unif_convf} and~\ref{A:unif_tight} (along with Theorems~\ref{thm:supnorm_stats} and~\ref{thm:lpnorm_stats}), the convergence-in-distribution result of Theorem 2.1 of \citet{FangSantos19} may be rewritten to hold uniformly over $P \in \calP$.  Specifically, use the uniform continuous mapping theorem developed in~\citet{LintonSongWhang10} in the proof of Theorem 2.1 of \citet{FangSantos19} (in the place of their appeal to Theorem 1.11.1 of \citet{vanderVaartWellner96}) and note that under the null hypothesis, $\lambda_j(f_n) = \lambda_j(f_n) - \lambda_j(f)$.

  The second result is similar to Theorem 3.2 of \citet{FangSantos19} but holds uniformly over $P \in \calP_{00}^I$ for the specific statistics considered in the lemma.  For convenience let $\calG_n^* = r_n(f_n^* - f_n)$ and $Z = \{Z_i\}_{i=1}^n$.

  Under Assumptions~\ref{A:bootstrapf}, \ref{A:unif_tight} and~\ref{A:unif_boot}, Lemma S.3.1 of \citet{FangSantos19} can be extended to hold uniformly over $P \in \calP$ to show that uniformly in $P \in \calP$, $\calG_n^* \cw \calG_P$.  This implies that for any closed set $F \in \mathcal{C}(\gr A)$,
  \begin{equation} \label{portmanteau}
    \limsup_{n \rightarrow \infty} \sup_{P \in \calP} \prob{ \calG_n^* \in F } \leq \sup_{P \in \calP} \prob{ \calG_P \in F }.
  \end{equation}
  This can be seen by writing
  \begin{equation*}
    \limsup_{n \rightarrow \infty} \sup_{P \in \calP} \prob{ \calG_n^* \in F } \leq \limsup_{n \rightarrow \infty} \sup_{P \in \calP} \left( \prob{ \calG_n^* \in F } - \prob{ \calG_P \in F } \right) + \sup_{P \in \calP} \prob{ \calG_P \in F }.
  \end{equation*}
  If the first term on the right-hand side were strictly positive, then as $n \rightarrow \infty$, for some $\epsilon > 0$ it occurs infinitely often that there is a $P \in \calP$ such that $\prob{ \calG_n^* \in F } - \prob{ \calG_P \in F } \geq \epsilon$.  Construct a sequence of such measures $\{P_k\}_{k=1}^\infty$.  But then it is possible to find functions that are multiples of functions in $BL_1(\gr A)$ that violate the weak convergence assumption: use functions analogous (the functions should dominate indicator functions here) to those in Addendum 1.12.3 and Theorem 1.12.2 of \citet{vanderVaartWellner96}.  This contradiction implies that the first term on the right-hand side must be nonpositive, implying~\eqref{portmanteau}.

  Next we show that for fixed $h$, for every $\epsilon > 0$,
  \begin{equation} \label{unif_in_P_cons}
    \limsup_{n \rightarrow \infty} \sup_{P \in \calP_{00}^I} \prob{ \| \hat{\lambda}_{jn}'(h) - \lambda_{jf}'(h) \|_\infty > \epsilon} = 0.
  \end{equation}
  Consider $\lambda_{2f}'$.  As in the proof of Theorem~\ref{thm:bootstrap_consistency_teststats}, for any $\eta > 0$,
  \begin{equation*}
    \sup_{P \in \calP_{00}^I} \prob{ \left| \hat{\lambda}_{2n}'(h) - \lambda_{2f}'(h) \right| > \eta } \leq \sup_{P \in \calP_{00}^I} \prob{ \left| \sup_{(u, x) \in U_{f_n}(X, a_n)} h(u, x) - \lim_{\epsilon \rightarrow 0^+} \sup_{(u, x) \in U_{f}(X, \epsilon)} h(u, x) \right| > \eta }.
  \end{equation*}
  As shown in the proof of that theorem, $U_{f_n}(X, a_n) \subseteq U_f(X, a_n + 2\|f_n - f\|_\infty)$ and $U_f(X, a_n) \subseteq U_{f_n}(X, a_n + 2\|f_n - f\|_\infty)$.  Under Assumptions~\ref{A:unif_convf} and~\ref{A:unif_tight}, the other arguments of that theorem and Lemma~\ref{limits} imply that
  \begin{equation*}
    \sup_{P \in \calP_{00}^I} \prob{ \left| \sup_{(u, x) \in U_{f_n}(X, a_n)} h(u, x) - \lim_{\epsilon \rightarrow 0^+} \sup_{(u, x) \in U_{f}(X, \epsilon)} h(u, x) \right| > \eta } \rightarrow 0,
  \end{equation*}
  implying~\eqref{unif_in_P_cons} holds for $\lambda_{2f}'$.  Arguments like those used in Theorem~\ref{thm:bootstrap_consistency_teststats} for $\lambda_{4f}'$ hold uniformly over $\calP_{00}^I$ as well under Assumption~\ref{A:unif_convf}, implying~\eqref{unif_in_P_cons} holds for $\lambda_{4f}'$ as well (note that under Assumption~\ref{A:unif_convf}, the contact set consistency argument made at the end of the proof of Theorem~\ref{thm:bootstrap_consistency_teststats} holds uniformly over $P \in \calP_{00}^I$ as well).  In the proof of Theorem~\ref{thm:bootstrap_consistency_teststats} it is shown that $\hat{\lambda}_{jn}'$ are Lipschitz with bounded Lipschitz constants.  Lemma~\ref{lem:FS_S36} shows that this and~\eqref{unif_in_P_cons} imply that for any compact set $K \subset \ell^\infty(\gr A)$, letting $K^\delta$ be its $\delta$-enlargement for any $\delta > 0$,
  \begin{equation} \label{uniform_consistency}
    \lim_{\delta \rightarrow 0^+} \limsup_{n \rightarrow \infty} \sup_{P \in \calP_{00}^I} \prob{\sup_{h \in K^\delta} \| \hat{\lambda}_{jn}'(h) - \lambda_{jf}'(h) \|_\infty > \epsilon} = 0.
  \end{equation}

  Now fix arbitrary $\epsilon > 0$ and $\eta > 0$.  By Assumption~\ref{A:unif_tight}, there is a compact set $K_0 \in \mathcal{C}(\gr A)$ such that $\sup_{P \in \calP_{00}^I} \prob{ \calG_P \notin K_0 } < \epsilon \eta / 2$.  Equation~\eqref{portmanteau} implies that for any $\delta > 0$,
  \begin{equation}
    \limsup_{n \rightarrow \infty} \sup_{P \in \calP_{00}^I} \prob{ \calG_n^* \notin K_0^\delta } \leq \sup_{P \in \calP_{00}^I} \prob{ \calG_P \notin K_0^\delta } \leq \sup_{P \in \calP_{00}^I} \prob{ \calG_P \notin K_0 } \leq \epsilon \eta / 2.
  \end{equation}

  Then for each $P \in \calP_{00}^I$ and any $\delta_0 > 0$,
  \begin{align}
    \sup_{g \in BL_1(\R)} &\phantom{=} \left| \ex{ g(\hat{\lambda}_{jn}'(\calG_n^*)) | Z } - \ex{ g(\lambda_{jf}'(\calG_n^*)) | Z } \right| \notag \\
    {} &\leq \sup_{g \in BL_1(\R)} \ex{ \left| g(\hat{\lambda}_{jn}'(\calG_n^*)) - g(\lambda_{jf}'(\calG_n^*)) \right| \Big| Z } \notag \\
    {} &\leq \ex{ 2 I(\calG_n^* \notin K_0^{\delta_0}) + \sup_{h \in K_0^{\delta_0}} \| \hat{\lambda}_{jn}'(h) - \lambda_{jf}'(h) \| \Big| Z } \notag \\
    {} &\leq 2 \prob{ \calG_n^* \notin K_0^{\delta_0} | Z } + \sup_{h \in K_0^{\delta_0}} \| \hat{\lambda}_{jn}'(h) - \lambda_{jf}'(h) \|, \label{big_ineq}
  \end{align}
  where the last term is not conditional on the data $Z$ because $\hat{\lambda}_{jn}'$ only depends on $Z$.

  Markov's inequality and Fubini's theorem \citep[Lemma 1.2.7]{vanderVaartWellner96} imply that
  \begin{equation} \label{tail_small}
    \limsup_{n \rightarrow \infty} \sup_{P \in \calP} 2 \prob{ \prob{ \calG_n^* \notin K_0^{\delta_0} | Z } > \epsilon } \leq \limsup_{n \rightarrow \infty} \sup_{P \in \calP} 2 \epsilon^{-1} \prob{\calG_n^* \notin K_0^{\delta_0}} < \eta.
  \end{equation}
  Furthermore, \eqref{uniform_consistency} implies that $\delta_0$ can be chosen such that
  \begin{equation} \label{est_small}
    \limsup_{n \rightarrow \infty} \sup_{P \in \calP_{00}^I} \prob{\sup_{h \in K^{\delta_0}} \| \hat{\lambda}_{jn}'(h) - \lambda_{jf}'(h) \|_\infty > \epsilon} < \eta.
  \end{equation}
  Finally, using Assumptions~\ref{A:bootstrapf}, \ref{A:regular}, \ref{A:unif_boot} and the Lipschitz continuity of $\lambda_{jf}'$, which can be shown as in the proof of Theorem~\ref{thm:bootstrap_consistency_teststats}, Lemma A.1 of \citet{LintonSongWhang10} implies
  \begin{equation} \label{other_weakc}
    \limsup_{n \rightarrow \infty} \sup_{P \in \calP} \prob{ \sup_{g \in BL_1(\R)} \left| \ex{ g(\lambda_{jf}'(\calG_n^*)) | Z } - \ex{g(\lambda_{jf}'(\calG_P))} \right| > \epsilon } = 0.
  \end{equation}

  By combining~\eqref{big_ineq}, \eqref{est_small}, \eqref{tail_small} and~\eqref{other_weakc}, we have
  \begin{equation*}
    \limsup_{n \rightarrow \infty} \sup_{P \in \calP_{00}^I} \prob{ \sup_{g \in BL_1(\R)} \left| \ex{ g(\hat{\lambda}_{jn}'(\calG_n^*)) | Z } - \ex{ g(\lambda_{jf}'(\calG_P)) } \right| > 3\epsilon } < 3\eta.
  \end{equation*}
  Since the constants are arbitrary we have the second result.
\end{proof}

\begin{lem}{(\citet{FangSantos19})} \label{lem:FS_S36}
  Suppose that a Hadamard directional derivative $\lambda'_f: \mathbb{D} \rightarrow \mathbb{E}$ has estimator $\hat{\lambda}'_n: \mathbb{D} \rightarrow \mathbb{E}$ such that for some $\kappa > 0$ and $C_n \in \R_+$, $\|\hat{\lambda}_n'(h) - \hat{\lambda}_n'(k)\|_{\mathbb{E}} \leq C_n \|h - k\|_{\mathbb{D}}^\kappa$.  Let $\calP$ denote a set of probability distributions.  Suppose that (a) $C_n = O_P(1)$ uniformly over $P \in \calP$, i.e., for all $\epsilon > 0$ there is an $M$ such that $\sup_{P \in \calP} \prob{C_n > M} < \epsilon$, and (b) for all $\epsilon > 0$, $\limsup_{n \rightarrow \infty} \sup_{P \in \calP} \prob{ \| \hat{\lambda}_n'(h) - \lambda_f'(h) \|_{\mathbb{E}} > \epsilon} = 0$.  Then for all compact $K \in \mathbb{D}$, letting $K^\delta$ for any $\delta > 0$ denote the $\delta$-enlargement of $K$, for every $\epsilon > 0$ we have
  \begin{equation*}
    \lim_{\delta \rightarrow 0^+} \limsup_{n \rightarrow \infty} \sup_{P \in \calP} \prob{ \sup_{h \in K^\delta} \| \hat{\lambda}_n'(h) - \lambda_f'(h) \|_{\mathbb{E}} > \epsilon } = 0.
  \end{equation*}
\end{lem}

\begin{proof}[Proof of Lemma~\ref{lem:FS_S36}]
  This is a straightforward adaptation of Lemma S.3.6 of \citet{FangSantos19}, and only needs a few minor modifications.  First, the assumption on $C_n$ must be strengthened to hold uniformly over $P \in \calP$.  Most of the statements are analytic in nature (including the appeal to their Lemma S.3.4) and assumed to hold outer almost surely, and the probabilistic inequalities still hold when taking suprema over $P \in \calP$.
\end{proof}

\begin{proof}[Proof of Theorem~\ref{thm:uniform_size_control}]
  For convenience let $\calG_n^* = r_n(f_n^* - f_n)$ and let $Z = \{Z_i\}_{i=1}^n$.  Consider the subset $\calP_{00}^I$, which is the collection of $P \in \calP_0^I$ such that $X_0 \neq \varnothing$.  Let $\hat{F}_{n,P}(s) = \prob{ \hat{\lambda}_{jn}'(\calG_n^*) \leq s | Z }$ and $F_P(s) = \prob{ \lambda_{jf}'(\calG_P) \leq s }$ be the distribution functions of bootstrapped and asymptotic statistics.  We note that $\hat{q}_{\lambda^*}(1-\alpha) = \inf\{q \in \R : \hat{F}_{n,P}(q) \geq 1 - \alpha\}$, and define $q_{\lambda'}^P(1-\alpha) = \inf\{q \in \R : F_P(q) \geq 1 - \alpha\}$.  Lemma 10.11 of~\citet{Kosorok08} can be extended to hold uniformly over $P \in \calP_{00}^I$, so that given the second result of Lemma~\ref{lem:uniform_weakconv}, under Assumption~\ref{A:regular}, for any closed $S \subset \R$ with $\min S > 0$,
  \begin{equation*}
    \lim_{n \rightarrow \infty} \sup_{P \in \calP_{00}^I} \prob{ \sup_{s \in S} | \hat{F}_{n,P}(s) - F_P(s) | > \epsilon } = 0.
  \end{equation*}
  As in the proof of Theorem S.1.1 of \citet{FangSantos19}, we find that for any $\epsilon > 0$,
  \begin{equation} \label{cv_consistent}
    \sup_{P \in \calP_{00}^I} \prob{ | \hat{q}_{\lambda^*}(1-\alpha) - q_{\lambda'}^P(1-\alpha) | > \epsilon } \rightarrow 0,
  \end{equation}
  so that the critical values are consistent uniformly over $P \in \calP_{00}^I$.  Under the null hypothesis, $r_n \lambda_j(f_n) = r_n( \lambda_j(f_n) - \lambda_j(f))$, and Lemma~\ref{lem:uniform_weakconv}, Assumption~\ref{A:regular} and~\eqref{cv_consistent} imply
  \begin{equation*}
    \limsup_{n \rightarrow \infty} \sup_{P \in \calP_{00}^I} \left| \prob{ r_n \lambda_j(f_n) > \hat{q}_{\lambda^*}(1-\alpha) } - \prob{ \lambda_{jf}'(\calG_P) > q_{\lambda'}^P(1-\alpha) } \right| = 0.
  \end{equation*}
  Since the second probability inside the absolute value is less than or equal to $\alpha$, we have
  \begin{equation} \label{limit2}
    \limsup_{n \rightarrow \infty} \sup_{P \in \calP_{00}^I} \prob{ r_n \lambda_j(f_n) > \hat{q}_{\lambda^*}(1-\alpha) } \leq \alpha.
  \end{equation}

  Now consider the subset $\calP_0^I \backslash \calP_{00}^I$, which implies that $X_0 = \varnothing$, that is, such that $\psi(f)(x) < 0$ for all $x \in X$.  The definition of $\hat{X}_0$ ensures that under the assumption that $X_0 = \varnothing$,
  \begin{align}
    \prob{ \hat{X}_0 \neq X } &= \prob{ \sup_{x \in X} |\psi(f_n)(x)| \leq b_n } \notag \\
    {} &\leq \prob{ \sup_{x \in X} |r_n(\psi(f_n)(x) - \psi(f)(x))| + \sup_{x \in X} |r_n \psi(f)(x)| \leq r_n b_n } \notag \\
    {} &\rightarrow 0 \label{X_est}
  \end{align}
  uniformly over $P \in \calP_0^I \backslash \calP_{00}^I$, where the inequality follows from the fact that $\sup_{x \in X} |\psi(f)(x)| > 0$ and $b_n \rightarrow 0$ and by Assumptions~\ref{A:unif_convf} and~\ref{A:unif_tight}.  Let $\tilde{\lambda}_{jn}'$ be defined like their asymptotic counterparts $\tilde{\lambda}_{jf}'$ that were defined in Assumption~\ref{A:regular}:
  \begin{equation*}
    \tilde{\lambda}_{2n}'(h) = \sup_{(u, x) \in U_{f_n}(X, a_n)} [h(u, x)]_+, \quad \tilde{\lambda}_{4n}'(h) = \left( \int_X \sup_{u \in U_{f_n}(x, a_n)} [h(u, x)]_+^p \ud m(x) \right)^p.
  \end{equation*}
  A small modification of the proof of the second part of Lemma~\ref{lem:uniform_weakconv} (that is, without estimation of $X_0$) implies that $\tilde{\lambda}_{jn}'(r_n(f_n^* - f_n)) \cw \tilde{\lambda}_{jf}'(\calG_P)$ in probability uniformly over $P \in \calP_0^I$.  Since also~\eqref{X_est} implies that for all $\epsilon > 0$, $\limsup_{n \rightarrow \infty} \sup_{P \in \calP_0^I \backslash \calP_{00}^I} \prob{ \left| \hat{\lambda}_{jn}'(\calG_n^*) - \tilde{\lambda}_{jn}'(\calG_n^*) \right| > \epsilon | Z } = 0$, we know also that $\hat{\lambda}_{jn}'(\calG_n^*) \cw \tilde{\lambda}_{jf}'(\calG_P)$ uniformly over $P \in \calP_0^I \backslash \calP_{00}^I$.  Under Assumption~\ref{A:regular}, there is some $\underline{q} > 0$ such that $\inf_{P \in \calP_0^I \backslash \calP_{00}^I} \inf \left\{ q \in \R : \prob{ \tilde{\lambda}_{jf}'(\calG_P) \leq q } \geq 1 - \alpha \right\} \geq \underline{q}$.  Using this fact with another small modification of the second part of Lemma~\ref{lem:uniform_weakconv} implies that there is a $\underline{q} > 0$ such that
  \begin{equation} \label{bootcv_case1}
    \liminf_{n \rightarrow \infty} \inf_{P \in \calP_0^I \backslash \calP_{00}^I} \prob{ \hat{q}_{\lambda^*}(1 - \alpha) \geq \underline{q} } = 1.
  \end{equation}
  On the other hand, the first part of Lemma~\ref{lem:uniform_weakconv} implies, as in Corollary~\ref{cor:null_asymptotics} in the main text but uniformly over $P \in \calP_0^I \backslash \calP_{00}^I$, that for all $\epsilon > 0$, $\limsup_{n \rightarrow \infty} \sup_{P \in \calP_0^I \backslash \calP_{00}^I} \prob{ r_n \lambda_j(f_n) > \epsilon } = 0$.  Combining this fact with~\eqref{bootcv_case1} implies that
  \begin{equation} \label{limit1}
    \limsup_{n \rightarrow \infty} \sup_{P \in \calP_0^I \backslash \calP_{00}^I} \prob{ r_n \lambda_j(f_n) > \hat{q}_{\lambda^*}(1-\alpha) } = 0.
  \end{equation}
  Then~\eqref{limit2} and~\eqref{limit1} together imply the result.
\end{proof}

\bibliographystyle{econometrica}
\bibliography{bib_bound}

\end{document}